\documentclass[prepint]{elsarticle}
\usepackage{easy-todo}
\usepackage[english]{babel}

\usepackage[T1]{fontenc}
\usepackage[utf8]{inputenc}
\usepackage[table]{xcolor}
\usepackage{afterpage}
\usepackage{tabularx}
\usepackage{geometry}
\usepackage{doi}
\usepackage{natbib}
\usepackage[english]{babel}
\usepackage{csquotes}
\usepackage{fleqn}
\usepackage{amssymb}
\usepackage{amsmath}
\usepackage{relsize}
\usepackage{graphicx}
\usepackage{bm}
\usepackage{bussproofs}
\usepackage{mathtools}
\usepackage{amsthm}
\usepackage{stmaryrd}
\usepackage{xspace}
\usepackage{comment}
\usepackage{cleveref}

\usepackage[inline]{enumitem}

\newcommand{\pill}{\ensuremath{\pi\mathsf{ILL}}\xspace}

\newcommand{\pcll}{\ensuremath{\pi\mathsf{CLL}}\xspace}
\newcommand{\cp}{\ensuremath{\mathsf{CP}}\xspace}
\newcommand{\pcllcp}{\ensuremath{\pcll^\star}\xspace}
\newcommand{\pcllcpp}{\ensuremath{\pcllcp_{+}}\xspace}
\newcommand{\pcllcppp}{\ensuremath{\pcllcp_{+\mkern-8mu +}}\xspace}
\newcommand{\ull}{\ensuremath{\mathsf{ULL}}\xspace}
\newcommand{\pull}{\ensuremath{\pi\mathsf{ULL}}\xspace}
\newcommand{\pullm}{\ensuremath{\pull_\curvearrow}\xspace}
\newcommand{\pullp}{\ensuremath{\pull_{+}}\xspace}
\newcommand{\pullpp}{\ensuremath{\pull_{+\mkern-8mu +}}\xspace}
\newcommand{\vpull}{\ensuremath{\vdash_{\mkern-10mu \scriptscriptstyle \curvearrow \mkern-5mu}}\xspace}
\newcommand{\vdill}{\ensuremath{\vdash_{\mathcal{I}}}\xspace}
\newcommand{\vdcll}{\ensuremath{\vdash_{\mathcal{C}}}\xspace}
\newcommand{\U}{\ensuremath{\mathcal{U}}\xspace}
\newcommand{\C}{\ensuremath{\mathcal{C}}\xspace}
\newcommand{\I}{\ensuremath{\mathcal{I}}\xspace}

\newcommand{\infrule}[1]{\ensuremath{\text{\textsc{#1}}}}

\newcommand{\id}{\ensuremath{\infrule{Id}}\xspace}

\let\onu\nu
\renewcommand{\nu}[1]{\ensuremath{\textstyle {(\onu #1)}}}
\let\oparallel\|
\renewcommand{\|}{\ensuremath{\mathbin{|}}}
\let\ooplus\oplus
\renewcommand{\oplus}{{\ooplus}}
\newcommand{\fwd}[2]{\ensuremath{[#1 \mathbin{\leftrightarrow} #2]}}
\newcommand{\0}{\ensuremath{\bm{0}}}
\newcommand{\send}[2]{\ensuremath{#1 \langle #2 \rangle}}
\newcommand{\recv}[2]{\ensuremath{#1(#2)}}
\newcommand{\serv}[2]{\ensuremath{{!}#1(#2)}}
\newcommand{\subst}[1]{\ensuremath{\{#1\}}}

\newcommand{\rott}[1]{\mathpalette\rot{#1}}
\newcommand{\rot}[2]{\rotatebox[origin=c]{180}{$#1{#2}$}}

\let\obot\bot
\renewcommand{\bot}{\ensuremath{\obot}\xspace}
\let\ocdot\cdot
\renewcommand{\cdot}{\ensuremath{\ocdot}\xspace}
\newcommand{\tensor}{\ensuremath{\mathbin{\otimes}}\xspace}
\newcommand{\lolli}{\ensuremath{\mathbin{\multimap}}\xspace}
\newcommand{\parr}{\ensuremath{\mathbin{\rott{\&}}}\xspace}
\newcommand{\1}{\ensuremath{\bm{1}}}
\newcommand{\sbot}{\ensuremath{\mathsmaller{\bot}}}
\newcommand{\bang}{\ensuremath{{!}}\xspace}
\newcommand{\whynot}{\ensuremath{{?}}\xspace}
\newcommand{\dual}[1]{\ensuremath{#1^\sbot}}

\newcommand{\mright}{\ensuremath{{\curvearrowright}}\xspace}
\newcommand{\mleft}{\ensuremath{{\curvearrowleft}}\xspace}
\newcommand{\curvearrow}{\ensuremath{{\scurvearrow}}\xspace}
\newcommand{\scurvearrow}{
    \mathchoice
    {{\curvearrowleft}\mkern-15.7mu{\curvearrowright}}
    {{\curvearrowleft}\mkern-15.7mu{\curvearrowright}}
    {{\curvearrowleft}\mkern-15.4mu{\curvearrowright}}
    {{\curvearrowleft}\mkern-14.8mu{\curvearrowright}}
}

\newcommand{\fn}{\ensuremath{\text{fn}}}
\newcommand{\bn}{\ensuremath{\text{bn}}}
\newcommand{\red}{\ensuremath{\mathrel{\rightarrow}}}
\DeclareMathOperator{\nred}{\not\rightarrow}
\newcommand{\bnf}{\ensuremath{\textstyle \mathbin{\mathlarger{\mathlarger{|}}}}}

\newcommand{\rdegree}{$r$-degree\xspace}

\newcommand{\tend}{\ensuremath{\mathsf{end}}\xspace}

\newtheorem{theorem}{Theorem}[section]

\newtheorem{corollary}[theorem]{Corollary}
\newtheorem{definition}{Definition}[section]

\newcommand{\etal}{\emph{et al.}\xspace}

\newcommand{\secref}[1]{\S\,\labelcref{#1}}

\usepackage{hyperref}
\hypersetup{
    hypertexnames=true,
}

\def\@nonitbrack#1#2{{\normalfont{#1[#2]}}}

\NewDocumentCommand\ruleLabel{s O{} m}{%
    \@nonitbrack{#2}{\textsc{#3}\IfBooleanT{#1}{-\ensuremath{\ast}}}%
}

\def\inferLabel#1#2{\mbox{\scriptsize\normalfont{\ruleLabel{#1}#2}}}

\RequirePackage{bussproofs}
\RequirePackage{xparse}
\NewDocumentCommand\bussRL{O{}}{%
    \IfValueT{#1}{%
        \RightLabel{\scriptsize #1}
    }
}
\NewDocumentCommand\bussAx{O{} m}{\AxiomC{}\bussRL[#1]\UnaryInfC{\ensuremath{#2}}}
\def\bussAssume#1{\AxiomC{\ensuremath{#1}}}
\NewDocumentCommand\bussUn{O{} m}{\bussRL[#1]\UnaryInfC{\ensuremath{#2}}}
\NewDocumentCommand\bussBin{O{} m}{\bussRL[#1]\BinaryInfC{\ensuremath{#2}}}
\NewDocumentCommand\bussTern{O{} m}{\bussRL[#1]\TrinaryInfC{\ensuremath{#2}}}
\NewDocumentCommand\bussQuat{O{} m}{\bussRL[#1]\QuaternaryInfC{\ensuremath{#2}}}
\def\bussDisplay{\DisplayProof}
\NewDocumentEnvironment{bussproof}{o O{} D(){b}}{%
    \begin{array}[#3]{@{}l@{}}
        \IfValueT{#1}{%
            \inferLabel{#1}{#2}
            \\
        }%
        \bottomAlignProof
}{%
        \bussDisplay%
    \end{array}
}

\def\ih#1{IH\textsubscript{#1}}
\usepackage{mathpartir}
\newcommand{\plscheck}[1]{{#1}}

\def\beginbas{}
\def\endbas{}
\let\beginjp\beginbas
\let\endjp\endbas

\nopreprintlinetrue

\begin{document}

\begin{frontmatter}

\title{Comparing Session Type Systems derived from Linear Logic\tnoteref{t1}}
\tnotetext[t1]{
    Work partially supported by the Dutch Research Council (NWO) under the VIDI Project No.\ 016.Vidi.189.046 (Unifying Correctness for Communicating Software).
}

\author[1,2]{Bas van den Heuvel}
\author[3]{Jorge A. Pérez}

\affiliation[1]{
    organization={HKA Karlsruhe},
    city={Karlsruhe},
    country={Germany}
}
\affiliation[2]{
    organization={University of Freiburg},
    city={Freiburg},
    country={Germany}
}
\affiliation[3]{
    organization={University of Groningen},
    city={Groningen},
    country={The Netherlands}
}


\begin{abstract}
    \emph{Session types} are a typed approach to message-passing concurrency, where types describe sequences of intended exchanges over channels.
   Session type systems have been given strong logical foundations via Curry-Howard correspondences with \emph{linear logic}, a resource-aware logic that naturally captures structured interactions.
    These logical foundations provide an elegant framework to specify and (statically) verify message-passing processes.

    In this paper, we rigorously compare different type systems for concurrency derived from the Curry-Howard correspondence between linear logic and session types.
   We address the main divide between these type systems: the classical and intuitionistic presentations of linear logic.
   Over the years, these presentations have given rise to separate research strands on logical foundations for concurrency; the differences between their derived type systems have only been addressed informally.

    To formally assess these differences, we develop
   \pull, a session type system that encompasses type systems derived from classical and intuitionistic interpretations of linear logic.
   Based on a fragment of Girard's Logic of Unity, \pull provides a basic reference framework:
    we compare existing session type systems by characterizing fragments of \pull that coincide with classical and intuitionistic formulations.
    We analyze the significance of our characterizations by considering  the \emph{locality} principle (enforced by intuitionistic interpretations but not by classical ones) and forms of \emph{process composition} induced by the interpretations.
\end{abstract}

\begin{keyword}
Concurrency, linear logic, $\pi$-calculus, session types.
\end{keyword}

\end{frontmatter}

\section{Introduction}

Establishing the correctness of \emph{message-passing} programs is a central but challenging problem.
Within formal methods for concurrency, \emph{session types}~\cite{conf/concur/Honda93,conf/parle/TakeuchiHK94,conf/esop/HondaVK98} are now a consolidated approach to statically verifying safety and liveness properties of communicating programs.
Session types specify communication over channels as sequences of exchanges.
This way, e.g., the session type $!\mathsf{int}.?\mathsf{bool}.\tend$ describes a channel's intended protocol:
send an integer, receive a boolean, and close the channel.
Due to its simplicity and expressiveness, the $\pi$-calculus~\cite{journal/ic/MilnerPW92,book/SangiorgiW03}---the paradigmatic model of concurrency and interaction---is a widely used specification language for developing session types and establishing their basic theory.
In this paper, we are interested in developing further the \emph{logical foundations} of session type systems for the $\pi$-calculus.

\paragraph{Context: Linear Logic and Session Types}

In a line of work developed by Caires, Pfenning, Wadler, and several others, the theory of session types has been given firm logical foundations in the form of Curry-Howard-style correspondences.
Caires and Pfenning~\cite{conf/concur/CairesP10} discovered a correspondence between session types for the $\pi$-calculus and Girard's \emph{linear logic}~\cite{journal/tcs/Girard87}:
\begin{center}
    \begin{tabular}{rcl}
          session types
        & $\Leftrightarrow$
        & logical propositions
        \\
          $\pi$-calculus processes
        & $\Leftrightarrow$
        & proofs
        \\
          process communication
        & $\Leftrightarrow$
        & cut elimination
    \end{tabular}
\end{center}
Based on this logical bridge, the resulting type systems simultaneously ensure important properties for processes, such as \emph{session fidelity} (processes respect session types), \emph{communication safety} (absence of communication errors), and \emph{progress/deadlock freedom} (processes never reach stuck states).

There are two presentations of linear logic, \emph{classical} and \emph{intuitionistic}, and session type systems derived from linear logic have inherited this dichotomy.
Under Curry-Howard interpretations, typing judgments and typing rules correspond to the logical sequents and inference rules of the underlying linear logic.
In both classical and intuitionistic cases, judgments assign independent session protocols to the channels of a process.
Because  these judgments differ between the presentations of linear logic, there are consequences for their respective interpretations as type systems:
\begin{itemize}
	\item
Caires and Pfenning's correspondence~\cite{conf/concur/CairesP10} uses an \emph{intuitionistic} linear logic where judgments for process are two-sided, with zero or many channels on the left and exactly one channel on the right.
Such judgments have a convenient rely-guarantee reading: the process \emph{relies} on the behaviors described by the channels on the left to \emph{guarantee} the behavior of the one channel on the right.
In this interpretation, each logical connective thus requires two rules: the right rule specifies how to \emph{offer} a behavior, while the left rule specifies how to \emph{use} a behavior.
\item  Wadler's correspondence~\cite{conf/icfp/Wadler12} uses  \emph{classical} linear logic, where judgments are single-sided (all channels appear on the right) and lack a rely-guarantee reading.
In this interpretation, there is only a single rule per logical connective, which makes such type systems direct and economical.
\end{itemize}

Although the differences and relationship between intuitionistic and classical linear logic are relevant and well-known (see, e.g.,~\cite{journal/jlc/Schellinx91,report/ChangCP03,conf/lics/L18}), the differences between their derived session type systems have only been addressed informally.
In particular, Caires \etal~\cite{journal/mscs/CairesPT16} observe that, unlike classical interpretations, an intuitionistic interpretation guarantees \beginbas \emph{locality} for shared channels, i.e., the principle by which received channels cannot become servers (see \Cref{ssec:locality})\endbas.
In the meantime, both interpretations have been extended in multiple, exciting directions (see, e.g.,~\cite{conf/esop/CairesPPT13,chapter/AtkeyLM16,conf/csl/DeYoungCPT12,conf/tgc/ToninhoCP14,conf/icfp/LindleyM16,conf/icfp/BalzerP17,conf/esop/CairesP17,conf/fossacs/DardhaG18,journal/toplas/ToninhoY18,conf/concur/CairesPPT19,conf/popl/KokkeMP19,conf/icfp/QianKB21,conf/oopsla/FruminDHP22,conf/places/HorneP23,conf/aplas/vdHeuvelPNP23}).
This emergence of \emph{families} of type systems (one classical, one intuitionistic) has deepened the original dichotomy and somewhat obscured our understanding of  logical foundations of concurrency as a whole.
This state of affairs calls for a  formal comparison between the session type systems derived from classical and intuitionistic linear logics that goes beyond their superficial differences. To us, this seems an indispensable step in consolidating the logical foundations of message-passing concurrency.


\paragraph{Goal of the Paper}

In this paper, we aim at formally comparing the session type systems derived from interpretations of intuitionistic and classical linear logic.
We are concerned with two families of type systems for the $\pi$-calculus, which naturally induce two classes of typable processes, namely those typable in intuitionistic and classical interpretations, respectively.
These two classes are known to largely overlap with each other;
informal observations suggest that some processes are typable in a classical setting but not in an intuitionistic setting.
We are interested in determining precisely \emph{how} these classes of processes relate to each other.

The key step in our approach is to define a basic framework of reference in which both type systems, intuitionistic and classical, can be objectively compared.
Girard's Logic of Unity (LU)~\cite{journal/apal/Girard93} has a similar goal: to objectively compare classical logic, intuitionistic logic, and (classical and intuitionistic) linear logic in a single system, abstracting away from syntactical differences.
It is then only natural to use LU as a reference for our comparison.
We develop \pull, a session type system for the $\pi$-calculus that subsumes classical and intuitionistic interpretations of linear logic as session types.
{Based on a fragment of LU that suits our purposes (dubbed \ull), the type system \pull
 allows us to define a class of typable processes that  contains processes typable in both the intuitionistic and classical type systems.}

In our approach, the type system \pull provides an effective framework and a yardstick for a formal comparison.
Because its typing rules encompass both classical and intuitionistic formulations, we can readily characterize their two (sub-)classes of typable processes by considering the different typing rules that apply in each case.
Analyzing these different portions of rules allows us to then characterize the differences (in expressiveness/typability) between the intuitionistic and classical type systems.
Moreover, these characterizations allow \beginbas us \endbas to determine how extensions to the type systems might be supported by the intuitionistic and classical interpretations.

\paragraph{Contributions and Outline}

Summing up, this paper makes the following contributions:
\begin{enumerate}[noitemsep]
    \item
    The session type system \pull, which is derived from a concurrent interpretation of \ull (a linear fragment of LU), together with its correctness results (\Cref{thm:subjectcongruence,thm:subjectreduction,thm:progress,thm:df}, \secref{sec:ull});

    \item
        A formal comparison between (i)~the classes of processes typable under interpretations of intuitionistic and classical linear logic and (ii)~those typable in \pull~(\secref{sec:comparison}).
        We prove that
        (1)~\pull \emph{precisely captures} the class of processes typable under the classical interpretation (\Cref{thm:cllull}), and that
        (2)~the class of processes typable under the intuitionistic interpretation is \emph{strictly included} in \pull (\Cref{thm:illsubsetull}).
        Together, these two results confirm the informal observation that the two interpretations induce different classes of processes.

    \item
        An analysis of the significance of our characterizations in terms of two different aspects: the \emph{locality} principle enforced by intuitionistic interpretations, and \emph{process composition} usually limited in type systems derived from linear logic in comparison to session type systems not based on logical foundations  (\secref{sec:discussion}).
        This analysis corroborates and \beginjp gives a formal footing to the informal \endjp     observation made in~\cite{journal/mscs/CairesPT16} concerning locality, and suitably extends it to show that intuitionistic interpretations
    cannot type empty sends on previously received channels.

\end{enumerate}
Finally, we draw some conclusions in~\Cref{sec:conclusion}.

\paragraph{Related Work}

Our work can be seen as an expressiveness study on the classes of processes induced by different type systems.
 Concretely, our results can be seen as providing formal evidence that session type systems derived from classical linear logic are \emph{more expressive} than those based on intuitionistic linear logic, based on the fact that the former are \emph{more permissive} than the latter.

The study of the relative expressiveness of process calculi has a long, fruitful history (see, e.g.~\cite{journal/dc/Gorla10,conf/express/Peters19,thesis/Perez10}).
The aim is to compare two different process languages by defining an \emph{encoding} (a translation up to some correctness criteria) between them or by proving its non-existence.
Even though our aims are similar in spirit to expressiveness studies, at a technical level there are substantial differences, because our focus is on assessing the influence that different type systems have on the same process language---as such, our comparisons do not rely on encodings.

Although most studies on relative expressiveness have addressed \emph{untyped} process languages, some works have studied  the influence of (behavioral) types on the expressiveness of process languages---see, e.g.,~\cite{conf/concur/DemangeonH11,conf/express/GayGR14,journal/ic/DardhaGS17,journal/ic/KouzapasPY19,conf/forte/Perez16}.
Salient examples are the works \cite{journal/jlamp/DardhaP22,DBLP:conf/ppdp/Paulus0N23}, which compare type systems that enforce the deadlock-freedom and termination properties, respectively. 
In particular, a main result in~\cite{journal/jlamp/DardhaP22} is that (classical) linear logic induces a strict subclass of deadlock-free processes with respect to the class of typable processes in non-logically motivated type systems.
Unlike our work, the focus in~\cite{journal/jlamp/DardhaP22} is on different process languages, each with a different type system.
This requires the definition of encodings, on processes but also on types, that abstract away from the syntactic differences between the classes of processes induced by each typed framework.

\medskip
This paper is an extended, revised version of the workshop paper~\cite{conf/places/vdHeuvelP20}.
We present several improvements and developments with respect to that work.
First, here we have significantly improved the correctness results for \pull (\Cref{thm:subjectcongruence,thm:subjectreduction,thm:progress,thm:df})
and considered branching/selection types (not studied in~\cite{conf/places/vdHeuvelP20}).
Also, we have refined the presentation of \pull in~\cite{conf/places/vdHeuvelP20} with a more explicit treatment of duality, presented in~\Cref{ssec:duality}. Moreover, we have broadened the analysis of our comparison results in \Cref{sec:discussion} by  considering extensions to the type system with more expressive forms of parallel composition and restriction.

\section{A Session Type System Based on LU}
\label{sec:ull}

Girard~\cite{journal/apal/Girard93} developed Logic of Unity (LU) to study and compare classical, intuitionistic, and linear logic, without having to ``change the rules of the game by, e.g., passing from one style of sequent to the other.''
The idea is simple: there is one form of sequent with an abundance of inference rules.
The several logics subsumed by LU are then characterized by (possibly overlapping) subsets of those rules.

Clearly, we find ourselves in a similar situation:
we want to compare intuitionistic and classical linear logic as session type systems, by abstracting away from typing judgments and rules of different forms.
To this end, in this section we introduce United Linear Logic (\ull), a logic based on the linear fragment of LU, and present the Curry-Howard interpretation of \ull as a session type system for the $\pi$-calculus, dubbed \pull, following~\cite{conf/concur/CairesP10,conf/icfp/Wadler12} (\secref{ssec:proccal}).
As we will see in \Cref{sec:comparison}, we can then characterize the session type interpretations of intuitionistic and classical linear logic as subsets of the typing rules of \pull.
We also present correctness properties common to (logically motivated) session type systems~(\secref{ssec:procprop}).
Finally, we discuss an alternative presentation of \pull that has a more explicit account of duality (\secref{ssec:duality}).

\subsection{The Process Calculus and Type System}
\label{ssec:proccal}

\paragraph{Propositions/Types}

A session type represents a sequence of exchanges that should be performed along a channel.
Propositions in \ull---interpreted as session types in \pull---are defined as follows:

\begin{definition}\label{def:propositions}
    \emph{\ull propositions/\pull types} are generated by the following grammar:
    \begin{align*}
        A, B ::= \1 \bnf \bot \bnf A \tensor B \bnf A \lolli B \bnf \oplus\{i:A\}_{i \in I} \bnf \&\{i:A\}_{i \in I} \bnf \bang A \bnf \whynot A
    \end{align*}
\end{definition}

\noindent
\Cref{tbl:propsassessiontypes} gives the intuitive reading of the interpretation of propositions as session types.
Some details are worth noting.
\begin{itemize}
    \item
        First, receiving in \pull is defined using the intuitionistic connective \lolli;
        we will see later that \lolli can be used to define the classical connective \parr---also interpreted as receiving---which is not supported by intuitionistic linear logic.
    \item
        Second, \pull supports labeled $n$-ary branching constructs (following, e.g., Caires and Pérez in~\cite{conf/esop/CairesP17}).
        In linear logic, $\oplus$ and $\&$ are binary connectives~\cite{journal/tcs/Girard87,DBLP:conf/tapsoft/GirardL87,journal/apal/Girard93}.
        In~\cite{conf/concur/CairesP10,conf/icfp/Wadler12}, Caires, Pfenning, and Wadler originally interpret them as a choice between a left and a right option.
        However, in session-typed $\pi$-calculi (see, e.g.,~\cite{conf/esop/HondaVK98}) $n$-ary choice is standard and more convenient.
    \item
        \beginjp
        Third, the syntax of formulas considered by Girard~\cite{journal/apal/Girard93} subsumes
            our syntax for \ull by including  universal and existential quantifiers and the units $\top$ and $\0$. Our syntax does not include quantifiers and considers both $\1$ and $\bot$: while the interpretation of intuitionistic linear logic by Caires and Pfenning includes $\1$ (but not $\top$, $\0$, or $\bot$), the interpretation of classical linear logic by Wadler considers quantifiers and the units $\1$ and $\bot$ (but not $\top$ or $\0$).
        \endjp

\end{itemize}

{
    \rowcolors{1}{black!3}{black!5}
    \begin{table}[t]
        \begin{center}
            \begin{tabular}{ll}
                  $\1$ and $\bot$
                & Close the channel
                \\
                  $A \tensor B$
                & Send a channel of type $A$ and continue as $B$
                \\
                  $A \lolli B$
                & Receive a channel of type $A$ and continue as $B$
                \\
                  $\oplus\{i:A_i\}_{i \in I}$
                & Send a label $i \in I$ and continue as $A_i$
                \\
                  $\&\{i:A_i\}_{i \in I}$
                & Receive a label $i \in I$ and continue as $A_i$
                \\
                  $\bang A$
                & Repeatedly provide a service of type $A$
                \\
                  $\whynot A$
                & Connect to a service of type $A$
            \end{tabular}
        \end{center}
        \caption{Interpretation of \ull propositions as session types.}
        \label{tbl:propsassessiontypes}
    \end{table}
}

\paragraph{Duality}

The duality of \ull propositions is defined as follows:

\begin{definition}\label{def:duality}
    \emph{Duality} is a binary relation on propositions/types, denoted $\dual{A}$, defined as follows:
    \begin{align*}
        \dual{\1}
        &:= \bot
        & \dual{(A \tensor B)}
        &:= A \lolli \dual{B}
        & \dual{(\oplus\{i:A_i\}_{i \in I})}
        &:= \&\{i:\dual{A_i}\}_{i \in I}
        & \dual{(\bang A)}
        &:= \whynot \dual{A} \\
        \dual{\bot}
        &:= \1
        & \dual{(A \lolli B)}
        &:= A \tensor \dual{B}
        & \dual{(\&\{i:A_i\}_{i \in I})}
        &:= \oplus\{i:\dual{A_i}\}_{i \in I}
        & \dual{(\whynot A)}
        &:= \bang \dual{A}
    \end{align*}
\end{definition}

\noindent
Duality in \pull reflects the intended reciprocity of protocols between two parties:
when a process on one side of a channel sends, the process on the opposite side must receive, and vice versa.
It is easy to see that duality is an involution:
$\dual{(\dual{A})} = A$.

As mentioned before, we can use duality to define \parr in terms of \lolli: $A \parr B := \dual{A} \lolli B$;
as we will see, the rules for \parr of classical linear logic can be recovered from the rules for \lolli in \ull using duality.
Duality also shows that \tensor and \parr are De Morgan-style duals in classical linear logic:
\begin{align*}
    \dual{(A \tensor B)}
    &= A \lolli \dual{B} = \dual{(\dual{A})} \lolli \dual{B} = \dual{A} \parr \dual{B} \\
    \dual{(A \parr B)}
    &= \dual{(\dual{A} \lolli B)} = \dual{A} \tensor \dual{B}
\end{align*}

\paragraph{Processes}

The discipline of binary session types deals with concurrent \emph{processes} that communicate through point-to-point \emph{channels}.
The $\pi$-calculus~\cite{journal/ic/MilnerPW92,book/SangiorgiW03} offers a rigorous yet expressive framework for defining this discipline and establishing its fundamental properties.
\pull is a type system for $\pi$-calculus processes defined as follows:

\begin{definition}\label{def:processterms}
    \emph{Process terms} are generated by the following grammar:
    \begin{align*}
        P, Q ::=
        &~
        \0 \bnf \nu{x}P \bnf P \| Q \bnf \send{x}{y}.P \bnf \recv{x}{y}.P \bnf x \triangleleft \ell.P \bnf x \triangleright \{i:P\}_{i \in I} \\
        \bnf
        &~
        \serv{x}{y}.P \bnf \fwd{x}{y} \bnf \send{x}{}.P \bnf \recv{x}{}.P
    \end{align*}
\end{definition}

\noindent
Process constructs for inaction $\0$, channel restriction $\nu{x}P$, and parallel composition $P \| Q$ have standard readings.
The same applies to constructs for sending, receiving, selection, branching, and replicated receive prefixes, which are respectively denoted $\send{x}{y}.P$, $\recv{x}{y}.P$, $x \triangleleft \ell.P$, $x \triangleright \{i:P_i\}_{i \in I}$, and $\serv{x}{y}.P$.
Note that all these prefixes are \emph{blocking}, which means that our process calculus implements \emph{synchronous} communication (see, e.g.,~\cite{conf/csl/DeYoungCPT12,conf/ice/vdHeuvelP21,thesis/vdHeuvel24} for interpretations of linear logic as session type systems in the asynchronous setting).
Process $\fwd{x}{y}$ denotes a forwarder that ``fuses'' channels $x$ and $y$;
it is akin to the \emph{link processes} used in encodings of name passing using internal mobility~\cite{journal/tcs/Boreale98}.

We consider also constructs $\send{x}{}.P$ and $\recv{x}{}.P$, which specify the explicit closing of channels:
their synchronization represents the explicit deallocation of linear resources.
We use these constructs to give a \emph{non-silent} interpretation of $\1$ and $\bot$ (see \secref{ssec:types}).

In $\nu{y}{P}$, $\recv{x}{y}.P$, and $\serv{x}{y}.P$ the occurrence of $y$ is binding, with scope $P$\beginbas; we say that $y$ is \emph{bound} in $P$.
Non-bound occurrences of names are considered \emph{free}.
We consistently adopt Barendregt's naming convention: bound names are pairwise distinct and distinct from free names.
The sets of bound and free names of a process $P$ are denoted $\bn(P)$ and $\fn(P)$, respectively. \endbas
We identify processes up to consistent renaming of bound names, writing $\equiv_\alpha$ for this congruence.
We write $P\subst{y/x}$ for the capture-avoiding substitution of free occurrences of $y$ for $x$ in $P$.

\paragraph{Structural Congruence}

The following is an important notation of syntactical equivalence for \pull processes.

\begin{definition}\label{def:structuralcongruence}
    \emph{Structural congruence} is a binary relation on process terms, denoted $P \equiv Q$.
    It is defined as the least congruence on processes (i.e., closed under arbitrary process contexts) that satisfies the axioms in \Cref{f:red} (top).
\end{definition}

{The above formulation of structural congruence internalizes the conditions induced by typing (i.e., proof transformations in \ull), necessary for the correctness properties proven in \Cref{ssec:procprop}.}

\paragraph{Computation}

As is usual for Curry-Howard correspondences, computation is related to cut reduction.
Cuts are used in logic to combine two proofs that contain dual propositions.
As we will see, a cut in session type interpretations of linear logic entails the parallel composition and connection of two processes.
 {Generally, cut reduction transforms cuts into cuts on smaller types.}
In correspondences between linear logic and session types, cut reduction corresponds to \emph{communication}---the notion of computation in the $\pi$-calculus, which expresses internal behavior of processes.

The definition of the reduction relation follows.
Note that, for a simpler presentation, it includes a set of rules for commuting conversions ($\kappa$-rules);
an alternative presentation could treat commuting conversions as a behavioral equivalence.

\begin{figure}[p]
    \begin{mathpar}
        \begin{bussproof}[cutSymm]
            \bussAx{
                \nu{x} ( P \| Q ) \equiv \nu{x} ( Q \| P )
            }
        \end{bussproof}
        \and
        \begin{bussproof}[cutAssocL]
            \bussAssume{
                x \notin \fn(Q)
            }
            \bussAssume{
                y \notin \fn(P)
            }
            \bussBin{
                \nu{x} ( P \| \nu{y} ( Q \| R ) ) \equiv \nu{y} ( Q \| \nu{x} ( P \| R ) )
            }
        \end{bussproof}
        \and
        \begin{bussproof}[cutAssocR]
            \bussAssume{
                x \notin \fn(R)
            }
            \bussAssume{
                y \notin \fn(P)
            }
            \bussBin{
                \nu{x} ( P \| \nu{y} ( Q \| R ) ) \equiv \nu{y} ( \nu{x} ( P \| Q ) \| R )
            }
        \end{bussproof}
    \end{mathpar}
    \begin{mathpar}
        \and
        \begin{bussproof}[$\beta$id]
            \bussAssume{
                x \neq y
            }
            \bussUn{
                \nu{x}(P \| \fwd{x}{y}) \red P\subst{y/x}
            }
        \end{bussproof}
        \and
        \begin{bussproof}[$\beta$close]
            \bussAx{
                \nu{x} ( \send{x}{}.\0 \| \recv{x}{}.Q ) \red Q
            }
        \end{bussproof}
        \and
        \begin{bussproof}[$\beta$send]
            \bussAx{
                \nu{x} ( \nu{y} \send{x}{y} . ( P_1 \| P_2 ) \| \recv{x}{z}.Q ) \red \nu{x} ( P_2 \| \nu{y} ( P_1 \| Q\subst{y/z} ) )
            }
        \end{bussproof}
        \and
        \begin{bussproof}[$\beta$sel]
            \bussAssume{
                j \in I
            }
            \bussUn{
                \nu{x} ( x \triangleleft j.P \| x \triangleright \{i:Q_i\}_{i \in I} ) \red \nu{x} ( P \| Q_j )
            }
        \end{bussproof}
        \and
        \begin{bussproof}[$\beta$serv]
            \bussAx{
                \nu{x} ( \nu{y} \send{x}{y}.P \| \serv{x}{z}.Q ) \red \nu{x} ( \nu{y} ( P \| Q\subst{y/z} ) \| \serv{x}{z}.Q )
            }
        \end{bussproof}
        \and
        \beginbas
        \begin{bussproof}[$\beta$weaken]
            \bussAssume{
                x \notin \fn(P)
            }
            \bussUn{
                \nu{x} ( P \| \serv{x}{z}.Q ) \red P
            }
        \end{bussproof}
        \endbas
    \end{mathpar}
    \begin{mathpar}
        \and
        \begin{bussproof}[$\kappa$close]
            \bussAx{
                \nu{y} ( P \| \recv{x}{}.Q ) \red \recv{x}{}.\nu{y} ( P \| Q )
            }
        \end{bussproof}
        \and
        \begin{bussproof}[$\kappa$sendR]
            \bussAssume{
                y \in \fn(Q_2)
            }
            \bussUn{
                \nu{y} ( P \| \nu{z} \send{x}{z}.( Q_1 \| Q_2 ) ) \red \nu{z} \send{x}{z}.( Q_1 \| \nu{y} ( P \| Q_2 ) )
            }
        \end{bussproof}
        \and
        \begin{bussproof}[$\kappa$sendL]
            \bussAssume{
                y \in \fn(Q_1)
            }
            \bussUn{
                \nu{y} ( P \| \nu{z} \send{x}{z}.( Q_1 \| Q_2 ) ) \red \nu{z} \send{x}{z}.( \nu{y} ( P \| Q_1 ) \| Q_2 )
            }
        \end{bussproof}
        \and
        \begin{bussproof}[$\kappa$recv]
            \bussAx{
                \nu{y} ( P \| \recv{x}{z}.Q ) \red \recv{x}{z}.\nu{y} ( P \| Q )
            }
        \end{bussproof}
        \and
        \begin{bussproof}[$\kappa$sel]
            \bussAx{
                \nu{y} ( P \| x \triangleleft \ell.Q ) \red x \triangleleft \ell.\nu{y} ( P \| Q )
            }
        \end{bussproof}
        \and
        \begin{bussproof}[$\kappa$bra]
            \bussAx{
                \nu{y} ( x \triangleright \{i:P_i\}_{i \in I} \| Q ) \red x \triangleright \{i:\nu{y} ( P_i \| Q )\}_{i \in I}
            }
        \end{bussproof}
    \end{mathpar}
    \caption{Structural congruence and reduction for \pull.\label{f:red}}
\end{figure}

\begin{definition}\label{def:reductionrelation}
    \emph{Reduction} is a binary relation on process terms, denoted $P \red Q$, defined in \Cref{f:red}.
    It is closed under the following rules:
    \begin{mathpar}
        \begin{bussproof}
            \bussAssume{
                Q \red Q'
            }
            \bussUn[\ruleLabel{par}]{
                P \| Q \red P \| Q'
            }
        \end{bussproof}
        \and
        \begin{bussproof}
            \bussAssume{
                P \red Q
            }
            \bussUn[\ruleLabel{res}]{
                \nu{y}P \red \nu{y}Q
            }
        \end{bussproof}
        \and
        \begin{bussproof}
            \bussAssume{
                P \equiv P'
            }
            \bussAssume{
                P' \red Q'
            }
            \bussAssume{
                Q' \equiv Q
            }
            \bussTern[\ruleLabel{sc}]{
                P \red Q
            }
        \end{bussproof}
    \end{mathpar}
    We write $\red_\beta$ to denote reductions that follow from $\beta$-rules.
\end{definition}

\smallskip
\noindent
In this definition, Rule~\ruleLabel{$\beta$id} replaces a channel $x$ with a channel $y$ which $x$ is forwarded to.
Rule~\ruleLabel{$\beta$close} formalizes the explicit channel deallocation mentioned above: the synchronization between an empty send and an empty receive effectively closes channel $x$.
Rule~\ruleLabel{$\beta$send} formalizes the synchronization between a send and a corresponding receive, substituting the sent name for the received name in the continuation of the receive.
Rule~\ruleLabel{$\beta$sel} formalizes the synchronization between a selection and a branch.
Rule~\ruleLabel{$\beta$serv} allows to synchronize a client and a service; a copy of the service $Q$ is spawned with the sent name substituted for the received name while the replicated receive remains.
Rule~\ruleLabel{$\beta$weaken} allows cleaning up a service if it has no clients.

The $\kappa$-rules reflect \emph{commuting conversions} in linear logic, which in the process calculus allow pulling prefixes on free names out of series of consecutive restrictions;
as we will see, we use them such that we can state a general progress result.

Rules~\ruleLabel{par}, \ruleLabel{res}, and~\ruleLabel{sc} close reduction under parallel composition, restriction and structural congruence, respectively (\Cref{def:structuralcongruence}).

\paragraph{Type Checking}

The inference system of \ull is a sequent calculus with sequents of the form $\Gamma; \Delta \vdash \Lambda$.
Here, $\Gamma$, $\Delta$ and $\Lambda$ denote \emph{regions} which collect propositions and obey different structural criteria.
$\Gamma$ is the \emph{unrestricted} region, which contains propositions that can be indefinitely used.
$\Delta$ and $\Lambda$ are the \emph{linear} regions, which contain propositions that must be used exactly once.
We write `$\emptyset$' to denote an empty region.
Also, we extend duality to regions:
$\dual{(\Delta)}$ contains exactly the duals of the propositions in $\Delta$.

The type system for \pull is an extension of \ull's inference system with process and channel name annotations on sequents, such that judgments are of the form $\Gamma; \Delta \vdash P :: \Lambda$.
Then, the regions $\Gamma$, $\Delta$, and $\Lambda$ denote the unrestricted resp.\ linear \emph{contexts} of $P$, consisting of assignments~$x:A$ where $x$ is a channel name and $A$ is a proposition/type.

\begin{figure}[p]
    \begin{mathpar}
        \begin{bussproof}[idR][$\ast$]
            \bussAx{
                \Gamma; x:A \vdash \fwd{x}{y} :: y:A
            }
        \end{bussproof}
        \and
        \begin{bussproof}[idL]
            \bussAx{
                \Gamma; x:A, y:\dual{A} \vdash \fwd{x}{y} :: \emptyset
            }
        \end{bussproof}
        \and
        \begin{bussproof}[$\1$R][$\ast$]
            \bussAx{
                \Gamma; \emptyset \vdash \send{x}{}.\0 :: x:\1
            }
        \end{bussproof}
        \and
        \begin{bussproof}[$\1$L][$\ast$]
            \bussAssume{
                \Gamma; \Delta \vdash P :: \Lambda
            }
            \bussUn{
                \Gamma; \Delta, x:\1 \vdash \recv{x}{}.P :: \Lambda
            }
        \end{bussproof}
        \and
        \begin{bussproof}[$\bot$R]
            \bussAssume{
                \Gamma; \Delta \vdash P :: \Lambda
            }
            \bussUn{
                \Gamma; \Delta \vdash \recv{x}{}.P :: \Lambda, x:\bot
            }
        \end{bussproof}
        \and
        \begin{bussproof}[$\bot$L]
            \bussAx{
                \Gamma; x:\bot \vdash \send{x}{}.\0 :: \emptyset
            }
        \end{bussproof}
        \and
        \begin{bussproof}[$\tensor$R][$\ast$]
            \bussAssume{
                \Gamma; \Delta \vdash P :: \Lambda, y:A
            }
            \bussAssume{
                \Gamma; \Delta' \vdash Q :: \Lambda', x:B
            }
            \bussBin{
                \Gamma; \Delta, \Delta' \vdash \nu{y} \send{x}{y}.(P \| Q) :: \Lambda, \Lambda', x:A \tensor B
            }
        \end{bussproof}
        \and
        \begin{bussproof}[$\tensor$L][$\ast$]
            \bussAssume{
                \Gamma; \Delta, y:A, x:B \vdash P :: \Lambda
            }
            \bussUn{
                \Gamma; \Delta, x:A \tensor B \vdash \recv{x}{y}.P :: \Lambda
            }
        \end{bussproof}
        \and
        \begin{bussproof}[$\parr$R]
            \bussAssume{
                \Gamma; \Delta \vdash P :: \Lambda, y:A, x:B
            }
            \bussUn{
                \Gamma; \Delta \vdash \recv{x}{y}.P :: \Lambda, x:A \parr B
            }
        \end{bussproof}
        \and
        \begin{bussproof}[$\parr$L]
            \bussAssume{
                \Gamma; \Delta, y:A \vdash P :: \Lambda
            }
            \bussAssume{
                \Gamma; \Delta', x:B \vdash Q :: \Lambda'
            }
            \bussBin{
                \Gamma; \Delta, \Delta', x:A \parr B \vdash \nu{y} \send{x}{y}.(P \| Q) :: \Lambda, \Lambda'
            }
        \end{bussproof}
        \and
        \begin{bussproof}[$\lolli$R][$\ast$]
            \bussAssume{
                \Gamma; \Delta, y:A \vdash P :: \Lambda, x:B
            }
            \bussUn{
                \Gamma; \Delta \vdash \recv{x}{y}.P :: \Lambda, x:A \lolli B
            }
        \end{bussproof}
        \and
        \begin{bussproof}[$\lolli$L][$\ast$]
            \bussAssume{
                \Gamma; \Delta \vdash P :: \Lambda, y:A
            }
            \bussAssume{
                \Gamma; \Delta', x:B \vdash Q :: \Lambda'
            }
            \bussBin{
                \Gamma; \Delta, \Delta', x:A \lolli B \vdash \nu{y}
                \send{x}{y}.(P \| Q) :: \Lambda, \Lambda'
            }
        \end{bussproof}
        \and
        \begin{bussproof}[$\oplus$R][$\ast$]
            \bussAssume{
                \Gamma; \Delta \vdash P :: \Lambda, x:A_j
            }
            \bussAssume{
                j \in I
            }
            \bussBin{
                \Gamma; \Delta \vdash x \triangleleft j .P :: \Lambda, x:\oplus\{i:A_i\}_{i \in I}
            }
        \end{bussproof}
        \and
        \begin{bussproof}[$\oplus$L][$\ast$]
            \bussAssume{
                \forall i \in I.~ \Gamma; \Delta, x:A_i \vdash P_i :: \Lambda
            }
            \bussUn{
                \Gamma; \Delta, x:\oplus\{i:A_i\}_{i \in I} \vdash x \triangleright \{i:P_i\}_{i \in I} :: \Lambda
            }
        \end{bussproof}
        \and
        \begin{bussproof}[$\&$R][$\ast$]
            \bussAssume{
                \forall i \in I.~ \Gamma; \Delta \vdash P_i :: \Lambda, x:A_i
            }
            \bussUn{
                \Gamma; \Delta \vdash x \triangleright \{i:P_i\}_{i \in I} :: \Lambda, x:\&\{i:A_i\}_{i \in I}
            }
        \end{bussproof}
        \and
        \begin{bussproof}[$\&$L][$\ast$]
            \bussAssume{
                \Gamma; \Delta, x:A_j \vdash P :: \Lambda
            }
            \bussAssume{
                j \in I
            }
            \bussBin{
                \Gamma; \Delta, x:\&\{i:A_i\}_{i \in I} \vdash x \triangleleft j .P :: \Lambda
            }
        \end{bussproof}
        \and
        \begin{bussproof}[copyR]
            \bussAssume{
                \Gamma, u:A; \Delta \vdash P :: \Lambda, x:\dual{A}
            }
            \bussUn{
                \Gamma, u:A; \Delta \vdash \nu{x} \send{u}{x}.P :: \Lambda
            }
        \end{bussproof}
        \and
        \begin{bussproof}[copyL][$\ast$]
            \bussAssume{
                \Gamma, u:A; \Delta, x:A \vdash P :: \Lambda
            }
            \bussUn{
                \Gamma, u:A; \Delta \vdash \nu{x} \send{u}{x}.P :: \Lambda
            }
        \end{bussproof}
        \and
        \begin{bussproof}[$\bang$R][$\ast$]
            \bussAssume{
                \Gamma; \emptyset \vdash P :: y:A
            }
            \bussUn{
                \Gamma; \emptyset \vdash \serv{x}{y}.P :: x:\bang A
            }
        \end{bussproof}
        \and
        \begin{bussproof}[$\bang$L][$\ast$]
            \bussAssume{
                \Gamma, u:A; \Delta \vdash P :: \Lambda
            }
            \bussUn{
                \Gamma; \Delta, x:\bang A \vdash P\subst{x/u} :: \Lambda
            }
        \end{bussproof}
        \and
        \begin{bussproof}[$\whynot$R]
            \bussAssume{
                \Gamma, u:A; \Delta \vdash P :: \Lambda
            }
            \bussUn{
                \Gamma; \Delta \vdash P\subst{x/u} :: \Lambda, x:\whynot \dual{A}
            }
        \end{bussproof}
        \and
        \begin{bussproof}[$\whynot$L]
            \bussAssume{
                \Gamma; y:A \vdash P :: \emptyset
            }
            \bussUn{
                \Gamma; x:\whynot A \vdash \serv{x}{y}.P :: \emptyset
            }
        \end{bussproof}
    \end{mathpar}

    \caption{The $\pull$ type system.}\label{f:ull-inf}
\end{figure}

\begin{figure}[t]
    \begin{mathpar}
        \begin{bussproof}[cutRL][$\ast$]
            \bussAssume{
                \Gamma; \Delta \vdash P :: \Lambda, x:A
            }
            \bussAssume{
                \Gamma; \Delta', x:A \vdash Q :: \Lambda'
            }
            \bussBin{
                \Gamma; \Delta, \Delta' \vdash \nu{x}(P \| Q) :: \Lambda, \Lambda'
            }
        \end{bussproof}
        \and
        \begin{bussproof}[cutLR][$\ast$]
            \bussAssume{
                \Gamma; \Delta, x:A \vdash P :: \Lambda
            }
            \bussAssume{
                \Gamma; \Delta' \vdash Q :: \Lambda', x:A
            }
            \bussBin{
                \Gamma; \Delta, \Delta' \vdash \nu{x}(P \| Q) :: \Lambda, \Lambda'
            }
        \end{bussproof}
        \and
        \begin{bussproof}[cutRR]
            \bussAssume{
                \Gamma; \Delta \vdash P :: \Lambda, x:A
            }
            \bussAssume{
                \Gamma; \Delta' \vdash Q :: \Lambda', x:\dual{A}
            }
            \bussBin{
                \Gamma; \Delta, \Delta' \vdash \nu{x}(P \| Q) :: \Lambda, \Lambda'
            }
        \end{bussproof}
        \and
        \begin{bussproof}[cutLL]
            \bussAssume{
                \Gamma; \Delta, x:A \vdash P :: \Lambda
            }
            \bussAssume{
                \Gamma; \Delta', x:\dual{A} \vdash Q :: \Lambda'
            }
            \bussBin{
                \Gamma; \Delta, \Delta' \vdash \nu{x}(P \| Q) :: \Lambda, \Lambda'
            }
        \end{bussproof}
        \and
        \begin{bussproof}[cut$\bang$R][$\ast$]
            \bussAssume{
                \Gamma, u:A; \Delta \vdash P :: \Lambda
            }
            \bussAssume{
                \Gamma; \emptyset \vdash Q :: x:A
            }
            \bussBin{
                \Gamma; \Delta \vdash \nu{u} ( P \| \serv{u}{x}.Q ) :: \Lambda
            }
        \end{bussproof}
        \and
        \begin{bussproof}[cut$\bang$L][$\ast$]
            \bussAssume{
                \Gamma; \emptyset \vdash P :: x:A
            }
            \bussAssume{
                \Gamma, u:A; \Delta \vdash Q :: \Lambda
            }
            \bussBin{
                \Gamma; \Delta \vdash \nu{u} ( \serv{u}{x}.P \| Q ) :: \Lambda
            }
        \end{bussproof}
        \and
        \begin{bussproof}[cut$\whynot$R]
            \bussAssume{
                \Gamma, u:A; \Delta \vdash P :: \Lambda
            }
            \bussAssume{
                \Gamma; x:\dual{A} \vdash Q :: \emptyset
            }
            \bussBin{
                \Gamma; \Delta \vdash \nu{u} ( P \| \serv{u}{x}.Q ) :: \Lambda
            }
        \end{bussproof}
        \and
        \begin{bussproof}[cut$\whynot$L]
            \bussAssume{
                \Gamma; x:\dual{A} \vdash P :: \emptyset
            }
            \bussAssume{
                \Gamma, u:A; \Delta \vdash Q :: \Lambda
            }
            \bussBin{
                \Gamma; \Delta \vdash \nu{u} ( \serv{u}{x}.P \| Q ) :: \Lambda
            }
        \end{bussproof}
    \end{mathpar}
    \caption{Cut-rules of the \pull type system.}\label{f:ull-inf-cut}
\end{figure}

\Cref{f:ull-inf,f:ull-inf-cut} give the typing rules of \pull, which are based directly on the linear fragment of LU in~\cite{journal/apal/Girard93} (some rules are marked with $\ast$, which we will refer to later).

We comment on the rules in \Cref{f:ull-inf}.
Axioms~\ruleLabel{idR} and~\ruleLabel{idL} type forwarding constructs which connect two channels of dual type.
Axioms~\ruleLabel{$\1$R} and~\ruleLabel{$\bot$L} type processes that close a session with an empty send after which they become inactive.
Rules~\ruleLabel{$\bot$R} and~\ruleLabel{$\1$L} type processes that close a session with an empty receive.
These four rules define a \emph{non-silent} interpretation for $\1$ and $\bot$ that entails process communication (cf.\ \Cref{def:reductionrelation}), which corresponds to cut reductions in proofs.
(An alternative \emph{silent} interpretation of $\1$ \beginbas and $\bot$ \endbas is discussed in \Cref{ss:silent}.)

The typing system elegantly induces processes under the \emph{internal} mobility discipline, whereby only fresh channels are exchanged in communications~\cite{journal/tcs/Sangiorgi96,journal/tcs/Boreale98}.
Rules~\ruleLabel{$\tensor$R}, \ruleLabel{$\parr$L}, and~\ruleLabel{$\lolli$L} type bound sends, where one process provides the sent channel and another independent process provides the continuation channel.
Rules~\ruleLabel{$\tensor$L}, \ruleLabel{$\parr$R}, and~\ruleLabel{$\lolli$R} type receive-prefixed processes.
Rules~\ruleLabel{$\oplus$R} and~\ruleLabel{$\&$L} type selection and rules~\ruleLabel{$\oplus$L} and~\ruleLabel{$\&$R} type branching.

 {Our interpretation of $\bang A$ and $\whynot A$ as server and client behaviors follows the   interpretation of classical linear logic in~\cite{journal/mscs/CairesPT16}.}
Rules~\ruleLabel{copyR} and~\ruleLabel{copyL} type clients that connect to a service by sending a fresh channel.
Rules~\ruleLabel{$\bang$R} and~\ruleLabel{$\whynot$L} allow the typing of unused services and Rules~\ruleLabel{$\bang$L} and~\ruleLabel{$\whynot$R} allow adding unused services to the unrestricted context.

\plscheck{
    \Cref{f:ull-inf-cut} gives a series of so-called \emph{cut-rules}, that type channel connections.
    The number and shape of these rules is a difference with respect to previous presentations.
    Rules~\ruleLabel{cutRL}, \ruleLabel{cutLR}, \ruleLabel{cutLL}, and~\ruleLabel{cutRR} type pairs of processes that have a channel of dual type in common by composing them in parallel and immediately binding their common channel.
    The four similar rules provide for all possible sides the cut channel can appear on.
}
This way, constructs for restriction and parallel composition are jointly treated.
Rules~\ruleLabel{cut$\bang$R}, \ruleLabel{cut$\bang$L}, \ruleLabel{cut$\whynot$R}, and~\ruleLabel{cut$\whynot$R} type the connection of a service provider with potential clients; a process $Q$ with potential clients has a channel $u$ in its unrestricted context, so the rules create a service from a process $P$ that has a single channel $x$ of type dual to $u$'s type by prefixing it with replicated reception on $u$ (forming $\serv{u}{x}.P$) and then composing this process in parallel with $Q$ and binding $u$.

\plscheck{This abundance of cut-rules is derived from the generality of \ull's judgments and necessary for proving the correctness results presented in \Cref{ssec:procprop}.
In \Cref{ssec:duality} we shall consider an alternative presentation of \pull, which allows moving channels between the left- and right-hand regions of typing judgments using duality; as we will see, in such a presentation we will be able to drastically cut down the number of cut-rules.}

\paragraph{Differences with LU}

{As already mentioned, for the purposes of our formal comparison we consider a linear logic derived from LU~\cite{journal/apal/Girard93} restricted to linear connectives.
The following are notable differences between our linear logic and the linear fragment of LU:}
\begin{itemize}[nosep]
    \item
        we include a Rule~\ruleLabel{idL} which is complementary to~\ruleLabel{idR};
    \item
        we include Rules~\ruleLabel{$\1$L} and~\ruleLabel{$\bot$R} which are lacking in LU;
    \item
        we omit rules for $\top$ and $\0$ (the units of $\&$ and $\oplus$, resp.), which are usually disregarded in session type interpretations of linear logic (an exception is \cite{conf/places/HorneP23}, which uses $\top$ and $\0$ to give a logical account of subtyping);
    \item
        we omit rules that move propositions between the left and right linear regions using duality (in \Cref{ssec:duality} we will return to these rules);
    \item
       {because the order of assumptions in typing rules makes a practical difference, we include additional symmetric cut-rules.}
\end{itemize}

\subsection{Correctness Properties}
\label{ssec:procprop}

Session type systems for the $\pi$-calculus derived from the Curry-Howard correspondence enforce strong correctness properties for processes, which  follow directly from properties of the logic, in particular from cut elimination.
This is no different for \pull.
    Our first result is the \emph{safety} property of \emph{subject congruence and reduction} (\Cref{thm:subjectcongruence,thm:subjectreduction}), which says that typability is consistent across structural congruence and reductions.

\begin{theorem}[Subject Congruence]\label{thm:subjectcongruence}
    If $\Gamma ; \Delta \vdash P :: \Lambda$ and $P \equiv Q$, then $\Gamma ; \Delta \vdash Q :: \Lambda$.
\end{theorem}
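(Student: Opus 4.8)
The plan is to prove the statement by induction on the derivation of $P \equiv Q$, exploiting that $\equiv$ is the \emph{least} congruence satisfying the axioms \ruleLabel{cutSymm}, \ruleLabel{cutAssocL}, and \ruleLabel{cutAssocR} of \Cref{f:red}. It is convenient to establish the stronger, symmetric property $\Phi(P,Q)$ stating that for all $\Gamma,\Delta,\Lambda$ one has $\Gamma;\Delta \vdash P :: \Lambda$ if and only if $\Gamma;\Delta \vdash Q :: \Lambda$; the theorem is then its forward direction. Reflexivity is immediate, while symmetry and transitivity are inherited from the ``if and only if''. For closure of $\equiv$ under process contexts I would argue by compositionality of the typing rules: the conclusion of each rule depends on an immediate subterm only through that subterm's judgment, and $\equiv$ preserves free names, so congruent subterms are interchangeable as premises. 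The only point requiring care is that bare parallel composition and bare restriction are never typable in isolation (no rule concludes with $P \| Q$ or $\nu{x}P$ alone); hence the genuine closure happens at the level of the combined restriction--parallel contexts $\nu{x}([\cdot] \| Q)$ and of the prefix/branch contexts, where I invert the relevant (cut-)rule, replace the affected premise using the induction hypothesis, and re-apply the same rule.

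The crux is verifying $\Phi$ for the three axioms, all of which relate terms built from cuts (restriction composed with parallel composition), so the only applicable typing rules are the cut-rules of \Cref{f:ull-inf-cut}. For \ruleLabel{cutSymm}, namely $\nu{x}(P \| Q) \equiv \nu{x}(Q \| P)$, I would invert the cut-rule used for the left-hand side and observe that swapping the two operands is typed by its symmetric \emph{partner}: \ruleLabel{cutRL} and \ruleLabel{cutLR} are exchanged, \ruleLabel{cut$\bang$R} and \ruleLabel{cut$\bang$L} are exchanged, \ruleLabel{cut$\whynot$R} and \ruleLabel{cut$\whynot$L} are exchanged, while \ruleLabel{cutRR} and \ruleLabel{cutLL} each map to themselves (using $\dual{(\dual{A})} = A$). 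This is precisely why the system includes the seemingly redundant symmetric cut-rules, as anticipated in the discussion of \Cref{f:ull-inf-cut}.

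For the associativity axioms \ruleLabel{cutAssocL} and \ruleLabel{cutAssocR}, I would perform a case analysis on the pair of cut-rules used for the outer and inner cuts of the left-hand side and show how to re-assemble them into the nesting on the right-hand side (the converse direction being symmetric). The side conditions ($x \notin \fn(Q)$, $y \notin \fn(P)$ for \ruleLabel{cutAssocL}; $x \notin \fn(R)$, $y \notin \fn(P)$ for \ruleLabel{cutAssocR}) are exactly what guarantees that the cut channel of each cut occurs only in the two components being connected, so that the linear contexts and the right-hand regions can be split and recombined to match the premises of the cut-rules in the re-associated derivation. I expect this case analysis to be the main obstacle: with eight cut-rules, there are many combinations of outer/inner rules, and for each one must track on which side the two cut channels $x$ and $y$ appear in order to select the correct pair of cut-rules for the conclusion. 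The bookkeeping is routine but voluminous; the side conditions together with preservation of free names under $\equiv$ are exactly what keep each case from failing.
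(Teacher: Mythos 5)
Your proposal is correct and follows essentially the same route as the paper's (sketched) proof: induction on the derivation of $P \equiv Q$, with context closure handled by the induction hypothesis, the associativity axioms treated as routine recombinations of cut premises, and \ruleLabel{cutSymm} resolved by switching to the symmetric partner cut-rule (e.g., \ruleLabel{cutRL} versus \ruleLabel{cutLR}), which is exactly the observation the paper singles out as the interesting case. Your additional remarks---strengthening to a bidirectional statement $\Phi$ and noting that \ruleLabel{cutRR} and \ruleLabel{cutLL} map to themselves via $\dual{(\dual{A})} = A$---are sound refinements of the same argument rather than a different approach.
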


\begin{proof}[Proof (Sketch)]
    By induction on the derivation of the structural congruence.
    The only inductive case is the closure under arbitrary process contexts, which follows from the IH directly.
    The base cases correspond to a rule in \Cref{f:red} (top).
    In each case, we infer the typing of $P$ and $Q$ from the shapes of the processes in the rule, and show that these typing inferences have identical assumptions and conclusion.
    The cases of Rules~\ruleLabel{cutAssocL} and~\ruleLabel{cutAssocR} are straightforward as usual.
    The analysis of the more interesting case of Rule~\ruleLabel{cutSymm} depends on the last-applied cut-rule.
    If the left-hand side uses, e.g., Rule~\ruleLabel{cutLR}, then the right-hand side should use Rule~\ruleLabel{cutRL}.
\end{proof}

\begin{theorem}[Subject Reduction]\label{thm:subjectreduction}
    If $\Gamma; \Delta \vdash P :: \Lambda$ and $P \red Q$, then $\Gamma; \Delta \vdash Q :: \Lambda$.
\end{theorem}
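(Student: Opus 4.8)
The plan is to prove subject reduction by induction on the derivation of $P \red Q$, following \Cref{def:reductionrelation}. The structural rules \ruleLabel{par} and \ruleLabel{res} follow immediately from the induction hypothesis, since the corresponding constructs are typed by rules (cut-rules and the logical rules) that simply propagate typing of subprocesses. The rule \ruleLabel{sc} is where \Cref{thm:subjectcongruence} (subject congruence) does the heavy lifting: given $P \equiv P' \red Q' \equiv Q$ with $P'$ typable, I would invoke subject congruence to transfer typability to $P'$, apply the induction hypothesis to obtain typability of $Q'$, and then apply subject congruence again to conclude typability of $Q$. So the essential content lies entirely in the base cases, namely the $\beta$-rules and the $\kappa$-rules from \Cref{f:red}.

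For each $\beta$-rule, the strategy is inversion followed by reconstruction. I would first invert the typing derivation of the left-hand side to expose which cut-rule (from \Cref{f:ull-inf-cut}) composes the two interacting processes and which logical rules type each side. Because communication corresponds to cut reduction on a principal formula, the two premises will exhibit dual connectives being introduced on the shared channel $x$. For instance, in \ruleLabel{$\beta$send} the redex $\nu{x}(\nu{y}\send{x}{y}.(P_1 \| P_2) \| \recv{x}{z}.Q)$ is typed by a cut on $x$ whose left premise is derived by a send-rule (one of \ruleLabel{$\tensor$R}, \ruleLabel{$\parr$L}, \ruleLabel{$\lolli$L}) and whose right premise is derived by the matching receive-rule (one of \ruleLabel{$\tensor$L}, \ruleLabel{$\parr$R}, \ruleLabel{$\lolli$R}). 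From the subderivations of these premises I would then reassemble a typing derivation for the contractum $\nu{x}(P_2 \| \nu{y}(P_1 \| Q\subst{y/z}))$, typically using two smaller cuts on $x$ and $y$ at the component types $A$ and $B$ rather than on $A \tensor B$. The cases \ruleLabel{$\beta$close}, \ruleLabel{$\beta$sel}, \ruleLabel{$\beta$id}, \ruleLabel{$\beta$serv}, and \ruleLabel{$\beta$weaken} proceed analogously, each matching a left/right pair of logical rules against the appropriate cut; for \ruleLabel{$\beta$id} I would use the forwarder axioms \ruleLabel{idR}/\ruleLabel{idL} and the substitution lemma implicit in $P\subst{y/x}$, and for \ruleLabel{$\beta$serv} and \ruleLabel{$\beta$weaken} I would appeal to the exponential cut-rules \ruleLabel{cut$\bang$R}, \ruleLabel{cut$\bang$L} together with \ruleLabel{$\bang$R}/\ruleLabel{$\whynot$L} and the copy/weakening rules.

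The $\kappa$-rules are handled separately but uniformly: each commuting conversion pushes a prefix on a free name past a restriction, and the corresponding typing transformation simply permutes the order in which a cut-rule and a logical rule are applied, which is always admissible because the cut channel is distinct from the name carrying the commuted prefix. Here I would invert to find the cut on $y$ sitting below (or above) the prefix on $x$, observe that the side conditions (e.g.\ $y \in \fn(Q_2)$ in \ruleLabel{$\kappa$sendR}) guarantee the relevant context split is compatible, and rebuild the derivation with the two rules interchanged. The main obstacle I anticipate is the sheer multiplicity of cut-rules: because \pull carries four linear cut-rules (\ruleLabel{cutRL}, \ruleLabel{cutLR}, \ruleLabel{cutRR}, \ruleLabel{cutLL}) plus four exponential ones, and because each logical connective has both a left and a right rule, each $\beta$-rule potentially branches into several sub-cases according to which side of the judgment the principal formula sits on and which cut-rule was used. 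The bookkeeping is routine but voluminous; I would manage it by treating duality symmetries (as discussed around \Cref{def:duality}) to collapse symmetric sub-cases, and by exploiting that the dual presentation promised in \secref{ssec:duality} reduces the number of genuinely distinct cut configurations.
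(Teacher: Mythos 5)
Your proposal matches the paper's proof: both proceed by induction on the reduction derivation, dispatch \ruleLabel{par}, \ruleLabel{res}, and \ruleLabel{sc} via the induction hypothesis and \Cref{thm:subjectcongruence}, and handle each $\beta$- and $\kappa$-rule by inversion on the cut- and logical rules followed by reconstruction (e.g.\ rebuilding \ruleLabel{$\beta$send} with two smaller cuts on the component types, and using the exponential cut-rules for \ruleLabel{$\beta$serv}). The paper's sketch even singles out the same point you flag---the proliferation of symmetric cut-rules---noting via the \ruleLabel{$\beta$serv} case that this multiplicity is precisely what makes the reconstruction go through.
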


\begin{proof}[Proof (Sketch)]
    By induction on the derivation of the reduction.
    The cases correspond to the rules in \Cref{f:red} (bottom), as well as the closure rules in \Cref{def:reductionrelation}.
    In each case, we infer the typing of $P$ and construct \beginbas one \endbas for $Q$ from the shapes of the processes in the rule, and show that these typing inferences have identical assumptions and conclusions.
    We detail two cases.

    The case of Rule~\ruleLabel{$\beta$serv} serves to illustrate the need for multiple symmetric cut-rules.
    Suppose, for example, that the last-applied rule is Rule~\ruleLabel{cut$\bang$R}, and that the client request is derived using Rule~\ruleLabel{copyL}:
    \[
        \begin{bussproof}
            \bussAssume{
                \Gamma , x:A ; \Delta , y:A \vdash P :: \Lambda
            }
            \bussUn[\ruleLabel{copyL}]{
                \Gamma , x:A ; \Delta \vdash \nu{y} \send{x}{y}.P :: \Lambda
            }
            \bussAssume{
                \Gamma ; \emptyset \vdash Q :: z:A
            }
            \bussBin[\ruleLabel{cut$\bang$R}]{
                \Gamma ; \Delta \vdash \nu{x} ( \nu{y} \send{x}{y}.P \| \serv{x}{z}.Q ) :: \Lambda
            }
        \end{bussproof}
    \]
    As per Rule~\ruleLabel{$\beta$serv}, we need to identically type $\nu{x} ( \nu{y} ( P \| Q\subst{y/z} ) \| \serv{x}{z}.Q )$ using the same assumptions as above.
    Had we only had, e.g., Rules~\ruleLabel{cutRL} and~\ruleLabel{cutLL}, this would not be possible.
    However, with the rules in \Cref{f:ull-inf-cut} it is no problem.
    We first have to add $x:A$ into the persistent regions of the proof of the typing of $Q$, and substitute $y$ for $z$, after which we derive the following:
    \[
        \begin{bussproof}
            \bussAssume{
                \Gamma , x:A ; \Delta , y:A \vdash P :: \Lambda
            }
            \bussAssume{
                \Gamma , x:A ; \emptyset \vdash Q\subst{y/z} :: y:A
            }
            \bussBin[\ruleLabel{cutLR}]{
                \Gamma , x:A ; \Delta \vdash \nu{y} ( P \| Q\subst{y/z} ) :: \Lambda
            }
            \bussAssume{
                \Gamma ; \emptyset \vdash Q :: z:A
            }
            \bussBin[\ruleLabel{cut$\bang$R}]{
                \Gamma ; \Delta \vdash \nu{x} ( \nu{y} ( P \| Q\subst{y/z} ) \| \serv{x}{z}.Q ) :: \Lambda
            }
        \end{bussproof}
    \]

    As another representative case, we consider Rule~\ruleLabel{$\beta$send}.
    We have $P = \nu{x} ( \nu{y}\send{x}{y}.(R \| S) \| x(z).T) \red \nu{x} ( S \| \nu{y} ( R \| T\subst{y/z} ) ) = Q$.
    There are multiple ways to type $P$, depending on the cut-rule applied.
    Here, we give the example of Rule~\ruleLabel{cutR}.
    The proof of $\Gamma; \Delta \vdash P :: \Lambda$ looks as follows:
    \[
        \begin{bussproof}
            \bussAssume{
                \Gamma; \Delta_1 \vdash R :: \Lambda_1, y:A
            }
            \bussAssume{
                \Gamma; \Delta_2 \vdash S :: \Lambda_2, x:B
            }
            \bussBin[\ruleLabel{$\tensor$R}]{
                \Gamma; \Delta_1, \Delta_2 \vdash \nu{y}\send{x}{y}.(R \| S) :: \Lambda_1, \Lambda_2, x:A \tensor B
            }
            \bussAssume{
                \Gamma; \Delta_3, z:A, x:B \vdash T :: \Lambda_3
            }
            \bussUn[\ruleLabel{$\tensor$L}]{
                \Gamma; \Delta_3, x:A \tensor B \vdash x(z).T :: \Lambda_3
            }
            \bussBin[\ruleLabel{cutR}]{
                \Gamma; \underbrace{\Delta_1, \Delta_2, \Delta_3}_{\Delta} \vdash \underbrace{\nu{x}(\nu{y}\send{x}{y}(R \| S) \| T)}_{P} :: \underbrace{\Lambda_1, \Lambda_2, \Lambda_3}_{\Lambda}
            }
        \end{bussproof}
    \]
    We can then construct a proof of $\Gamma; \Delta \vdash Q :: \Lambda$ using the assumptions in the above proof.
    \[
        \begin{bussproof}
            \bussAssume{
                \Gamma; \Delta_2 \vdash S :: \Lambda_2, x:B
            }
            \bussAssume{
                \Gamma; \Delta_1 \vdash R :: \Lambda_1, y:A
            }
            \bussAssume{
                \Gamma; \Delta_3, y:A, x:B \vdash T \subst{y/z} :: \Lambda_3
            }
            \bussBin[\ruleLabel{cutR}]{
                \Gamma; \Delta_1, \Delta_3, x:B \vdash \nu{y}(R \| T \subst{y/z} ) :: \Lambda_1, \Lambda_3
            }
            \bussBin[\ruleLabel{cutR}]{
                \Gamma; \Delta \vdash \underbrace{\nu{x}( S \| \nu{y}(R \| T \subst{y/z}) )}_{Q} :: \Lambda
            }
        \end{bussproof}
    \]
\end{proof}

Our next result is \emph{progress},
the \beginbas \emph{safety} \endbas property that says that the specific form of composition and restriction in \pull following from the cut-rule enables communication, and that processes never get stuck waiting for each other:

\begin{theorem}[Progress]\label{thm:progress}
    If $\Gamma; \Delta \vdash P :: \Lambda$ and $P \equiv \nu{x}(Q \| R)$, then there exists $P'$ such that $P \red P'$.
\end{theorem}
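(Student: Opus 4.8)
The plan is to invoke Subject Congruence (\Cref{thm:subjectcongruence}) to replace $P$ by the cut $\nu{x}(Q \| R)$ and then to locate a redex inside it by a structural analysis of the two components. From $P \equiv \nu{x}(Q \| R)$ and $\Gamma; \Delta \vdash P :: \Lambda$, \Cref{thm:subjectcongruence} gives $\Gamma; \Delta \vdash \nu{x}(Q \| R) :: \Lambda$, and any reduction of $\nu{x}(Q \| R)$ lifts back to a reduction of $P$ by the closure Rule~\ruleLabel{sc}. The derivation of $\Gamma; \Delta \vdash \nu{x}(Q \| R) :: \Lambda$ must end in one of the cut-rules of \Cref{f:ull-inf-cut} acting on $x$, which fixes how $x$ occurs in each component: as dual offered/used behaviour for the linear cuts \ruleLabel{cutRL}, \ruleLabel{cutLR}, \ruleLabel{cutLL}, and \ruleLabel{cutRR}, and as a server/client connection for the exponential cuts \ruleLabel{cut$\bang$R}, \ruleLabel{cut$\bang$L}, \ruleLabel{cut$\whynot$R}, and \ruleLabel{cut$\whynot$L}. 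I would then argue by well-founded induction on the size of $P$, with a case analysis driven by the rules that type $Q$ and $R$.

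First I would record a canonical-forms observation: reading off the typing rules, and peeling away the silent rules \ruleLabel{$\bang$L} and \ruleLabel{$\whynot$R} (which only shuttle a name between the linear and unrestricted regions without introducing a prefix), every typed process is either a forwarder, a top-level cut $\nu{z}(\cdot \| \cdot)$, or a process prefixed on one of its free names. This yields two easy cases. If either component is itself a cut, the induction hypothesis provides a reduction of that component, which Rules~\ruleLabel{par}, \ruleLabel{res}, and~\ruleLabel{sc} propagate to the whole process. If either component is a forwarder, then---since a forwarder types exactly its two channels and the cut-rule forces the cut channel to be among them---it is a forwarder on $x$, and Rule~\ruleLabel{$\beta$id} fires.

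It remains to treat the case where both components are prefixed on free names, where I would distinguish the subject of each head prefix. If some component acts on a free name other than $x$, the corresponding commuting conversion (a $\kappa$-rule) applies and yields a reduction, pulling that prefix outside the restriction on $x$. If instead both head prefixes act on $x$ itself, then the duality of the types that the cut-rule assigns to $x$ forces the two actions to be complementary---a bound output against an input, a selection against a branching, or an empty send against an empty receive---so the matching rule \ruleLabel{$\beta$send}, \ruleLabel{$\beta$sel}, or~\ruleLabel{$\beta$close} fires. For the exponential cuts (and for a linear cut whose cut type is $\bang A$ or $\whynot A$), the component produced by the cut-rule is a server already at the head on $x$, and the reduction is dictated by the client side: if the client issues a request on $x$ via a copy (Rule~\ruleLabel{copyR} or~\ruleLabel{copyL}), Rule~\ruleLabel{$\beta$serv} fires; if it does not mention $x$, Rule~\ruleLabel{$\beta$weaken} fires; otherwise the client exposes a redex on another free name or reduces as a nested cut, exactly as above.

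The main obstacle is the exponential layer together with the commuting conversions---indeed this is why the reduction relation of \Cref{f:red} builds in the $\kappa$-rules. The heart of the proof is a readiness property: every typed component must, up to structural congruence, either act on the cut channel $x$ (so that duality delivers a matching $\beta$-redex), act on a free name that a commuting conversion can pull outside the restriction on $x$, reduce internally as a nested cut, or be a forwarder. Establishing readiness for the server/client constructs---where the action on $x$ is a replicated input answered by a copy, or is absorbed by Rule~\ruleLabel{$\beta$weaken}---and verifying that the available $\kappa$-rules suffice to commute out every head prefix that does not act on $x$, is the delicate part; once readiness is in hand, choosing the concrete $\beta$- or $\kappa$-rule is routine.
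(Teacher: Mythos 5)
Your proposal is correct and follows essentially the same route as the paper's proof: apply subject congruence, case-split on whether the final cut is linear or exponential, fire a $\beta$-rule when both subderivations act on the cut channel (or \ruleLabel{$\beta$serv}/\ruleLabel{$\beta$weaken} in the exponential case), recurse when a component is itself a cut, and otherwise commute a prefix on another free name out with a $\kappa$-rule. Your explicit treatment of the forwarder case and of the silent rules \ruleLabel{$\bang$L}/\ruleLabel{$\whynot$R} fills in details the paper's sketch leaves implicit, but the underlying argument is the same.
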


\begin{proof}
    By induction on the size of the proof of $\Gamma; \Delta \vdash P :: \Lambda$.
    By \Cref{thm:subjectcongruence}, $\Gamma ; \Delta \vdash \nu{x} ( Q \| R ) :: \Lambda$.
    By assumption, the last inference of the derivation thereof is either a linear cut or an unrestricted cut.

    \textbf{(Case linear cut)}
    The last-applied rule can be~\ruleLabel{cutR} or~\ruleLabel{cutL}.
    W.l.o.g.\ assume the former.
    By inversion of \ruleLabel{cutR}, we have a proof $\pi_Q$ of $\Gamma; \Delta_Q \vdash Q :: \Lambda_Q, x:A$ and a proof $\pi_R$ of $\Gamma; \Delta_R, x:A \vdash R :: \Lambda_R$ where $\Delta_Q, \Delta_R = \Delta$ and $\Lambda_Q, \Lambda_R = \Lambda$.

    If the last-applied rules in $\pi_Q$ and $\pi_R$ are both on $x$, then we apply a $\beta$-reduction depending on~$A$.
    For example, assume $A = B \tensor C$.
    Then the last-applied rules in $\pi_Q$ and $\pi_R$ are \ruleLabel{$\tensor$R} and \ruleLabel{$\tensor$L}, respectively.
    Hence, by Rule~\ruleLabel{$\beta$send}, $P \equiv \nu{x}(\nu{y}\send{x}{y}.(Q_y \| Q_x) \| \recv{x}{y}.R') \red \nu{x}(Q_x \| \nu{y}(Q_y \| R'))$.

    Otherwise, w.l.o.g.\ assume the last-applied rule not on $x$ is in $\pi_Q$.
    Then, if $Q$ is a cut, by the induction hypothesis, $Q \red Q'$, and hence $P \equiv \nu{x}(Q \| R) \red \nu{x}(Q' \| R)$.
    Otherwise, $Q$ is prefixed by an action on some free channel $y$ which is not a free channel of $R$.
    Hence, we apply a $\kappa$-conversion depending on the type of the channel the last-applied rule in $\pi_Q$ works on.
    For example, if this rule introduces $y:B \tensor C$ on the right, then, by Rule~\ruleLabel{$\kappa{\tensor}$}, $P \equiv \nu{x}(\nu{z}\send{y}{z}.(Q_y \| Q_{z,x}) \| R_x) \red \nu{z}\send{y}{z}.(\nu{x}(Q_{z,x} \| R_x) \| Q_y)$.

    \textbf{(Case unrestricted cut)}
    The last-applied rule can be \ruleLabel{cut$\bang$} or \ruleLabel{cut$\whynot$}.
    W.l.o.g.\ assume the former.
    Then $Q \equiv \serv{x}{y}.Q'$, and, by inversion of this rule, we have a proof $\pi_{Q'}$ of $\Gamma; \emptyset \vdash Q' :: y:A$ and a proof $\pi_R$ of $\Gamma, x:A; \Delta \vdash R :: \Lambda$.
    If $x \notin \fn(R)$, then, by Rule~\ruleLabel{$\beta$weaken}, $P \equiv \nu{x}( R \| \serv{x}{y}.Q' ) \red R$.
    Otherwise, the next step depends on the last-applied rule in $\pi_R$.

    If the last-applied rule in $\pi_R$ is on $x$, then it must be \ruleLabel{copyR} or \ruleLabel{copyL}.
    W.l.o.g.\ assume the former.
    Then $R \equiv \nu{y}\send{x}{y}.R'$, so, by Rule~\ruleLabel{$\beta$serv}, $P \equiv \nu{x}(\nu{y}\send{x}{y}.R' \| \serv{x}{y}.Q') \red \nu{x}(\nu{y}(Q' \| R') \| \serv{x}{y}.Q')$.

    Otherwise, the proof proceeds as in the last part of the case of linear cut.
\end{proof}

\beginbas
It follows from progress that \emph{closed} processes are \emph{deadlock free}.
In a closed process, all channels are connected, except for one channel.
This remaining channel, which the process simply closes, can be seen as a \emph{barb} that signals successful termination.
As such, deadlock freedom ensures that closed processes that cannot reduce are successfully terminated and thus not stuck in a deadlock.
Let us write $P \nred$ to denote that there is no $P'$ such that $P \red P'$.

\begin{theorem}[Deadlock Freedom]
    \label{thm:df}
    Suppose $\emptyset ; \emptyset \vdash P :: z : \1$ or $\emptyset ; z : \bot \vdash P :: \emptyset$.
    If $P \nred$, then $P \equiv \send{z}{} . \0$.
\end{theorem}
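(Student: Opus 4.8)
The plan is to combine \Cref{thm:progress} with an inversion on the typing derivation. I fix a derivation of the given judgment---either $\emptyset;\emptyset \vdash P :: z:\1$ or $\emptyset; z:\bot \vdash P :: \emptyset$---and proceed by case analysis on its last-applied rule. The goal is to show that, under the hypothesis $P \nred$, the only rule that can conclude such a judgment is the closing axiom \ruleLabel{$\1$R} (resp.\ \ruleLabel{$\bot$L}), which fixes $P$ syntactically as $\send{z}{}.\0$ and hence gives $P \equiv \send{z}{}.\0$ by reflexivity of $\equiv$.

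First I would dispatch the cut case. If the last-applied rule is any of the cut-rules of \Cref{f:ull-inf-cut}, then $P$ is syntactically a restriction over a parallel composition, i.e.\ $P = \nu{x}(Q \| R)$ for suitable $Q, R$ (for the exponential cut-rules one of the two components is a replicated input). In particular $P \equiv \nu{x}(Q \| R)$, so \Cref{thm:progress} yields some $P'$ with $P \red P'$, contradicting $P \nred$. Hence no cut can be the last rule, and only the logical rules and axioms of \Cref{f:ull-inf} remain.

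The core of the argument is then a shape analysis of these remaining rules against the tightly constrained contexts of the judgment. Take the first case, $\emptyset;\emptyset \vdash P :: z:\1$: both the unrestricted region and the left linear region are empty, while the right linear region is the single assignment $z:\1$. Every rule whose conclusion populates the left linear region (the left rules for the multiplicative, additive, and exponential connectives, together with \ruleLabel{$\1$L}, \ruleLabel{$\bang$L}, \ruleLabel{$\whynot$L}, and \ruleLabel{idL}) is impossible, as it would place a formula in the empty left region; similarly, the copy rules \ruleLabel{copyR} and \ruleLabel{copyL} are impossible since they require a non-empty unrestricted region. Among the right rules, the sole right-hand assignment is $z:\1$, so the principal type must be $\1$; this discards \ruleLabel{idR} and every right rule with principal connective other than $\1$ (those for $\tensor$, $\parr$, $\lolli$, $\oplus$, $\&$, $\bang$, $\whynot$, and $\bot$), with \ruleLabel{$\whynot$R} additionally ruled out because it would append a second, $\whynot$-typed assignment to the right region. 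The only surviving possibility is \ruleLabel{$\1$R}, forcing $P = \send{z}{}.\0$. The second case, $\emptyset; z:\bot \vdash P :: \emptyset$, is entirely symmetric: the empty right region excludes all right rules and \ruleLabel{idR}, the empty unrestricted region excludes the copy rules, and the single left assignment $z:\bot$ forces the principal type to be $\bot$, leaving \ruleLabel{$\bot$L} as the only option and again yielding $P = \send{z}{}.\0$.

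I expect the main---though modest---obstacle to be making this inversion genuinely exhaustive: one must run through every rule of \Cref{f:ull-inf} and discard it, either because its conclusion necessarily occupies a region that is empty here, or because the unique available assignment forces a principal type incompatible with the rule. The reward is that the two cases are mirror images, so once the $\1$-case is settled the $\bot$-case follows by the same bookkeeping with the roles of the left and right regions interchanged.
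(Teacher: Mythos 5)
Your proposal is correct and follows essentially the same route as the paper's proof: both arguments reduce to the observation that the only non-axiom way to conclude such a judgment is a cut (which, by \Cref{thm:progress}, would contradict $P \nred$), leaving \ruleLabel{$\1$R} resp.\ \ruleLabel{$\bot$L} and hence $P \equiv \send{z}{}.\0$. The paper merely compresses your explicit rule-by-rule inversion into the single assertion that ``the only well-typed possibility is $P \equiv \nu{x}(Q \| R)$'' and phrases the argument as a proof by contradiction rather than a direct case analysis.
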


\begin{proof}
    Towards contradiction, assume $P \not\equiv \send{z}{} . \0$.
    The only well-typed possibility is that $P \equiv \nu{x} ( Q \| R )$.
    By \Cref{thm:progress}, then $P \red P'$.
    This contradicts the assumption that $P \nred$, proving the thesis.
\end{proof}

The combination of \Cref{thm:subjectreduction,thm:progress} allows us to adapt existing techniques (e.g., the logical relations of~\cite{conf/esop/PerezCPT12}) to prove \emph{termination}: the \emph{liveness} property that says that well-typed processes eventually stop reducing.
\endbas

\subsection{On Duality}
\label{ssec:duality}

It may seem that there is an extensive redundancy in the system of rules in \Cref{f:ull-inf}, caused by the two sidedness of \pull's judgments:
every connective can be inferred on either side of judgments.
For example, Rules~\ruleLabel{$\tensor$R}, \ruleLabel{$\parr$L}, and~\ruleLabel{$\lolli$L} all type the send of a channel;
which rule to use depends on the side of the judgment the involved channels are on.
However, there is no actual redundancy, for if we were to omit rules for, e.g., $\parr$ and $\lolli$, it would be impossible to type a send on a previously received channel.

This abundance of typing rules in \pull can be explained by its full support for duality:
for every rule inferring a connective on one side of a judgment, there is \beginbas a \endbas rule for inferring the connective's dual on the other side of a judgment.
To make this duality explicit, we define an alternative type system by restricting \pull's rules to a specific fragment and adding LU's rules for \emph{moving propositions} between sides of judgments:

\begin{definition}\label{def:pullm}
    The type system \pullm, with judgments $\Gamma; \Delta \vpull P :: \Lambda$, is defined on the process calculus as defined in \Cref{ssec:proccal}.
    Its rules are the $\ast$-marked rules in \Cref{f:ull-inf} plus the following rules:
    \begin{mathpar}
        \begin{bussproof}[$\mleft$]
            \bussAssume{
                \Gamma; \Delta \vpull P :: \Lambda, x:A
            }
            \bussUn{
                \Gamma; \Delta, x:\dual{A} \vpull P :: \Lambda
            }
        \end{bussproof}
        \and
        \begin{bussproof}[$\mright$]
            \bussAssume{
                \Gamma; \Delta, x:A \vpull P :: \Lambda
            }
            \bussUn{
                \Gamma; \Delta \vpull P :: \Lambda, x:\dual{A}
            }
        \end{bussproof}
    \end{mathpar}
\end{definition}

\smallskip
\noindent
Fortunately, in the presence of these two rules, a number of other rules become truly redundant:
all rules in \Cref{f:ull-inf} not marked with $\ast$ are admissible or derivable in \pullm.
Dually, Rules~\ruleLabel{$\mleft$} and~\ruleLabel{$\mright$} are admissible in vanilla \pull.
The following theorem formalizes these facts:

\begin{theorem}\label{thm:madmissible}
    ~
    \begin{enumerate}
        \item
            The rules in \Cref{f:ull-inf,f:ull-inf-cut} not marked with $\ast$ are admissible or derivable in \pullm, and
        \item
            Rules~\ruleLabel{$\mleft$} and~\ruleLabel{$\mright$}, as given in \Cref{def:pullm}, are admissible in \pull.
    \end{enumerate}
\end{theorem}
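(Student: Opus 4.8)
The plan is to prove the two halves of \Cref{thm:madmissible} separately, each by an induction on the structure of derivations, exploiting the involutive character of duality (\Cref{def:duality}) throughout.

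For part~(1), I would show that each non-$\ast$ rule---namely \ruleLabel{idL}, \ruleLabel{$\bot$R}, \ruleLabel{$\bot$L}, \ruleLabel{$\parr$R}, \ruleLabel{$\parr$L}, \ruleLabel{copyR}, \ruleLabel{$\whynot$R}, \ruleLabel{$\whynot$L}, and the cut-rules \ruleLabel{cutRR}, \ruleLabel{cutLL}, \ruleLabel{cut$\bang$R}, \ruleLabel{cut$\bang$L}, \ruleLabel{cut$\whynot$R}, \ruleLabel{cut$\whynot$L}---can be reconstructed in \pullm using only the $\ast$-marked rules together with \ruleLabel{$\mleft$} and \ruleLabel{$\mright$}. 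The strategy for each is uniform: use \ruleLabel{$\mleft$}/\ruleLabel{$\mright$} to shuffle the relevant channel across the turnstile so that the type appears on the side where an $\ast$-marked rule introduces it (or its dual). For instance, \ruleLabel{$\bot$R} introduces $x:\bot$ on the right; since $\dual{\1} = \bot$, I would derive it from \ruleLabel{$\1$L} applied to $x:\1$ on the left, followed by \ruleLabel{$\mright$} to move it right as $x:\dual{\1}=x:\bot$. Similarly, \ruleLabel{$\parr$R} and \ruleLabel{$\parr$L} follow from \ruleLabel{$\lolli$R}/\ruleLabel{$\lolli$L} via the definitional identity $A \parr B = \dual{A}\lolli B$ combined with the move rules, and the dual cut-rules collapse to \ruleLabel{cutRL}/\ruleLabel{cutLR} after moving one cut-channel to the opposite region. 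The key point to verify in each case is that the process term attached to the conclusion is identical, which holds because \ruleLabel{$\mleft$} and \ruleLabel{$\mright$} leave the process untouched.

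For part~(2), I would prove that \ruleLabel{$\mleft$} and \ruleLabel{$\mright$} are admissible in vanilla \pull by induction on the derivation of the premise, pushing the move through the last-applied rule. Concretely, to show \ruleLabel{$\mright$} admissible, given a \pull-derivation of $\Gamma;\Delta,x:A \vdash P :: \Lambda$ I would produce one of $\Gamma;\Delta \vdash P :: \Lambda, x:\dual{A}$, and symmetrically for \ruleLabel{$\mleft$}. Since \pull already contains left and right rules for every connective and all four directed linear cut-rules, the inductive step works by replacing the rule acting on $x$ (or threading the move through rules not touching $x$) with its dual counterpart: e.g.\ if $x:A\tensor B$ was introduced on the left by \ruleLabel{$\tensor$L}, the IH lets me move the components and re-introduce $x:\dual{(A\tensor B)} = x:A\lolli\dual{B}$ on the right by \ruleLabel{$\lolli$R}. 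The involutivity $\dual{(\dual{A})}=A$ guarantees these dualized introductions match up correctly.

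The main obstacle I anticipate is the interaction between the move rules and the exponentials. The rules \ruleLabel{$\bang$R}, \ruleLabel{$\whynot$L}, \ruleLabel{copyR}, \ruleLabel{copyL} impose structural constraints (empty linear contexts, or a named channel in the unrestricted region $\Gamma$), so moving an exponential channel between linear regions is delicate: a $\whynot\dual{A}$ on the right corresponds to a $\bang A$ on the left, but the typing of servers/clients routes through $\Gamma$ rather than a simple left/right flip. Handling \ruleLabel{$\whynot$R}/\ruleLabel{$\whynot$L} in part~(1) and the admissibility of the move rules past exponential cuts in part~(2) will therefore require care to ensure the unrestricted context is managed consistently; the distinction between \emph{admissible} and \emph{derivable} in the statement likely reflects exactly these cases, where some rules are only admissible (requiring a derivation transformation) rather than directly derivable by composing inferences.
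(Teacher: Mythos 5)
Your proposal is correct and follows essentially the same route as the paper: part~(1) by reconstructing each non-$\ast$ rule from its $\ast$-marked dual counterpart plus \ruleLabel{$\mleft$}/\ruleLabel{$\mright$}, and part~(2) by locating the rule that introduces the moved channel and replacing it with the dual introduction on the other side, relying on involutivity of duality. The only (inessential) organizational difference is in part~(2): you push the move through by a single induction on the derivation of the premise, whereas the paper first eliminates move rules by an outer induction, then traverses up to the introducing rule and runs an inner induction on the size of the moved type to handle the components; your worry about the exponentials turns out to be unproblematic because \ruleLabel{$\bang$R}/\ruleLabel{$\whynot$L} and \ruleLabel{copyR}/\ruleLabel{copyL} impose matching structural constraints.
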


\begin{proof}
    \emph{(Item 1)}
    Suppose given a proof of $\Gamma; \Delta \vpull P :: \Lambda$.
    By applying induction on the structure of this proof we show that any applications of non-$\ast$-marked rules can be replaced with applications of $\ast$-marked rules in combination with uses of Rules~\ruleLabel{$\mleft$} or~\ruleLabel{$\mright$}.
    We discuss every possible last-applied rule, omitting cases of $\ast$-marked rules as they follow directly from the induction hypothesis.

    \begin{align*}
        & \bullet \ruleLabel{idL}
        && 1\quad
        && \Gamma; x:A, y:\dual{A} \vpull \fwd{x}{y} :: \emptyset
        \tag{assumption} \\
        & && 2\quad
        && \Gamma; x:A \vpull \fwd{x}{y} :: y:A
        \tag{\ruleLabel{idR}} \\
        & && 3\quad
        && \Gamma; x:A, y:\dual{A} \vpull \fwd{x}{y} :: \emptyset
        \tag{\ruleLabel{$\mleft$} on 2}
        \displaybreak[1] \\[5pt]
        & \bullet \ruleLabel{$\bot$R}
        && 1\quad
        && \Gamma; \Delta \vpull \recv{x}{}.P :: \Lambda, x:\bot
        \tag{assumption} \\
        & && 2\quad
        && \Gamma; \Delta \vpull P :: \Lambda
        \tag{inversion on 1} \\
        & && 3\quad
        && \Gamma; \Delta \vpull P :: \Lambda ~\text{with only $\ast$ rules}
        \tag{IH on 2} \\
        & && 4\quad
        && \Gamma; \Delta, x:\1 \vpull P :: \Lambda
        \tag{\ruleLabel{$\1$L} on 3} \\
        & && 5\quad
        && \Gamma; \Delta \vpull P :: \Lambda, x:\bot
        \tag{\ruleLabel{$\mright$} on 4}
        \displaybreak[1] \\[5pt]
        & \bullet \ruleLabel{$\bot$L}
        && 1\quad
        && \Gamma; x:\bot \vpull \send{x}{}.\0 :: \emptyset
        \tag{assumption} \\
        & && 2\quad
        && \Gamma; \emptyset \vpull \send{x}{}.\0 :: x:\1
        \tag{\ruleLabel{$\1$R}} \\
        & && 3\quad
        && \Gamma; x:\bot \vpull \send{x}{}.\0 :: \emptyset
        \tag{\ruleLabel{$\mleft$} on 2}
        \displaybreak[1] \\[5pt]
        & \bullet \ruleLabel{$\parr$R}
        && 1\quad
        && \Gamma; \Delta \vpull \recv{x}{y}.P :: \Lambda, x:A \parr B
        \tag{assumption} \\
        & && 2\quad
        && \Gamma; \Delta \vpull P :: \Lambda, y:A, x:B
        \tag{inversion on 1} \\
        & && 3\quad
        && \Gamma; \Delta \vpull P :: \Lambda, y:A, x:B ~\text{with only $\ast$ rules}
        \tag{IH on 2} \\
        & && 4\quad
        && \Gamma; \Delta, y:\dual{A}, x:\dual{B} \vpull P :: \Lambda
        \tag{\ruleLabel{$\mleft$} twice on 3} \\
        & && 5\quad
        && \Gamma; \Delta, x:\dual{A} \tensor \dual{B} \vpull \recv{x}{y}.P :: \Lambda
        \tag{\ruleLabel{$\tensor$L} on 4} \\
        & && 6\quad
        && \Gamma; \Delta \vpull \recv{x}{y}.P :: \Lambda, x:A \parr B
        \tag{\ruleLabel{$\mright$} on 5}
        \displaybreak[1] \\[5pt]
        & \bullet \ruleLabel{$\parr$L}
        && 1\quad
        && \Gamma; \Delta, \Delta', x:A \parr B \vpull \nu{y}\send{x}{y}.(P \| Q) :: \Lambda, \Lambda'
        \tag{assumption} \\
        & && 2\quad
        && \Gamma; \Delta, y:A \vpull P :: \Lambda \\
        & && 3\quad
        && \Gamma; \Delta', x:B \vpull Q :: \Lambda'
        \tag{inversion on 1} \\
        & && 4\quad
        && \Gamma; \Delta, y:A \vpull P :: \Lambda ~\text{with only $\ast$ rules}
        \tag{IH on 2} \\
        & && 5\quad
        && \Gamma; \Delta', x:B \vpull Q :: \Lambda' ~\text{with only $\ast$ rules}
        \tag{IH on 3} \\
        & && 6\quad
        && \Gamma; \Delta \vpull P :: \Lambda, y:\dual{A}
        \tag{\ruleLabel{$\mright$} on 4} \\
        & && 7\quad
        && \Gamma; \Delta' \vpull Q :: \Lambda', x:\dual{B}
        \tag{\ruleLabel{$\mright$} on 5} \\
        & && 8\quad
        && \Gamma; \Delta, \Delta' \vpull \nu{y}\send{x}{y}.(P \| Q) :: \Lambda, \Lambda', x:\dual{A} \tensor \dual{B}
        \tag{\ruleLabel{$\tensor$R} on 6 and 7} \\
        & && 9\quad
        && \Gamma; \Delta, \Delta', x:A \parr B \vpull \nu{y}\send{x}{y}.(P \| Q) :: \Lambda, \Lambda'
        \tag{\ruleLabel{$\mleft$} on 8}
        \displaybreak[1] \\[5pt]
        & \bullet \ruleLabel{copyR}
        && 1\quad
        && \Gamma, u:A; \Delta \vpull \nu{x}\send{u}{x}.P :: \Lambda
        \tag{assumption} \\
        & && 2\quad
        && \Gamma, u:A; \Delta \vpull P :: \Lambda, x:\dual{A}
        \tag{inversion on 1} \\
        & && 3\quad
        && \Gamma, u:A; \Delta \vpull P :: \Lambda, x:\dual{A} ~\text{with only $\ast$ rules}
        \tag{IH on 2} \\
        & && 4\quad
        && \Gamma, u:A; \Delta, x:A \vpull P :: \Lambda
        \tag{\ruleLabel{$\mleft$} on 3} \\
        & && 5\quad
        && \Gamma, u:A; \Delta \vpull \nu{x}\send{u}{x}.P :: \Lambda
        \tag{\ruleLabel{copyL} on 4}
        \displaybreak[1] \\[5pt]
        & \bullet \ruleLabel{$\whynot$R}
        && 1\quad
        && \Gamma; \Delta \vpull P\subst{x/u} :: \Lambda, x:\whynot \dual{A}
        \tag{assumption} \\
        & && 2\quad
        && \Gamma, u:A; \Delta \vpull P :: \Lambda
        \tag{inversion on 1} \\
        & && 3\quad
        && \Gamma, u:A; \Delta \vpull P :: \Lambda ~\text{with only $\ast$ rules}
        \tag{IH on 2} \\
        & && 4\quad
        && \Gamma; \Delta, x:\bang A \vpull P\subst{x/u} :: \Lambda
        \tag{\ruleLabel{$\bang$L} on 3} \\
        & && 5\quad
        && \Gamma; \Delta \vpull P\subst{x/u} :: \Lambda, x:\whynot \dual{A}
        \tag{\ruleLabel{$\mright$} on 4}
        \displaybreak[1] \\[5pt]
        & \bullet \ruleLabel{$\whynot$L}
        && 1\quad
        && \Gamma; x:\whynot A \vpull \serv{x}{y}.P :: \emptyset
        \tag{assumption} \\
        & && 2\quad
        && \Gamma; y:A \vpull P :: \emptyset
        \tag{inversion on 1} \\
        & && 3\quad
        && \Gamma; y:A \vpull P :: \emptyset ~\text{with only $\ast$ rules}
        \tag{IH on 2} \\
        & && 4\quad
        && \Gamma; \emptyset \vpull P :: y:\dual{A}
        \tag{\ruleLabel{$\mright$} on 3} \\
        & && 5 \quad
        && \Gamma; \emptyset \vpull \serv{x}{y}.P :: x:\bang \dual{A}
        \tag{\ruleLabel{$\bang$R} on 4} \\
        & && 6 \quad
        && \Gamma; x:\whynot A \vpull \serv{x}{y}.P :: \emptyset
        \tag{\ruleLabel{$\mleft$} on 5}
        \displaybreak[1] \\[5pt]
        & \bullet \ruleLabel{cutRR}
        && 1\quad
        && \Gamma ; \Delta , \Delta' \vpull \nu{x} ( P \| Q ) :: \Lambda , \Lambda'
        \tag{assumption} \\
        & && 2\quad
        && \Gamma ; \Delta \vpull P :: \Lambda , x:A \\
        & && 3\quad
        && \Gamma ; \Delta' \vpull Q :: \Lambda' , x:\dual{A}
        \tag{inversion on 1} \\
        & && 4\quad
        && \Gamma ; \Delta \vpull P :: \Lambda , x:A ~\text{with only $\ast$ rules}
        \tag{IH on 2} \\
        & && 5\quad
        && \Gamma ; \Delta' \vpull Q :: \Lambda' , x:\dual{A} ~\text{with only $\ast$ rules}
        \tag{IH on 3} \\
        & && 6\quad
        && \Gamma ; \Delta' , x:A \vpull Q :: \Lambda'
        \tag{\ruleLabel{$\mleft$} on 5} \\
        & && 7\quad
        && \Gamma; \Delta, \Delta' \vpull \nu{x} ( P \| Q ) :: \Lambda, \Lambda'
        \tag{\ruleLabel{cutRL} on 4 and 6}
        \displaybreak[1] \\[5pt]
        & \bullet \ruleLabel{cutLL}
        && 1\quad
        && \Gamma; \Delta, \Delta' \vpull \nu{x}(P \| Q) :: \Lambda, \Lambda'
        \tag{assumption} \\
        & && 2\quad
        && \Gamma; \Delta, x:A \vpull P :: \Lambda \\
        & && 3\quad
        && \Gamma; \Delta', x:\dual{A} \vpull Q :: \Lambda'
        \tag{inversion on 1} \\
        & && 4\quad
        && \Gamma; \Delta, x:A \vpull P :: \Lambda ~\text{with only $\ast$ rules}
        \tag{IH on 2} \\
        & && 5\quad
        && \Gamma; \Delta', x:\dual{A} \vpull Q :: \Lambda' ~\text{with only $\ast$ rules}
        \tag{IH on 3} \\
        & && 6\quad
        && \Gamma; \Delta \vpull P :: \Lambda, x:\dual{A}
        \tag{\ruleLabel{$\mright$} on 4} \\
        & && 7\quad
        && \Gamma; \Delta, \Delta' \vpull \nu{x}(P \| Q) :: \Lambda, \Lambda'
        \tag{\ruleLabel{cutRL} on 6 and 5}
        \displaybreak[1] \\[5pt]
        & \bullet \ruleLabel{cut$\whynot$R}
        && 1\quad
        && \Gamma; \Delta \vpull \nu{u} ( P \| \serv{u}{x}.Q ) :: \Lambda
        \tag{assumption} \\
        & && 2\quad
        && \Gamma, u:A; \Delta \vpull P :: \Lambda \\
        & && 3\quad
        && \Gamma; x:\dual{A} \vpull Q :: \emptyset
        \tag{inversion on 1} \\
        & && 4\quad
        && \Gamma, u:A; \Delta \vpull P :: \Lambda ~\text{with only $\ast$ rules}
        \tag{IH on 2} \\
        & && 5\quad
        && \Gamma; x:\dual{A} \vpull Q :: \emptyset ~\text{with only $\ast$ rules}
        \tag{IH on 3} \\
        & && 6\quad
        && \Gamma; \emptyset \vpull Q :: x:A
        \tag{\ruleLabel{$\mright$} on 5} \\
        & && 7\quad
        && \Gamma; \Delta \vpull \nu{u} ( P \| \serv{u}{x}.Q ) :: \Lambda
        \tag{\ruleLabel{cut$\bang$R} on 4 and 6}
        \displaybreak[1] \\[5pt]
        & \bullet \ruleLabel{cut$\whynot$L}
        && 1\quad
        && \Gamma; \Delta \vpull \nu{u}(\serv{u}{x}.P \| Q) :: \Lambda
        \tag{assumption} \\
        & && 2\quad
        && \Gamma; x:\dual{A} \vpull P :: \emptyset \\
        & && 3\quad
        && \Gamma, u:A; \Delta \vpull Q :: \Lambda
        \tag{inversion on 1} \\
        & && 4\quad
        && \Gamma; x:\dual{A} \vpull P :: \emptyset ~\text{with only $\ast$ rules}
        \tag{IH on 2} \\
        & && 5\quad
        && \Gamma, u:A; \Delta \vpull Q :: \Lambda ~\text{with only $\ast$ rules}
        \tag{IH on 3} \\
        & && 6\quad
        && \Gamma; \emptyset \vpull P :: x:A
        \tag{\ruleLabel{$\mright$} on 4} \\
        & && 7\quad
        && \Gamma; \Delta \vpull \nu{u}(\serv{u}{x}.P \| Q) :: \Lambda
        \tag{\ruleLabel{cut$\bang$L} on 6 and 5}
    \end{align*}

    \emph{(Item 2)}
    Suppose given a proof of $\Gamma; \Delta \vdash P :: \Lambda$, possibly with applications of Rules~\ruleLabel{$\mleft$} and~\ruleLabel{$\mright$}.
    By applying induction on the structure of this proof we show that it can be transformed to not contain any applications of \ruleLabel{$\mleft$} and \ruleLabel{$\mright$}.
    We discuss every possible last-applied rule.
    However, all cases except \ruleLabel{$\mleft$} and \ruleLabel{$\mright$} follow directly from the induction hypothesis.
    Therefore, we only detail the case of \ruleLabel{$\mleft$}---the case of \ruleLabel{$\mright$} is analogous.

    Since \ruleLabel{$\mleft$} is the last-applied rule, we know the assumption is of the form $\Gamma; \Delta, x:\dual{A} \vdash P :: \Lambda$.
    By inversion, we have $\Gamma; \Delta \vdash P :: \Lambda, x:A$.
    The idea is to move the application of \ruleLabel{$\mleft$} up the proof tree, applied to a \beginbas subterm \endbas of $A$.
    We apply the induction hypothesis to find a proof of $\Gamma; \Delta \vdash P :: \Lambda, x:A$ without applications of \ruleLabel{$\mleft$} and \ruleLabel{$\mright$}.
    Typing rules leave all channels/types untouched except the ones they work on.
    Therefore, we can traverse up the proof tree---remembering which steps were taken---until we encounter the rule that introduces $x:A$.
    Note that these steps do not include applications of \ruleLabel{$\mleft$} and \ruleLabel{$\mright$}.
    The consequence of this rule looks like $\Gamma'; \Delta' \vdash P' :: \Lambda', x:A$, for some $\Gamma'$, $\Delta'$, $P'$ and $\Lambda'$.
    Now, we apply induction on the size of $A$ (with induction hypothesis denoted~\ih2) to prove $\Gamma'; \Delta', x:\dual{A} \vdash P' :: \Lambda'$.
    We discuss every possible last-applied rule that introduces~$x:A$:

    \begin{align*}
        & \bullet \ruleLabel{idR}
        && 1\quad
        && \Gamma'; y:A \vdash \fwd{y}{x} :: x:A
        \tag{assumption} \\
        & && 2\quad
        && \Gamma'; y:A, x:\dual{A} \vdash \fwd{y}{x} :: \emptyset
        \tag{\ruleLabel{idL}}
        \displaybreak[1] \\[5pt]
        & \bullet \ruleLabel{$\1$R}
        && 1\quad
        && \Gamma'; \emptyset \vdash \send{x}{}.\0 :: x:\1
        \tag{assumption} \\
        & && 2\quad
        && \Gamma'; x:\bot \vdash \send{x}{}.\0 :: \emptyset
        \tag{\ruleLabel{$\bot$L}}
        \displaybreak[1] \\[5pt]
        & \bullet \ruleLabel{$\bot$R}
        && 1\quad
        && \Gamma'; \Delta' \vdash \recv{x}{}.P' :: \Lambda', x:\bot ~\text{without \ruleLabel{$\mleft$}/\ruleLabel{$\mright$}}
        \tag{assumption} \\
        & && 2\quad
        && \Gamma'; \Delta' \vdash P' :: \Lambda'
        \tag{inversion on 1} \\
        & && 3\quad
        && \Gamma'; \Delta', x:\1 \vdash \recv{x}{}.P' :: \Lambda'
        \tag{\ruleLabel{$\1$L} on 2}
        \displaybreak[1] \\[5pt]
        & \bullet \ruleLabel{$\tensor$R}
        && 1\quad
        && \Gamma'; \Delta', \Delta'' \vdash \nu{y}\send{x}{y}.(P' \| Q') :: \Lambda', \Lambda'', x:B \tensor C \\
        & &&
        && ~\text{without \ruleLabel{$\mleft$}/\ruleLabel{$\mright$}}
        \tag{assumption} \\
        & && 2\quad
        && \Gamma'; \Delta' \vdash P' :: \Lambda', y:B \\
        & && 3\quad
        && \Gamma'; \Delta'' \vdash Q' :: \Lambda'', x:C
        \tag{inversion on 1} \\
        & && 4\quad
        && \Gamma'; \Delta', y:\dual{B} \vdash P' :: \Lambda'
        \tag{\ruleLabel{$\mleft$} on 2} \\
        & && 5\quad
        && \Gamma'; \Delta'', x:\dual{C} \vdash Q' :: \Lambda''
        \tag{\ruleLabel{$\mleft$} on 3} \\
        & && 6\quad
        && \Gamma'; \Delta', y:\dual{B} \vdash P' :: \Lambda' ~\text{without \ruleLabel{$\mleft$}/\ruleLabel{$\mright$}}
        \tag{\ih2 on 4} \\
        & && 7\quad
        && \Gamma'; \Delta'', x:\dual{C} \vdash Q' :: \Lambda'' ~\text{without \ruleLabel{$\mleft$}/\ruleLabel{$\mright$}}
        \tag{\ih2 on 5} \\
        & && 8\quad
        && \Gamma'; \Delta', \Delta'', x:\dual{B} \parr \dual{C} \vdash \nu{y}\send{x}{y}.(P' \| Q') :: \Lambda', \Lambda''
        \tag{\ruleLabel{$\parr$L} on 8 and 9}
        \displaybreak[1] \\[5pt]
        & \bullet \ruleLabel{$\lolli$R}
        && 1\quad
        && \Gamma'; \Delta' \vdash \recv{x}{y}.P' :: \Lambda', x:B \lolli C ~\text{without \ruleLabel{$\mleft$}/\ruleLabel{$\mright$}}
        \tag{assumption} \\
        & && 2\quad
        && \Gamma'; \Delta', y:B \vdash P' :: \Lambda', x:C
        \tag{inversion on 1} \\
        & && 3\quad
        && \Gamma'; \Delta', y:B, x:\dual{C} \vdash P' :: \Lambda'
        \tag{\ruleLabel{$\mleft$} on 2} \\
        & && 4\quad
        && \Gamma'; \Delta', y:B, x:\dual{C} \vdash P' :: \Lambda' ~\text{without \ruleLabel{$\mleft$}/\ruleLabel{$\mright$}}
        \tag{\ih2 on 3} \\
        & && 5\quad
        && \Gamma'; \Delta', x:B \tensor \dual{C} \vdash \recv{x}{y}.P' :; \Lambda'
        \tag{\ruleLabel{$\tensor$L} on 4}
        \displaybreak[1] \\[5pt]
        & \bullet \ruleLabel{$\oplus$R}
        && 1\quad
        && \Gamma'; \Delta' \vdash x \triangleleft j.P' :: \Lambda', x:\oplus\{i:A_i\}_{i \in I} ~\text{without \ruleLabel{$\mleft$}/\ruleLabel{$\mright$}}
        \tag{assumption} \\
        & && 2\quad
        && \Gamma'; \Delta' \vdash P' :: \Lambda', x:A_j \\
        & && 3\quad
        && j \in I
        \tag{inversion on 1} \\
        & && 4\quad
        && \Gamma'; \Delta', x:\dual{A_j} \vdash P' :: \Lambda'
        \tag{\ruleLabel{$\mleft$} on 3} \\
        & && 5\quad
        && \Gamma'; \Delta', x:\dual{A_j} \vdash P' :: \Lambda'
        \tag{\ih2 on 4} \\
        & && 6\quad
        && \Gamma'; \Delta', x:\&\{i:\dual{A_i}\}_{i \in I} \vdash x \triangleleft j.P' :: \Lambda'
        \tag{\ruleLabel{$\&$L} on 5 and 3}
        \displaybreak[1] \\[5pt]
        & \bullet \ruleLabel{$\&$R}
        && 1\quad
        && \Gamma'; \Delta' \vdash x \triangleright \{i:P'_i\}_{i \in I} :: \Lambda', x:\&\{i:A_i\}_{i \in I} ~\text{without \ruleLabel{$\mleft$}/\ruleLabel{$\mright$}}
        \tag{assumption} \\
        & && 2\quad
        && \forall i \in I.~ \Gamma'; \Delta' \vdash P'_i :: \Lambda', x:A_i
        \tag{inversion on 1} \\
        & && 3\quad
        && \forall i \in I.~ \Gamma'; \Delta', x:\dual{A_i} \vdash P'_i :: \Lambda'
        \tag{\ruleLabel{$\mleft$} on 2} \\
        & && 4\quad
        && \forall i \in I.~ \Gamma'; \Delta', x:\dual{A_i} \vdash P'_i :: \Lambda' ~\text{without \ruleLabel{$\mleft$}/\ruleLabel{$\mright$}}
        \tag{\ih2 on 3} \\
        & && 5\quad
        && \Gamma'; \Delta', x:\oplus\{i:\dual{A_i}\}_{i \in I} \vdash x \triangleright \{i:P'_i\}_{i \in I} :: \Lambda'
        \tag{\ruleLabel{$\oplus$L} on 4}
        \displaybreak[1] \\[5pt]
        & \bullet \ruleLabel{$\bang$R}
        && 1\quad
        && \Gamma'; \emptyset \vdash \serv{x}{y}.P' :: x:\bang B ~\text{without \ruleLabel{$\mleft$}/\ruleLabel{$\mright$}}
        \tag{assumption} \\
        & && 2\quad
        && \Gamma'; \emptyset \vdash P' :: y:B
        \tag{inversion on 1} \\
        & && 3\quad
        && \Gamma'; y:\dual{B} \vdash P' :: \emptyset
        \tag{\ruleLabel{$\mleft$} on 2} \\
        & && 4\quad
        && \Gamma'; y:\dual{B} \vdash P' :: \emptyset ~\text{without \ruleLabel{$\mleft$}/\ruleLabel{$\mright$}}
        \tag{\ih2 on 3} \\
        & && 5\quad
        && \Gamma'; x:\whynot \dual{B} \vdash \serv{x}{y}.P' :: \emptyset
        \tag{\ruleLabel{$\whynot$L} on 4}
        \displaybreak[1] \\[5pt]
        & \bullet \ruleLabel{$\whynot$R}
        && 1\quad
        && \Gamma'; \Delta' \vdash P'\subst{x/u} :: \Lambda', x:\whynot B
        \tag{assumption} \\
        & && 2\quad
        && \Gamma', u:\dual{A}; \Delta' \vdash P' :: \Lambda'
        \tag{inversion on 1} \\
        & && 3\quad
        && \Gamma'; \Delta', x:\bang \dual{B} \vdash P'\subst{x/u} :: \Lambda'
        \tag{\ruleLabel{$\bang$L} on 2}
    \end{align*}

    \noindent
    Finally, we recall the steps we have traversed up the tree and remember that they do not include applications of \ruleLabel{$\mleft$} and \ruleLabel{$\mright$}.
    We re-apply them on $\Gamma'; \Delta', x:\dual{A} \vdash P' :: \Lambda'$,
    without affecting~$x:\dual{A}$.
    Instead, they only affect $\Gamma'$, $\Delta'$ and $\Lambda'$ to give us a proof of $\Gamma; \Delta, x:\dual{A} \vdash P :: \Lambda$ without applications of \ruleLabel{$\mleft$} and \ruleLabel{$\mright$}.

This concludes the proof of \Cref{thm:madmissible}.
\end{proof}


\section{Comparing Intuitionistic and Classical Interpretations}
\label{sec:comparison}

In this section, we rigorously compare the class of \pull-typable processes to the classes of processes typable in session type interpretations of linear logic, in classical~\cite{journal/mscs/CairesPT16,conf/icfp/Wadler12} and intuitionistic~\cite{conf/concur/CairesP10} settings.
\pull is an independent yardstick for this comparison, because it is derived from \ull which subsumes both linear logics by design.

We have discussed in \Cref{sec:ull} several design choices we made when defining \ull and \pull.
Besides differences stemming from the dichotomy between classical and intuitionistic linear logic,  logically-motivated session type systems also present features induced by certain design choices.
For a fair comparison, we want to make sure that the differences come only from typing.
This means that we need to make the same design choices for both interpretations: we require explicit closing, a separate unrestricted context, and identity as forwarding.

\subsection{Session Type Systems Derived from Intuitionistic and Classical Linear Logic}
\label{ssec:types}

\begin{figure}[p]
    \begin{mathpar}
        \begin{bussproof}[id]
            \bussAx{
                \Gamma; x:A \vdill \fwd{x}{y} :: y:A
            }
        \end{bussproof}
        \and
        \begin{bussproof}[$\1$R]
            \bussAx{
                \Gamma; \emptyset \vdill \send{x}{}.\0 :: x:\1
            }
        \end{bussproof}
        \and
        \begin{bussproof}[$\1$L]
            \bussAssume{
                \Gamma; \Delta \vdill P :: z:C
            }
            \bussUn{
                \Gamma; \Delta, x:\1 \vdill \recv{x}{}.P :: z:\beginbas{C}\endbas
            }
        \end{bussproof}
        \and
        \begin{bussproof}[$\tensor$R]
            \bussAssume{
                \Gamma; \Delta \vdill P :: y:A
            }
            \bussAssume{
                \Gamma; \Delta' \vdill Q :: x:B
            }
            \bussBin{
                \Gamma; \Delta, \Delta' \vdill \nu{y}\send{x}{y}.(P \| Q) :: x:A \tensor B
            }
        \end{bussproof}
        \and
        \begin{bussproof}[$\tensor$L]
            \bussAssume{
                \Gamma; \Delta, y:A, x:B \vdill P :: z:C
            }
            \bussUn{
                \Gamma; \Delta, x:A \tensor B \vdill \recv{x}{y}.P :: z:C
            }
        \end{bussproof}
        \and
        \begin{bussproof}[$\lolli$R]
            \bussAssume{
                \Gamma; \Delta, y:A \vdill P :: x:B
            }
            \bussUn{
                \Gamma; \Delta \vdill \recv{x}{y}.P :: x:A \lolli B
            }
        \end{bussproof}
        \and
        \begin{bussproof}[$\lolli$L]
            \bussAssume{
                \Gamma; \Delta \vdill P :: y:A
            }
            \bussAssume{
                \Gamma; \Delta', x:B \vdill Q :: z:C
            }
            \bussBin{
                \Gamma; \Delta, \Delta', x:A \lolli B \vdill \nu{y}\send{x}{y}.(P \| Q) :: z:C
            }
        \end{bussproof}
        \and
        \begin{bussproof}[$\oplus$R]
            \bussAssume{
                \Gamma; \Delta \vdill P :: x:A_j
            }
            \bussAssume{
                j \in I
            }
            \bussBin{
                \Gamma; \Delta \vdill x \triangleleft j .P :: x:\oplus\{i:A_i\}_{i \in I}
            }
        \end{bussproof}
        \and
        \begin{bussproof}[$\oplus$L]
            \bussAssume{
                \forall i \in I.~ \Gamma; \Delta, x:A_i \vdill P_i :: z:C
            }
            \bussUn{
                \Gamma; \Delta, x:\oplus\{i:A_i\}_{i \in I} \vdill x \triangleright \{i:P_i\}_{i \in I} :: z:C
            }
        \end{bussproof}
        \and
        \begin{bussproof}[$\&$R]
            \bussAssume{
                \forall i \in I.~ \Gamma; \Delta \vdill P_i :: x:A_i
            }
            \bussUn{
                \Gamma; \Delta \vdill x \triangleright \{i:P_i\}_{i \in I} :: x:\&\{i:A_i\}_{i \in I}
            }
        \end{bussproof}
        \and
        \begin{bussproof}[$\&$L]
            \bussAssume{
                \Gamma; \Delta, x:A_j \vdill P :: z:C
            }
            \bussAssume{
                j \in I
            }
            \bussBin{
                \Gamma; \Delta, x:\&\{i:A_i\}_{i \in I} \vdill x \triangleleft j .P :: z:C
            }
        \end{bussproof}
        \and
        \begin{bussproof}[copy]
            \bussAssume{
                \Gamma, u:A; \Delta, x:A \vdill P :: z:C
            }
            \bussUn{
                \Gamma, u:A; \Delta \vdill \nu{x}\send{u}{x}.P :: z:C
            }
        \end{bussproof}
        \and
        \begin{bussproof}[$\bang$R]
            \bussAssume{
                \Gamma; \emptyset \vdill P :: y:A
            }
            \bussUn{
                \Gamma; \emptyset \vdill \serv{x}{y}.P :: x:\bang A
            }
        \end{bussproof}
        \and
        \begin{bussproof}[$\bang$L]
            \bussAssume{
                \Gamma, u:A; \Delta \vdill P :: z:C
            }
            \bussUn{
                \Gamma; \Delta, x:\bang A \vdill P\subst{x/u} :: z:C
            }
        \end{bussproof}
        \and
        \begin{bussproof}[cutRL]
            \bussAssume{
                \Gamma; \Delta \vdill P :: x:A
            }
            \bussAssume{
                \Gamma; \Delta', x:A \vdill Q :: z:C
            }
            \bussBin{
                \Gamma; \Delta, \Delta' \vdill \nu{x}(P \| Q) :: z:C
            }
        \end{bussproof}
        \and
        \begin{bussproof}[cutLR]
            \bussAssume{
                \Gamma; \Delta , x:A \vdill P :: z:C
            }
            \bussAssume{
                \Gamma; \Delta' \vdill Q :: x:A
            }
            \bussBin{
                \Gamma; \Delta, \Delta' \vdill \nu{x}(P \| Q) :: z:C
            }
        \end{bussproof}
        \and
        \begin{bussproof}[cut$\bang$R]
            \bussAssume{
                \Gamma, u:A; \Delta \vdill P :: z:C
            }
            \bussAssume{
                \Gamma; \emptyset \vdill Q :: x:A
            }
            \bussBin{
                \Gamma; \Delta \vdill \nu{u} ( P \| \serv{u}{x}.Q ) :: z:C
            }
        \end{bussproof}
        \and
        \begin{bussproof}[cut$\bang$L]
            \bussAssume{
                \Gamma; \emptyset \vdill P :: x:A
            }
            \bussAssume{
                \Gamma, u:A; \Delta \vdill Q :: z:C
            }
            \bussBin{
                \Gamma; \Delta \vdill \nu{u}(\serv{u}{x}.P \| Q) :: z:C
            }
        \end{bussproof}
    \end{mathpar}

    \caption{The \pill type system.}\label{f:ill-inf}
\end{figure}

\begin{figure}[t!]
    \begin{mathpar}
        \begin{bussproof}[id]
            \bussAx{
                \fwd{x}{y} \vdcll \Gamma; x:A, y:\dual{A}
            }
        \end{bussproof}
        \and
        \begin{bussproof}[$\bot$]
            \bussAssume{
                P \vdcll \Gamma; \Delta
            }
            \bussUn{
                \recv{x}{}.P \vdcll \Gamma; \Delta, x:\bot
            }
        \end{bussproof}
        \and
        \begin{bussproof}[$\1$]
            \bussAx{
                \send{x}{}.\0 \vdcll \Gamma; x:\1
            }
        \end{bussproof}
        \and
        \begin{bussproof}[$\tensor$]
            \bussAssume{
                P \vdcll \Gamma; \Delta, y:A
            }
            \bussAssume{
                Q \vdcll \Gamma; \Delta', x:B
            }
            \bussBin{
                \nu{y}\send{x}{y}.(P \| Q) \vdcll \Gamma; \Delta, \Delta', x:A \tensor B
            }
        \end{bussproof}
        \and
        \begin{bussproof}[$\parr$]
            \bussAssume{
                P \vdcll \Gamma; \Delta, y:A, x:B
            }
            \bussUn{
                \recv{x}{y}.P \vdcll \Gamma; \Delta, x:A \parr B
            }
        \end{bussproof}
        \and
        \begin{bussproof}[$\oplus$]
            \bussAssume{
                P \vdcll \Gamma; \Delta, x:A_j
            }
            \bussAssume{
                j \in I
            }
            \bussBin{
                x \triangleleft j.P \vdcll \Gamma; \Delta, x:\oplus\{i:A_i\}_{i \in I}
            }
        \end{bussproof}
        \and
        \begin{bussproof}[$\&$]
            \bussAssume{
                \forall i \in I.~ P_i \vdcll \Gamma; \Delta, x:A_i
            }
            \bussUn{
                x \triangleright \{i:P_i\}_{i \in I} \vdcll \Gamma; \Delta, x:\&\{i:A_i\}_{i \in I}
            }
        \end{bussproof}
        \and
        \begin{bussproof}[copy]
            \bussAssume{
                P \vdcll \Gamma, u:A; \Delta, y:A
            }
            \bussUn{
                \nu{y}\send{u}{y}.P \vdcll \Gamma, u:A; \Delta
            }
        \end{bussproof}
        \and
        \begin{bussproof}[$\whynot$]
            \bussAssume{
                P \vdcll \Gamma, u:A; \Delta
            }
            \bussUn{
                P\subst{x/u} \vdcll \Gamma; \Delta, x:\whynot A
            }
        \end{bussproof}
        \and
        \begin{bussproof}[$\bang$]
            \bussAssume{
                P \vdcll \Gamma; y:A
            }
            \bussUn{
                \serv{x}{y}.P \vdcll \Gamma; x:\bang A
            }
        \end{bussproof}
        \and
        \begin{bussproof}[cut]
            \bussAssume{
                P \vdcll \Gamma; \Delta, x:A
            }
            \bussAssume{
                Q \vdcll \Gamma; \Delta', x:\dual{A}
            }
            \bussBin{
                \nu{x}(P \| Q) \vdcll \Gamma; \Delta, \Delta'
            }
        \end{bussproof}
        \and
        \begin{bussproof}[cut$\whynot$R]
            \bussAssume{
                P \vdcll \Gamma, u:A; \Delta
            }
            \bussAssume{
                Q \vdcll \Gamma; x:\dual{A}
            }
            \bussBin{
                \nu{u} ( P \| \serv{u}{x}.Q ) \vdcll \Gamma; \Delta
            }
        \end{bussproof}
        \and
        \begin{bussproof}[cut$\whynot$L]
            \bussAssume{
                P \vdcll \Gamma; x:\dual{A}
            }
            \bussAssume{
                Q \vdcll \Gamma, u:A; \Delta
            }
            \bussBin{
                \nu{u}(\serv{u}{x}.P \| Q) \vdcll \Gamma; \Delta
            }
        \end{bussproof}
    \end{mathpar}

    \caption{The \pcllcp type system.}\label{f:cll-inf}
\end{figure}

\noindent
Our goal is to derive session type systems from intuitionistic and classical linear logic for the same process syntax as \pull uses (cf.\ \Cref{def:processterms}).

\paragraph{The Intuitionistic Type System (\pill)}
\Cref{f:ill-inf} gives the inference rules for the type system derived from \emph{intuitionistic} linear logic, denoted \pill.
Following the presentation by Caires, Pfenning, and Toninho in~\cite{conf/tldi/CairesPT12},
propositions/types for the intuitionistic type system are generated by the following grammar (a fragment of \Cref{def:propositions}):
    \begin{align*}
        A, B ::= \1   \bnf A \tensor B \bnf A \lolli B \bnf \oplus\{i:A\}_{i \in I} \bnf \&\{i:A\}_{i \in I} \bnf \bang A
    \end{align*}
\beginjp Notice how, following most presentations of intuitionistic linear logic (e.g.,~\cite{DBLP:conf/tapsoft/GirardL87,report/BarberP96,conf/lics/L18}),  \pill propositions do not include $\bot$ and $?$. \endjp
The judgment is denoted as follows:
$$
\Gamma; \Delta \vdill P :: z:C
$$
With respect to the intuitionistic interpretation introduced in~\cite{conf/concur/CairesP10,journal/mscs/CairesPT16}, \pill features a non-silent interpretation of $\1$, based on explicit closure of sessions (cf.\ Rules~\ruleLabel{$\1$R} and~\ruleLabel{$\1$L}).
We adopt this interpretation because, as explained in~\cite{conf/tldi/CairesPT12}, it leads to a Curry-Howard correspondence that is tighter than correspondences with silent interpretations (such as those in~\cite{conf/concur/CairesP10,journal/mscs/CairesPT16}).

A more superficial difference is that \pill follows standard presentations of session type systems by supporting $n$-ary labeled choices; in contrast, the systems in~\cite{conf/concur/CairesP10,journal/mscs/CairesPT16} support binary labeled choices.
{We also include symmetric variants of the cut-rules, as they are necessary for type preservation under Rule~\ruleLabel{CutSymm} of structural congruence.}
\beginjp As expected, none of these cut-rules  make use of the duality relation in \Cref{def:duality}.\endjp

\paragraph{The Classical Type System (\pcllcp)}
\Cref{f:cll-inf} gives the inference rules for the type system derived from \emph{classical} linear logic.
It is based on a combination of features from Caires, Pfenning, and Toninho's \pcll in~\cite{journal/mscs/CairesPT16} and Wadler's \cp in~\cite{conf/icfp/Wadler12};
in the following, it is denoted \pcllcp.
The syntax of propositions/types is exactly as in \Cref{def:propositions}.
The corresponding judgment is as follows:
$$
P \vdcll \Gamma; \Delta
$$
\Cref{tbl:pcll_cp_pcllcp} summarizes the differences in the design choices between \pcll, \cp, and \pcllcp; these differences are merely superficial:
\begin{itemize}[noitemsep]
    \item
        As we have seen, an explicit closing of sessions (as in \cp) concerns a non-silent interpretation of the atomic propositions $\1$ and $\bot$.
        In contrast, \pcll realizes an implicit (silent) closing of sessions.

    \item
        Sequents with a separate unrestricted context (as in \pcll) are of the form $P \vdash \Gamma; \Delta$, which can also be written as $P \vdash \Delta, \Gamma'$ where $\Gamma'$ contains only types of the form $!A$.

    \item
        The identity axiom can be interpreted as the forwarding process, which enables \cp to account for behavioral polymorphism (i.e., universal and existential quantification over propositions/session types)~\cite{conf/icfp/Wadler12,conf/esop/CairesPPT13}.
        As already mentioned, forwarding is not a typical process construct in session $\pi$-calculi.
\end{itemize}
 {Note that, since \pcllcp typing judgments are one-sided, there is no need for symmetric cut-rules.} \beginjp The three cut-rules in \Cref{f:cll-inf} use duality  as in \Cref{def:duality}.\endjp

\subsection{Formal Comparison}

\begin{table}[]
\newcolumntype{Y}{>{\centering\arraybackslash}X}
\centering
\begin{tabularx}{.85\textwidth}{lYYY}
 & \textbf{Explicit closing}
 & \textbf{Separate \mbox{unrestricted} context}
 & \textbf{Identity as \mbox{forwarding}} \\ \hline
\pcll~\cite{journal/mscs/CairesPT16}   & No  & Yes & No  \\
\cp~\cite{conf/icfp/Wadler12}     & Yes & No  & Yes \\
\pcllcp (this paper) & Yes & Yes & Yes
\end{tabularx}
\caption{Feature comparison of three session type interpretations of classical linear logic.}\label{tbl:pcll_cp_pcllcp}
\end{table}

Now that we have presented all three systems, we start our comparison by contrasting the shape of their typing judgments:

\begin{minipage}[T]{.28\textwidth}
    \[
        \Gamma; \Delta \vdash P :: \Lambda
    \]
\end{minipage}%
\begin{minipage}[T]{.40\textwidth}
    \[
        \Gamma; \Delta \vdill P :: x:A
    \]
\end{minipage}%
\begin{minipage}[T]{.28\textwidth}
    \[
        P \vdcll \Gamma; \Delta
    \]
\end{minipage}%
\hfill

\medskip
\noindent
In \pull and \pill judgments are similar, but in \pill they have exactly one channel/type pair on the right.
We will see that the difference between \pull and \pill can be characterized by this fact alone.
Judgments in \pcllcp are different from those in \pull and \pill:
they have only one linear context and both the linear and the unrestricted contexts are on the right.
As we will see, our results reflect this with a duality relation between the contexts of \pull and \pcllcp.

Our formal results rely on classes of processes typable in the three typing systems:

\begin{definition}\label{def:typablesets}
    Let $\mathbb{P}$ denote the set of all processes induced by \Cref{def:processterms}.
    Then
    \begin{align*}
        \U &= \{P \in \mathbb{P} \mid \exists \Gamma, \Delta, \Lambda ~\text{such that}~ \Gamma; \Delta \vdash P :: \Lambda\}, \\
        \C &= \{P \in \mathbb{P} \mid \exists \Gamma, \Delta ~\text{such that}~ P \vdcll \Gamma; \Delta\}, \\
        \I &= \{P \in \mathbb{P} \mid \exists \Gamma, \Delta, x, A ~\text{such that}~ \Gamma; \Delta \vdill P :: x:A\}.
    \end{align*}
\end{definition}

Our first result is that $\U = \C$, i.e.,
\pull is merely a two-sided representation of \pcllcp.

\begin{theorem}\label{thm:cllull}
    $\U = \C$.
\end{theorem}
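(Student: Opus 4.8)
The plan is to prove the two inclusions $\U \subseteq \C$ and $\C \subseteq \U$ separately, in each case by induction on the typing derivation, exhibiting a translation between the two-sided \pull judgment $\Gamma; \Delta \vdash P :: \Lambda$ and the one-sided \pcllcp judgment $P \vdcll \Gamma; \dual{(\Delta)}, \Lambda$. The guiding correspondence suggested by the shapes of the judgments is that a \pull sequent with linear antecedents $\Delta$ and linear succedents $\Lambda$ should match the single-region \pcllcp sequent whose linear context is $\dual{(\Delta)}, \Lambda$; the dualization of the left region reflects the fact that moving a channel from the antecedent to the succedent flips its type, exactly as encoded by Rules \ruleLabel{$\mleft$}/\ruleLabel{$\mright$}.

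First I would treat $\C \subseteq \U$. Given a derivation of $P \vdcll \Gamma; \Sigma$, I would show $\Gamma; \emptyset \vdash P :: \Sigma$ by induction on the derivation, so that $P \in \U$. Each rule of \pcllcp in \Cref{f:cll-inf} must be matched by a (possibly composite) \pull derivation. For the multiplicative and additive rules this is immediate from the $\ast$-marked right rules of \Cref{f:ull-inf}: e.g. \ruleLabel{$\1$}, \ruleLabel{$\bot$}, \ruleLabel{$\tensor$}, \ruleLabel{$\parr$}, \ruleLabel{$\oplus$}, \ruleLabel{$\&$} correspond to \ruleLabel{$\1$R}, \ruleLabel{$\bot$R}, \ruleLabel{$\tensor$R}, \ruleLabel{$\parr$R}, \ruleLabel{$\oplus$R}, \ruleLabel{$\&$R}. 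The identity \ruleLabel{id} of \pcllcp puts both $x:A$ and $y:\dual{A}$ on the right, which I would match using \ruleLabel{idL} of \pull (recall $\fwd{x}{y}$ types $x:A, y:\dual{A}$ on the left in \ruleLabel{idL}, and I would use \ruleLabel{$\mleft$}/\ruleLabel{$\mright$} or, equivalently, \Cref{thm:madmissible} to relocate these as needed). The cut rule \ruleLabel{cut} of \pcllcp, composing $P$ with linear $x:A$ and $Q$ with $x:\dual{A}$, is matched by \ruleLabel{cutRR} of \pull; the exponential cuts \ruleLabel{cut$\whynot$R}/\ruleLabel{cut$\whynot$L} map to \ruleLabel{cut$\whynot$R}/\ruleLabel{cut$\whynot$L}, and \ruleLabel{copy}, \ruleLabel{$\whynot$}, \ruleLabel{$\bang$} to \ruleLabel{copyR}/\ruleLabel{copyL}, \ruleLabel{$\whynot$R}, \ruleLabel{$\bang$R}. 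Since \pull contains all of these rules natively (or as admissible ones via \Cref{thm:madmissible}), the induction goes through cleanly.

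The converse direction $\U \subseteq \C$ is the more delicate one and where the main obstacle lies. Here I would leverage the alternative presentation \pullm from \Cref{def:pullm} together with \Cref{thm:madmissible}: by that theorem, every \pull derivation can be reorganized into a \pullm derivation using only the $\ast$-marked rules plus \ruleLabel{$\mleft$}/\ruleLabel{$\mright$}. The advantage is that the $\ast$-marked rules are in one-to-one correspondence with the one-sided rules of \pcllcp (all of them act on the \emph{right}), and the two structural rules \ruleLabel{$\mleft$}/\ruleLabel{$\mright$} merely toggle a channel between the dualized left region and the right region—an operation that is invisible once one passes to the single combined context $\dual{(\Delta)}, \Lambda$. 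Concretely, I would prove by induction on a \pullm derivation of $\Gamma; \Delta \vpull P :: \Lambda$ that $P \vdcll \Gamma; \dual{(\Delta)}, \Lambda$. The cases for \ruleLabel{$\mleft$} and \ruleLabel{$\mright$} become \emph{trivial}, since moving $x:A$ from $\Lambda$ to $\Delta$ as $x:\dual{A}$ leaves $\dual{(\Delta)}, \Lambda$ unchanged up to the identity $\dual{(\dual{A})} = A$; every other case matches the corresponding \pcllcp rule directly.

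The main obstacle I expect is bookkeeping the duality of the left region consistently—in particular, verifying that the exponential rules behave correctly under the translation. Rules \ruleLabel{copyL}, \ruleLabel{$\bang$L}, \ruleLabel{$\whynot$L} involve the unrestricted region $\Gamma$, which is \emph{not} dualized (it sits on the same side in both systems), so I must be careful that the service/client polarity matches: e.g. \ruleLabel{$\bang$R} of \pull offering $x:\bang A$ on the right translates to \ruleLabel{$\bang$} of \pcllcp, while a left-region client obligation is handled through \ruleLabel{copyL} and \ruleLabel{$\bang$L}, which under the translation correspond to the client-side \pcllcp behavior with $x:\whynot \dual{A}$ appearing on the right via the duality $\dual{(\bang A)} = \whynot \dual{A}$. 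I would verify each exponential case explicitly to confirm the $\Gamma$ region is threaded unchanged and the dualizations line up, and then conclude $\U \subseteq \C$ by taking the translated judgment and forgetting the partition into $\dual{(\Delta)}$ and $\Lambda$. Combining both inclusions yields $\U = \C$.
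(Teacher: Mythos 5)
Your overall strategy matches the paper's: prove the two inclusions separately by induction on derivations, translating between the two-sided \pull judgment and the one-sided \pcllcp judgment by dualizing the left linear region, and lean on \Cref{thm:madmissible} to mediate between \pull and \pullm. One genuine variation: you normalize a \pull derivation into \pullm (via item~1 of \Cref{thm:madmissible}) before translating into \pcllcp, which shrinks the case analysis for $\U \subseteq \C$ to the $\ast$-marked rules plus the (trivially translated) move rules; the paper instead does the direct induction over all \pull rules for that direction and reserves \pullm for the converse $\C \subseteq \U$. Your shortcut is legitimate and arguably cleaner, provided you also account for the cut-rules of \Cref{f:ull-inf-cut} (only some of which are $\ast$-marked and hence in \pullm).

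There is, however, a concrete error in your invariant: you assert that the unrestricted region $\Gamma$ is \emph{not} dualized by the translation, whereas it must be. In \pull, an unrestricted assignment $u:A$ corresponds to $x:\whynot\dual{A}$ on the right (see Rules~\ruleLabel{$\whynot$R} and~\ruleLabel{$\bang$L}), while in \pcllcp an unrestricted $u:A$ corresponds to $x:\whynot A$ in the linear region (Rule~\ruleLabel{$\whynot$}); so the correct targets are $P \vdcll \dual{(\Gamma)}; \dual{(\Delta)}, \Lambda$ and, conversely, $\dual{(\Gamma)}; \emptyset \vdash P :: \Delta$, exactly as in the paper. With your stated invariant the exponential cases break: e.g.\ for \ruleLabel{copyL}, the inductive hypothesis would yield $P \vdcll \Gamma, u:A; \dual{(\Delta)}, \Lambda, x:\dual{A}$, and \pcllcp's Rule~\ruleLabel{copy} cannot fire because it requires the unrestricted and linear occurrences to carry the \emph{same} type; the same mismatch arises for \ruleLabel{$\bang$L} versus \ruleLabel{$\whynot$}, for \ruleLabel{$\whynot$R}, and for the exponential cuts. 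The fix is purely the one you half-anticipate when you mention $\dual{(\bang A)} = \whynot\dual{A}$: dualize $\Gamma$ uniformly throughout both inductions. With that correction the proof goes through.
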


We briefly discuss how we prove this result.
On the one hand (from left to right), if $P \in \U$, there is a proof of $\Gamma; \Delta \vdash P :: \Lambda$.
By backtracking on this proof, we can use rules in \pcllcp analogous to the \pull-rules used to generate an equivalent proof of $P \vdcll \dual{(\Gamma)}; \dual{(\Delta)}, \Lambda$ thus showing $P \in \C$.
Note how the single-sidedness of \pcllcp judgments requires us to move $\Gamma$ and $\Delta$ to the right-hand side  using duality.

On the other hand (from right to left), if $P \in \C$, there is a proof of $P \vdcll \Gamma; \Delta$.
Again, by backtracking on this proof, we can use a combination of rules similar to those used in \pcllcp in combination with Rules~\ruleLabel{$\mleft$} and~\ruleLabel{$\mright$} from \Cref{def:pullm} to prove $\dual{(\Gamma)}; \emptyset \vpull \Delta$ in \pullm.
Going through \pullm simplifies this process, since using \ruleLabel{$\mleft$} and \ruleLabel{$\mright$} enables us to guarantee that all of $\Delta$ ends up on the right-hand side of the typing judgment instead of dividing unpredictably between left and right.
Note that we do have to use duality to move $\Gamma$ to the left.
Since \ruleLabel{$\mleft$} and \ruleLabel{$\mright$} are admissible in \pull (cf.\ \Cref{thm:madmissible}),
this means we also have a proof of $\dual{(\Gamma)}; \emptyset \vdash \Delta$ in \pull.
Hence, $P \in \U$.

\begin{proof}[Proof of \Cref{thm:cllull}]
    \emph{($\,\U \subseteq \C$)}
    Take any $P \in \U$.
    Then, by \Cref{def:typablesets}, there are $\Gamma$, $\Delta$, $\Lambda$ s.t.\ $\Gamma; \Delta \vdash P :: \Lambda$.
    By showing that this implies $P \vdcll \dual{(\Gamma)}; \dual{(\Delta)}, \Lambda$, we have $P \in \C$.
    We show this by induction on the structure of the proof of $\Gamma; \Delta \vdash P :: \Lambda$.
    \begin{align*}
        & \bullet \ruleLabel{$\id$R}
        && 1\quad
        && \Gamma; x:A \vdash \fwd{x}{y} :: y:A
        \tag{assumption} \\
        & && 2\quad
        && \fwd{x}{y} \vdcll \dual{(\Gamma)}; x:\dual{A}, y:A
        \tag{\ruleLabel{id}}
        \displaybreak[1] \\[5pt]
        & \bullet \ruleLabel{idL}
        && 1\quad
        && \Gamma; x:A, y:\dual{A} \vdash \fwd{x}{y} :: \emptyset
        \tag{assumption} \\
        & && 2\quad
        && \fwd{x}{y} \vdcll \dual{(\Gamma)}; x:\dual{A}, y:A
        \tag{\ruleLabel{id}}
        \displaybreak[1] \\[5pt]
        & \bullet \ruleLabel{$\1$R}
        && 1\quad
        && \Gamma; \emptyset \vdash \send{x}{}.\0 :: x:\1
        \tag{assumption} \\
        & && 2\quad
        && \send{x}{}.\0 \vdcll \dual{(\Gamma)}; x:\1
        \tag{\ruleLabel{$\1$}}
        \displaybreak[1] \\[5pt]
        & \bullet \ruleLabel{$\1$L}
        && 1\quad
        && \Gamma; \Delta, x:\1 \vdash \recv{x}{}.P :: \Lambda
        \tag{assumption} \\
        & && 2\quad
        && \Gamma; \Delta \vdash P :: \Lambda
        \tag{inversion on 1} \\
        & && 3\quad
        && P \vdcll \dual{(\Gamma)}; \dual{(\Delta)}, \Lambda
        \tag{IH on 2} \\
        & && 4\quad
        && \recv{x}{}.P \vdcll \dual{(\Gamma)}; \dual{(\Delta)}, \Lambda, x:\bot
        \tag{\ruleLabel{$\bot$} on 3}
        \displaybreak[1] \\[5pt]
        & \bullet \ruleLabel{$\bot$R}
        && 1\quad
        && \Gamma; \Delta \vdash \recv{x}{}.P :: \Lambda, x:\bot
        \tag{assumption} \\
        & && 2\quad
        && \Gamma; \Delta \vdash P :: \Lambda
        \tag{inversion on 1} \\
        & && 3\quad
        && P \vdcll \dual{(\Gamma)}; \dual{(\Delta)}, \Lambda
        \tag{IH on 2} \\
        & && 4\quad
        && \recv{x}{}.P \vdcll \dual{(\Gamma)}; \dual{(\Delta)}, \Lambda, x:\bot
        \tag{\ruleLabel{$\bot$} on 3}
        \displaybreak[1] \\[5pt]
        & \bullet \ruleLabel{$\bot$L}
        && 1\quad
        && \Gamma; x:\bot \vdash \send{x}{}.\0 :: \emptyset
        \tag{assumption} \\
        & && 2\quad
        && \send{x}{}.\0 \vdcll \dual{(\Gamma)}; x:\1
        \tag{\ruleLabel{$\1$}}
        \displaybreak[1] \\[5pt]
        & \bullet \ruleLabel{$\tensor$R}
        && 1\quad
        && \Gamma; \Delta, \Delta' \vdash \nu{y}\send{x}{y}.(P \| Q) :: \Lambda, \Lambda', x:A \tensor B
        \tag{assumption} \\
        & && 2\quad
        && \Gamma; \Delta \vdash P :: \Lambda, y:A \\
        & && 3\quad
        && \Gamma; \Delta' \vdash Q :: \Lambda', x:B
        \tag{inversion on 1} \\
        & && 4\quad
        && P \vdcll \dual{(\Gamma)}; \dual{(\Delta)}, \Lambda, y:A
        \tag{IH on 2} \\
        & && 5\quad
        && Q \vdcll \dual{(\Gamma)}; \dual{(\Delta')}, \Lambda', x:B
        \tag{IH on 3} \\
        & && 6\quad
        && \nu{y}\send{x}{y}.(P \| Q) \vdcll \dual{(\Gamma)}; \dual{(\Delta)}, \dual{(\Delta')}, \Lambda, \Lambda', x:A \tensor B
        \tag{\ruleLabel{$\tensor$} on 4 and 5}
        \displaybreak[1] \\[5pt]
        & \bullet \ruleLabel{$\tensor$L}
        && 1\quad
        && \Gamma; \Delta, x:A \tensor B \vdash \recv{x}{y}.P :: \Lambda
        \tag{assumption} \\
        & && 2\quad
        && \Gamma; \Delta, y:A, x:B \vdash P :: \Lambda
        \tag{inversion on 1} \\
        & && 3\quad
        && P \vdcll \dual{(\Gamma)}; \dual{(\Delta)}, \Lambda, y:\dual{A}, x:\dual{B}
        \tag{IH on 2} \\
        & && 4\quad
        && \recv{x}{y}.P \vdcll \dual{(\Gamma)}; \dual{(\Delta)}, \Lambda, x:\dual{A} \parr \dual{B}
        \tag{\ruleLabel{$\parr$} on 3}
        \displaybreak[1] \\[5pt]
        & \bullet \ruleLabel{$\parr$R}
        && 1\quad
        && \Gamma; \Delta \vdash \recv{x}{y}.P :: \Lambda, x:A \parr B
        \tag{assumption} \\
        & && 2\quad
        && \Gamma; \Delta \vdash P :: \Lambda, y:A, x:B
        \tag{inversion on 1} \\
        & && 3\quad
        && P \vdcll \dual{(\Gamma)}; \dual{(\Delta)}, \Lambda, y:A, x:B
        \tag{IH on 2} \\
        & && 4\quad
        && \recv{x}{y}.P \vdcll \dual{(\Gamma)}; \dual{(\Delta)}, \Lambda, x:A \parr B
        \tag{\ruleLabel{$\parr$} on 3}
        \displaybreak[1] \\[5pt]
        & \bullet \ruleLabel{$\parr$L}
        && 1\quad
        && \Gamma; \Delta, \Delta', x:A \parr B \vdash \nu{y}\send{x}{y}.(P \| Q) :: \Lambda, \Lambda'
        \tag{assumption} \\
        & && 2\quad
        && \Gamma; \Delta, y:A \vdash P :: \Lambda \\
        & && 3\quad
        && \Gamma; \Delta', x:B \vdash Q :: \Lambda'
        \tag{inversion on 1} \\
        & && 4\quad
        && P \vdcll \dual{(\Gamma)}; \dual{(\Delta)}, \Lambda, y:\dual{A}
        \tag{IH on 2} \\
        & && 5\quad
        && Q \vdcll \dual{(\Gamma)}; \dual{(\Delta')}, \Lambda', x:\dual{B}
        \tag{IH on 3} \\
        & && 6\quad
        && \nu{y}\send{x}{y}.(P \| Q) \vdcll \dual{(\Gamma)}; \dual{(\Delta)}, \dual{(\Delta')}, \Lambda, \Lambda', x:\dual{A} \tensor \dual{B}
        \tag{\ruleLabel{$\tensor$} on 4 and 5}
        \displaybreak[1] \\[5pt]
        & \bullet \ruleLabel{$\lolli$R}
        && 1\quad
        && \Gamma; \Delta \vdash \recv{x}{y}.P :: \Lambda, x:A \lolli B
        \tag{assumption} \\
        & && 2\quad
        && \Gamma; \Delta, y:A \vdash P :: \Lambda, x:B
        \tag{inversion on 1} \\
        & && 3\quad
        && P \vdcll \dual{(\Gamma)}; \dual{(\Delta)}, \Lambda, y:\dual{A}, x:B
        \tag{IH on 2} \\
        & && 4\quad
        && \recv{x}{y}.P \vdcll \dual{(\Gamma)}; \dual{(\Delta)}, \Lambda, x:\dual{A} \parr B
        \tag{\ruleLabel{$\parr$} on 3}
        \displaybreak[1] \\[5pt]
        & \bullet \ruleLabel{$\lolli$L}
        && 1\quad
        && \Gamma; \Delta, \Delta', x:A \lolli B \vdash \nu{y}\send{x}{y}.(P \| Q) :: \Lambda, \Lambda'
        \tag{assumption} \\
        & && 2\quad
        && \Gamma; \Delta \vdash P :: \Lambda, y:A \\
        & && 3\quad
        && \Gamma; \Delta', x:B \vdash Q :: \Lambda'
        \tag{inversion on 1} \\
        & && 4\quad
        && P \vdcll \dual{(\Gamma)}; \dual{(\Delta)}, \Lambda, y:A
        \tag{IH on 2} \\
        & && 5\quad
        && Q \vdcll \dual{(\Gamma)}; \dual{(\Delta')}, \Lambda', x:\dual{B}
        \tag{IH on 3} \\
        & && 6\quad
        && \nu{y}\send{x}{y}.(P \| Q) \vdcll \dual{(\Gamma)}; \dual{(\Delta)}, \dual{(\Delta')}, \Lambda, \Lambda', x:A \tensor \dual{B}
        \tag{\ruleLabel{$\tensor$} on 4 and 5}
        \displaybreak[1] \\[5pt]
        & \bullet \ruleLabel{$\&$R}
        && 1\quad
        && \Gamma; \Delta \vdash x \triangleright \{i:P_i\}_{i \in I} :: \Lambda, x:\&\{i:A_i\}_{i \in I}
        \tag{assumption} \\
        & && 2\quad
        && \forall i \in I.~ \Gamma; \Delta \vdash P_i :: \Lambda, x:A_i
        \tag{inversion on 1} \\
        & && 3\quad
        && \forall i \in I.~ P_i \vdcll \dual{(\Gamma)}; \dual{(\Delta)}, \Lambda, x:A_i
        \tag{IH on 2} \\
        & && 4\quad
        && x \triangleright \{i:P_i\}_{i \in I} \vdcll \dual{(\Gamma)}; \dual{(\Delta)}, \Lambda, x:\&\{i:A_i\}_{i \in I}
        \tag{\ruleLabel{$\&$} on 3}
        \displaybreak[1] \\[5pt]
        & \bullet \ruleLabel{$\&$L}
        && 1\quad
        && \Gamma; \Delta, x:\&\{i:A_i\} \vdash x \triangleleft j.P :: \Lambda
        \tag{assumption} \\
        & && 2\quad
        && \Gamma; \Delta, x:A_j \vdash P :: \Lambda \\
        & && 3\quad
        && j \in I
        \tag{inversion on 1} \\
        & && 4\quad
        && P \vdcll \dual{(\Gamma)}; \dual{(\Delta)}, \Lambda, x:\dual{A_j}
        \tag{IH on 2} \\
        & && 5\quad
        && x \triangleright j.P \vdcll \dual{(\Gamma)}; \dual{(\Delta)}, \Lambda, x:\oplus\{i:\dual{A_i}\}_{i \in I}
        \tag{\ruleLabel{$\oplus$} on 4 and 3}
        \displaybreak[1] \\[5pt]
        & \bullet \ruleLabel{$\oplus$R}
        && 1\quad
        && \Gamma; \Delta \vdash x \triangleleft j.P :: \Lambda, x:\oplus\{i:A_i\}_{i \in I}
        \tag{assumption} \\
        & && 2\quad
        && \Gamma; \Delta \vdash P :: \Lambda, x:A_j \\
        & && 3\quad
        && j \in I
        \tag{inversion on 1} \\
        & && 4\quad
        && P \vdcll \dual{(\Gamma)}; \dual{(\Delta)}, \Lambda, x:A_j
        \tag{IH on 2} \\
        & && 5\quad
        && x \triangleleft j.P \vdcll \dual{(\Gamma)}; \dual{(\Delta)}, \Lambda, x:\oplus\{i:A_i\}_{i \in I}
        \tag{\ruleLabel{$\oplus$} on 4 and 3}
        \displaybreak[1] \\[5pt]
        & \bullet \ruleLabel{$\oplus$L}
        && 1\quad
        && \Gamma; \Delta, x:\oplus\{i:A_i\}_{i \in I} \vdash x \triangleright \{i:P_i\}_{i \in I} :: \Lambda
        \tag{assumption} \\
        & && 2\quad
        && \forall i \in I.~ \Gamma; \Delta, x:A_i \vdash P_i :: \Lambda
        \tag{inversion on 1} \\
        & && 3\quad
        && \forall i \in I.~ P_i \vdcll \dual{(\Gamma)}; \dual{(\Delta)}, \Lambda, x:\dual{A_i}
        \tag{IH on 2} \\
        & && 4\quad
        && x \triangleright \{i:P_i\}_{i \in I} \vdcll \dual{(\Gamma)}; \dual{(\Delta)}, \Lambda, x:\&\{i:\dual{A_i}\}_{i \in I}
        \tag{\ruleLabel{$\&$} on 4}
        \displaybreak[1] \\[5pt]
        & \bullet \ruleLabel{copyR}
        && 1\quad
        && \Gamma, u:A; \Delta \vdash \nu{x}\send{u}{x}.P :: \Lambda
        \tag{assumption} \\
        & && 2\quad
        && \Gamma, u:A; \Delta \vdash P :: \Lambda, x:\dual{A}
        \tag{inversion on 1} \\
        & && 3\quad
        && P \vdcll \dual{(\Gamma)}, u:\dual{A}; \dual{(\Delta)}, \Lambda, x:\dual{A}
        \tag{IH on 2} \\
        & && 4\quad
        && \nu{x}\send{u}{x}.P \vdcll \dual{(\Gamma)}, u:\dual{A}; \dual{(\Delta}), \Lambda
        \tag{\ruleLabel{copy} on 3}
        \displaybreak[1] \\[5pt]
        & \bullet \ruleLabel{copyL}
        && 1\quad
        && \Gamma, u:A; \Delta \vdash \nu{x}\send{u}{x}.P :: \Lambda
        \tag{assumption} \\
        & && 2\quad
        && \Gamma, u:A; \Delta, x:A \vdash P :: \Lambda
        \tag{inversion on 1} \\
        & && 3\quad
        && P \vdcll \dual{(\Gamma)}, u:\dual{A}; \dual{(\Delta)}, \Lambda, x:\dual{A}
        \tag{IH on 2} \\
        & && 4\quad
        && \nu{x}\send{u}{x}.P :: \dual{(\Gamma)}, u:\dual{A}; \dual{(\Delta)}, \Lambda
        \tag{\ruleLabel{copy} on 3}
        \displaybreak[1] \\[5pt]
        & \bullet \ruleLabel{$\bang$R}
        && 1\quad
        && \Gamma; \emptyset \vdash \serv{x}{y}.P :: x:\bang A
        \tag{assumption} \\
        & && 2\quad
        && \Gamma; \emptyset \vdash P :: y:A
        \tag{inversion on 1} \\
        & && 3\quad
        && P \vdcll \dual{(\Gamma)}; y:A
        \tag{IH on 2} \\
        & && 4\quad
        && \serv{x}{y}.P \vdcll \dual{(\Gamma)}; y:\bang A
        \tag{\ruleLabel{$\bang$} on 3}
        \displaybreak[1] \\[5pt]
        & \bullet \ruleLabel{$\bang$L}
        && 1\quad
        && \Gamma; \Delta, x:\bang A \vdash P\subst{x/u} :: \Lambda
        \tag{assumption} \\
        & && 2\quad
        && \Gamma, u:A; \Delta \vdash P :: \Lambda
        \tag{inversion on 1} \\
        & && 3\quad
        && P \vdcll \dual{(\Gamma)}, u:\dual{A}; \dual{(\Delta)}, \Lambda
        \tag{IH on 2} \\
        & && 4\quad
        && P\subst{x/u} \vdcll \dual{(\Gamma)}; \dual{(\Delta)}, \Lambda, x:\whynot \dual{A}
        \tag{\ruleLabel{$\whynot$} on 3}
        \displaybreak[1] \\[5pt]
        & \bullet \ruleLabel{$\whynot$R}
        && 1\quad
        && \Gamma; \Delta \vdash P\subst{x/u} :: \Lambda, x:\whynot \dual{A}
        \tag{assumption} \\
        & && 2\quad
        && \Gamma, u:A; \Delta \vdash P :: \Lambda
        \tag{inversion on 1} \\
        & && 3\quad
        && P \vdcll \dual{(\Gamma)}, u:\dual{A}; \dual{(\Delta)}, \Lambda
        \tag{IH on 2} \\
        & && 4\quad
        && P\subst{x/u} \vdcll \dual{(\Gamma)}; \dual{(\Delta)}, \Lambda, x:\whynot \dual{A}
        \tag{\ruleLabel{$\whynot$} on 3}
        \displaybreak[1] \\[5pt]
        & \bullet \ruleLabel{$\whynot$L}
        && 1\quad
        && \Gamma; x:\whynot A \vdash \serv{x}{y}.P :: \emptyset
        \tag{assumption} \\
        & && 2\quad
        && \Gamma; y:A \vdash P :: \emptyset
        \tag{inversion on 1} \\
        & && 3\quad
        && P \vdcll \dual{(\Gamma)}; y:\dual{A}
        \tag{IH on 2} \\
        & && 4\quad
        && \serv{x}{y}.P \vdcll \dual{(\Gamma)}; x:\bang \dual{A}
        \tag{\ruleLabel{$\bang$} on 3}
        \displaybreak[1] \\[5pt]
        & \bullet \ruleLabel{cutRL}
        && 1\quad
        && \Gamma; \Delta, \Delta' \vdash \nu{x}(P \| Q) :: \Lambda, \Lambda'
        \tag{assumption} \\
        & && 2\quad
        && \Gamma; \Delta \vdash P :: \Lambda, x:A \\
        & && 3\quad
        && \Gamma; \Delta', x:A \vdash Q :: \Lambda'
        \tag{inversion on 1} \\
        & && 4\quad
        && P \vdcll \dual{(\Gamma)}; \dual{(\Delta)}, \Lambda, x:A
        \tag{IH on 2} \\
        & && 5\quad
        && Q \vdcll \dual{(\Gamma)}; \dual{(\Delta')}, \Lambda', x:\dual{A}
        \tag{IH on 3} \\
        & && 6\quad
        && \nu{x}(P \| Q) \vdcll \dual{(\Gamma)}; \dual{(\Delta)}, \dual{(\Delta')}, \Lambda, \Lambda'
        \tag{\ruleLabel{cut} on 4 and 5}
        \displaybreak[1] \\[5pt]
        & \bullet \ruleLabel{cutLR}
        && 1\quad
        && \Gamma; \Delta, \Delta' \vdash \nu{x}(P \| Q) :: \Lambda, \Lambda'
        \tag{assumption} \\
        & && 2\quad
        && \Gamma; \Delta, x:A \vdash P :: \Lambda \\
        & && 3\quad
        && \Gamma; \Delta' \vdash Q :: \Lambda', x:A
        \tag{inversion on 1} \\
        & && 4\quad
        && P \vdcll \dual{(\Gamma)}; \dual{(\Delta)}, \Lambda, x:\dual{A}
        \tag{IH on 2} \\
        & && 5\quad
        && Q \vdcll \dual{(\Gamma)}; \dual{(\Delta')}, \Lambda', x:A
        \tag{IH on 3} \\
        & && 6\quad
        && \nu{x}(P \| Q) \vdcll \dual{(\Gamma)}; \dual{(\Delta)}, \dual{(\Delta')}, \Lambda, \Lambda'
        \tag{\ruleLabel{cut} on 4 and 5}
        \displaybreak[1] \\[5pt]
        & \bullet \ruleLabel{cutRR}
        && 1\quad
        && \Gamma; \Delta, \Delta' \vdash \nu{x}(P \| Q) :: \Lambda, \Lambda'
        \tag{assumption} \\
        & && 2\quad
        && \Gamma; \Delta \vdash P :: \Lambda, x:A \\
        & && 3\quad
        && \Gamma; \Delta' \vdash Q :: \Lambda', x:\dual{A}
        \tag{inversion on 1} \\
        & && 4\quad
        && P \vdcll \dual{(\Gamma)}; \dual{(\Delta)}, \Lambda, x:A
        \tag{IH on 2} \\
        & && 5\quad
        && Q \vdcll \dual{(\Gamma)}; \dual{(\Delta')}, \Lambda', x:\dual{A}
        \tag{IH on 3} \\
        & && 6\quad
        && \nu{x}(P \| Q) \vdcll \dual{(\Gamma)}; \dual{(\Delta)}, \dual{(\Delta')}, \Lambda, \Lambda'
        \tag{\ruleLabel{cut} on 4 and 5}
        \displaybreak[1] \\[5pt]
        & \bullet \ruleLabel{cutLL}
        && 1\quad
        && \Gamma; \Delta, \Delta' \vdash \nu{x}(P \| Q) :: \Lambda, \Lambda'
        \tag{assumption} \\
        & && 2\quad
        && \Gamma; \Delta, x:A \vdash P :: \Lambda \\
        & && 3\quad
        && \Gamma; \Delta', x:\dual{A} \vdash Q :: \Lambda'
        \tag{inversion on 1} \\
        & && 4\quad
        && P \vdcll \dual{(\Gamma)}; \dual{(\Delta)}, \Lambda, x:\dual{A}
        \tag{IH on 2} \\
        & && 5\quad
        && Q \vdcll \dual{(\Gamma)}; \dual{(\Delta')}, \Lambda', x:A
        \tag{IH on 3} \\
        & && 6\quad
        && \nu{x}(P \| Q) \vdcll \dual{(\Gamma)}; \dual{(\Delta)}, \dual{(\Delta')}, \Lambda, \Lambda'
        \tag{\ruleLabel{cut} on 4 and 5}
        \displaybreak[1] \\[5pt]
        & \bullet \ruleLabel{cut$\bang$R}
        && 1\quad
        && \Gamma; \Delta \vdash \nu{u}(P \| \serv{u}{x}.Q) :: \Lambda
        \tag{assumption} \\
        & && 2\quad
        && \Gamma, u:A; \Delta \vdash P :: \Lambda \\
        & && 3\quad
        && \Gamma; \emptyset \vdash Q :: x:A
        \tag{inversion on 1} \\
        & && 4\quad
        && P \vdcll \dual{(\Gamma)}, u:\dual{A}; \dual{(\Delta)}, \Lambda
        \tag{IH on 2} \\
        & && 5\quad
        && Q \vdcll \dual{(\Gamma)}; x:A
        \tag{IH on 3} \\
        & && 6\quad
        && \nu{u} ( P \| \serv{u}{x}.Q ) \vdcll \dual{(\Gamma)}; \dual{(\Delta)}, \Lambda
        \tag{\ruleLabel{cut$\whynot$R} on 4 and 5}
        \displaybreak[1] \\[5pt]
        & \bullet \ruleLabel{cut$\bang$L}
        && 1\quad
        && \Gamma; \Delta \vdash \nu{u}(\serv{u}{x}.P \| Q) :: \Lambda
        \tag{assumption} \\
        & && 2\quad
        && \Gamma; \emptyset \vdash P :: x:A \\
        & && 3\quad
        && \Gamma, u:A; \Delta \vdash Q :: \Lambda
        \tag{inversion on 1} \\
        & && 4\quad
        && P \vdcll \dual{(\Gamma)}; x:A
        \tag{IH on 2} \\
        & && 5\quad
        && Q \vdcll \dual{(\Gamma)}, u:\dual{A}; \dual{(\Delta)}, \Lambda
        \tag{IH on 3} \\
        & && 6\quad
        && \nu{u}(\serv{u}{x}.P \| Q) \vdcll \dual{(\Gamma)}; \dual{(\Delta)}, \Lambda
        \tag{\ruleLabel{cut$\whynot$L} on 4 and 5}
        \displaybreak[1] \\[5pt]
        & \bullet \ruleLabel{cut$\whynot$R}
        && 1\quad
        && \Gamma; \Delta \vdash \nu{u} ( P \| \serv{u}{x}.Q ) :: \Lambda
        \tag{assumption} \\
        & && 2\quad
        && \Gamma, u:A; \Delta \vdash P :: \Lambda \\
        & && 3\quad
        && \Gamma; x:\dual{A} \vdash Q :: \emptyset
        \tag{inversion on 1} \\
        & && 4\quad
        && P \vdcll \dual{(\Gamma)}, u:\dual{A}; \dual{(\Delta)}, \Lambda
        \tag{IH on 2} \\
        & && 5\quad
        && Q \vdcll \dual{(\Gamma)}; x:A
        \tag{IH on 3} \\
        & && 6\quad
        && \nu{u} ( P \| \serv{u}{x}.Q ) \vdcll \dual{(\Gamma)}; \dual{(\Delta)}, \Lambda
        \tag{\ruleLabel{cut$\whynot$R} on 4 and 5}
        \displaybreak[1] \\[5pt]
        & \bullet \ruleLabel{cut$\whynot$L}
        && 1\quad
        && \Gamma; \Delta \vdash \nu{u}(\serv{u}{x}.P \| Q) :: \Lambda
        \tag{assumption} \\
        & && 2\quad
        && \Gamma; x:\dual{A} \vdash P :: \emptyset \\
        & && 3\quad
        && \Gamma, u:A; \Delta \vdash Q :: \Lambda
        \tag{inversion on 1} \\
        & && 4\quad
        && P \vdcll \dual{(\Gamma)}; x:A
        \tag{IH on 2} \\
        & && 5\quad
        && Q \vdcll \dual{(\Gamma)}, u:\dual{A}; \dual{(\Delta)}, \Lambda
        \tag{IH on 3} \\
        & && 6\quad
        && \nu{u}(\serv{u}{x}.P \| Q) \vdcll \dual{(\Gamma)}; \dual{(\Delta)}, \Lambda
        \tag{\ruleLabel{cut$\whynot$L} on 4 and 5}
    \end{align*}

    \emph{($\,\C \subseteq \U$)}
    Take any $P \in \C$.
    Then, there are $\Gamma$, $\Delta$ s.t.\ $P \vdcll \Gamma; \Delta$.
    By showing that this implies $\dual{(\Gamma)}; \emptyset \vdash P :: \Delta$, we have $P \in \U$.
    As per the second item of \Cref{thm:madmissible}, Rules~\ruleLabel{$\mleft$} and~\ruleLabel{$\mright$} from \Cref{def:pullm} are admissible in \pull.
    Therefore, it suffices to show $\dual{(\Gamma)}; \emptyset \vpull P :: \Delta$.
    We do so by induction on the structure of the proof of $P \vdcll \Gamma; \Delta$.

    \begin{align*}
        & \bullet \ruleLabel{id}
        && 1\quad
        && \fwd{x}{y} \vdcll \Gamma; x:A, y:\dual{A}
        \tag{assumption} \\
        & && 2\quad
        && \dual{(\Gamma)}; x:\dual{A} \vpull \fwd{x}{y} :: y:\dual{A}
        \tag{\ruleLabel{idR}} \\
        & && 3\quad
        && \dual{(\Gamma)} ; \emptyset \vpull \fwd{x}{y} :: x:A , y:\dual{A}
        \tag{\ruleLabel{$\mright$}}
        \displaybreak[1] \\[5pt]
        & \bullet \ruleLabel{$\1$}
        && 1\quad
        && \send{x}{}.\0 \vdcll \Gamma; x:\1
        \tag{assumption} \\
        & && 2\quad
        && \dual{(\Gamma)}; \emptyset \vpull \send{x}{}.\0 :: x:\1
        \tag{\ruleLabel{$\1$R}}
        \displaybreak[1] \\[5pt]
        & \bullet \ruleLabel{$\bot$}
        && 1\quad
        && \recv{x}{}.P \vdcll \Gamma; \Delta, x:\bot
        \tag{assumption} \\
        & && 2\quad
        && P \vdcll \Gamma; \Delta
        \tag{inversion 1} \\
        & && 3\quad
        && \dual{(\Gamma)}; \emptyset \vpull P :: \Delta
        \tag{IH on 2} \\
        & && 4\quad
        && \dual{(\Gamma)}; x:\1 \vpull \recv{x}{}.P :: \Delta
        \tag{\ruleLabel{$\1$L} on 3} \\
        & && 5\quad
        && \dual{(\Gamma)}; \emptyset \vpull \recv{x}{}.P :: \Delta, x:\bot
        \tag{\ruleLabel{$\mright$} on 4}
        \displaybreak[1] \\[5pt]
        & \bullet \ruleLabel{$\tensor$}
        && 1\quad
        && \nu{y}\send{x}{y}.(P \| Q) \vdcll \Gamma; \Delta, \Delta', x:A \tensor B
        \tag{assumption} \\
        & && 2\quad
        && P \vdcll \Gamma; \Delta, y:A \\
        & && 3\quad
        && Q \vdcll \Gamma; \Delta', x:B
        \tag{inversion on 1} \\
        & && 4\quad
        && \dual{(\Gamma)}; \emptyset \vpull P :: \Delta, y:A
        \tag{IH on 2} \\
        & && 5\quad
        && \dual{(\Gamma)}; \emptyset \vpull Q :: \Delta', x:B
        \tag{IH on 3} \\
        & && 6\quad
        && \dual{(\Gamma)}; \emptyset \vpull \nu{y}\send{x}{y}(P \| Q) :: \Delta, \Delta', x:A \tensor B
        \tag{\ruleLabel{$\tensor$R} on 4 and 5}
        \displaybreak[1] \\[5pt]
        & \bullet \ruleLabel{$\parr$}
        && 1\quad
        && \recv{x}{y}.P \vdcll \Gamma; \Delta, x:A \parr B
        \tag{assumption} \\
        & && 2\quad
        && P \vdcll \Gamma; \Delta, y:A, x:B
        \tag{inversion on 1} \\
        & && 3\quad
        && \dual{(\Gamma)}; \emptyset \vpull P :: \Delta, y:A, x:B
        \tag{IH on 2} \\
        & && 4\quad
        && \dual{(\Gamma)}; y:\dual{A} \vpull P :: \Delta, x:B
        \tag{\ruleLabel{$\mleft$} on 3} \\
        & && 5\quad
        && \dual{(\Gamma)}; \emptyset \vpull \recv{x}{y}.P :: \Delta, x:\underbrace{\dual{A} \lolli B}_{A \parr B}
        \tag{\ruleLabel{$\lolli$R} on 4}
        \displaybreak[1] \\[5pt]
        & \bullet \ruleLabel{$\oplus$}
        && 1\quad
        && x \triangleleft j.P \vdcll \Gamma; \Delta, x:\oplus\{i:A_i\}_{i \in I}
        \tag{assumption} \\
        & && 2\quad
        && P \vdcll \Gamma; \Delta, x:A_j \\
        & && 3\quad
        && j \in I
        \tag{inversion on 1} \\
        & && 4\quad
        && \dual{(\Gamma)}; \emptyset \vpull P :: \Delta, x:A_j
        \tag{IH on 2} \\
        & && 5\quad
        && \dual{(\Gamma)}; \emptyset \vpull x \triangleleft j.P :: \Delta, x:\oplus\{i:A_i\}_{i \in I}
        \tag{\ruleLabel{$\oplus$R} on 4 and 3}
        \displaybreak[1] \\[5pt]
        & \bullet \ruleLabel{$\&$}
        && 1\quad
        && x \triangleright \{i:P_i\}_{i \in I} \vdcll \Gamma; \Delta, x:\&\{i:A_i\}_{i \in I}
        \tag{assumption} \\
        & && 2\quad
        && \forall i \in I.~ P_i \vdcll \Gamma; \Delta, x:A_i
        \tag{inversion on 1} \\
        & && 3\quad
        && \forall i \in I.~ \dual{(\Gamma)}; \emptyset \vpull P_i :: \Delta, x:A_i
        \tag{IH on 2} \\
        & && 4\quad
        && \dual{(\Gamma)}; \emptyset \vpull x \triangleright \{i:P_i\}_{i \in I} :: \Delta, x:\&\{i:A_i\}_{i \in I}
        \tag{\ruleLabel{$\&$R} on 3}
        \displaybreak[1] \\[5pt]
        & \bullet \ruleLabel{copy}
        && 1\quad
        && \nu{x}\send{u}{x}.P \vdcll \Gamma, u:A; \Delta
        \tag{assumption} \\
        & && 2\quad
        && P \vdcll \Gamma, u:A; \Delta, x:A
        \tag{inversion on 1} \\
        & && 3\quad
        && \dual{(\Gamma)}, u:\dual{A}; \emptyset \vpull P :: \Delta, x:A
        \tag{IH on 2} \\
        & && 4\quad
        && \dual{(\Gamma)}, u:\dual{A}; x:\dual{A} \vpull P :: \Delta
        \tag{\ruleLabel{$\mleft$} on 3} \\
        & && 5\quad
        && \dual{(\Gamma)}, u:\dual{A}; \emptyset \vpull \nu{x}\send{u}{x}.P :: \Delta
        \tag{\ruleLabel{copyL} on 4}
        \displaybreak[1] \\[5pt]
        & \bullet \ruleLabel{$\bang$}
        && 1\quad
        && \serv{x}{y}.P \vdcll \Gamma; x:\bang A
        \tag{assumption} \\
        & && 2\quad
        && P \vdcll \Gamma; y:A
        \tag{inversion on 1} \\
        & && 3\quad
        && \dual{(\Gamma)}; \emptyset \vpull P :: y:A
        \tag{IH on 2} \\
        & && 4\quad
        && \dual{(\Gamma)}; \emptyset \vpull \serv{x}{y}.P :: x:\bang A
        \tag{\ruleLabel{$\bang$R} on 3}
        \displaybreak[1] \\[5pt]
        & \bullet \ruleLabel{$\whynot$}
        && 1\quad
        && P\subst{x/u} \vdcll \Gamma; \Delta, x:\whynot A
        \tag{assumption} \\
        & && 2\quad
        && P \vdcll \Gamma, u:A; \Delta
        \tag{inversion on 1} \\
        & && 3\quad
        && \dual{(\Gamma)}, u:\dual{A}; \emptyset \vpull P :: \Delta
        \tag{IH on 2} \\
        & && 4\quad
        && \dual{(\Gamma)}; x:\bang \dual{A} \vpull P\subst{x/u} :: \Delta
        \tag{\ruleLabel{$\bang$L} on 3} \\
        & && 5\quad
        && \dual{(\Gamma)}; \emptyset \vpull P\subst{x/u} :: \Delta, x:\whynot A
        \tag{\ruleLabel{$\mright$} on 4}
        \displaybreak[1] \\[5pt]
        & \bullet \ruleLabel{cut}
        && 1\quad
        && \nu{x}(P \| Q) \vdcll \Gamma; \Delta, \Delta'
        \tag{assumption} \\
        & && 2\quad
        && P \vdcll \Gamma; \Delta, x:A \\
        & && 3\quad
        && Q \vdcll \Gamma; \Delta', x:\dual{A}
        \tag{inversion on 1} \\
        & && 4\quad
        && \dual{(\Gamma)}; \emptyset \vpull P :: \Delta, x:A
        \tag{IH on 2} \\
        & && 5\quad
        && \dual{(\Gamma)}; \emptyset \vpull Q :: \Delta', x:\dual{A}
        \tag{IH on 3} \\
        & && 6\quad
        && \dual{(\Gamma)}; x:A \vpull Q :: \Delta'
        \tag{\ruleLabel{$\mleft$} on 5} \\
        & && 7\quad
        && \dual{(\Gamma)}; \emptyset \vpull \nu{x}(P \| Q) :: \Delta, \Delta'
        \tag{\ruleLabel{cutRL} on 4 and 6}
        \displaybreak[1] \\[5pt]
        & \bullet \ruleLabel{cut$\whynot$R}
        && 1\quad
        && \nu{u} ( P \| \serv{u}{x}.Q ) \vdcll \Gamma; \Delta
        \tag{assumption} \\
        & && 2\quad
        && P \vdcll \Gamma, u:A; \Delta \\
        & && 3\quad
        && Q \vdcll \Gamma; x:\dual{A}
        \tag{inversion on 1} \\
        & && 4\quad
        && \dual{(\Gamma)}, u:\dual{A}; \emptyset \vpull P :: \Delta
        \tag{IH on 2} \\
        & && 5\quad
        && \dual{(\Gamma)}; \emptyset \vpull Q :: x:\dual{A}
        \tag{IH on 3} \\
        & && 6\quad
        && \dual{(\Gamma)}; \emptyset \vpull \nu{u} ( P \| \serv{u}{x}.Q ) :: \Delta
        \tag{\ruleLabel{cut$\bang$R}}
        \displaybreak[1] \\[5pt]
        & \bullet \ruleLabel{cut$\whynot$L}
        && 1\quad
        && \nu{u}(\serv{u}{x}.P \| Q) \vdcll \Gamma; \Delta
        \tag{assumption} \\
        & && 2\quad
        && P \vdcll \Gamma; x:\dual{A} \\
        & && 3\quad
        && Q \vdcll \Gamma, u:A; \Delta
        \tag{inversion on 1} \\
        & && 4\quad
        && \dual{(\Gamma)}; \emptyset \vpull P :: x:\dual{A}
        \tag{IH on 2} \\
        & && 5\quad
        && \dual{(\Gamma)}, u:\dual{A}; \emptyset \vpull Q :: \Delta
        \tag{IH on 3} \\
        & && 6\quad
        && \dual{(\Gamma)}; \emptyset \vpull \nu{u}(\serv{u}{x}.P \| Q) :: \Delta
        \tag{\ruleLabel{cut$\bang$L}}
    \end{align*}

    \noindent
    This concludes the proof of \Cref{thm:cllull}.
\end{proof}

\newpage
We now turn our attention to \I, the class of processes typable under the intuitionistic interpretation.
The observation by Caires, Pfenning and Toninho in~\cite{journal/mscs/CairesPT16} entails that \I (\pill-typable processes) should be a strict subset of \C (\pcllcp-typable processes).
We formalize this fact by characterizing the intuitionistic fragment of \pull (which coincides with \pill) by limiting its typing rules and by showing that the class of processes typable in this fragment coincides with \I (\Cref{thm:illullsull}).
It then follows that this class of processes is strictly contained in \U (\pull-typable processes; \Cref{thm:illsubsetull}).
Since we have just shown that $\U = \C$, this formalizes the fact that \I is a strict subset of \C.

\Cref{thm:illullsull} below formalizes two equivalent characterizations of the intuitionistic fragment of \pull.
One characterization is based on the limited form of duality of \pill:
the Rules~\ruleLabel{$\mleft$} and~\ruleLabel{$\mright$} in \Cref{def:pullm} may not be used.
The other characterization is based on the restricted two-sided form of \pill sequents:
the rules in \Cref{f:ull-inf} are limited to have exactly one channel/type pair on the right.
This requires the following auxiliary definition:

\begin{definition}[\rdegree]
    The \emph{\rdegree} of a \pull sequent $\Gamma; \Delta \vdash P :: \Lambda$ is the size of $\Lambda$.
    We say this sequent has \rdegree $|\Lambda|$.
\end{definition}

We now have:

\begin{theorem}\label{thm:illullsull}
    Given a process $P$, the following are equivalent:
    \begin{enumerate}
        \item there are $\Gamma$, $\Delta$, $x$ and $A$ such that $\Gamma; \Delta \vdill P :: x:A$;
        \item there are $\Gamma$, $\Delta$ and $\Lambda$ such that $\Gamma; \Delta \vdash P :: \Lambda$ where all sequents in its proof have \rdegree 1;
        \item there are $\Gamma$, $\Delta$ and $\Lambda$ such that $\Gamma; \Delta \vpull P :: \Lambda$ where its proof never uses Rules~\ruleLabel{$\mleft$} and~\ruleLabel{$\mright$}.
    \end{enumerate}
\end{theorem}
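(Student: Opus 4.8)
The plan is to prove the three statements mutually equivalent by treating statement~(3) as a hub, establishing the two biconditionals $(2) \Leftrightarrow (3)$ and $(1) \Leftrightarrow (3)$ separately. Two facts drive everything. First, the $\ast$-marked rules of \Cref{f:ull-inf,f:ull-inf-cut} are precisely the rules that keep the right-hand region a singleton, so a derivation built from them alone stays at \rdegree $1$. Second, every non-$\ast$-marked rule necessarily creates a sequent of \rdegree $0$ or $2$, so it cannot occur in a derivation in which every sequent has \rdegree $1$. Since \ruleLabel{$\mleft$} and \ruleLabel{$\mright$} are the only rules of \pullm outside the $\ast$-marked set, ``no move rules'' and ``\rdegree $1$ everywhere'' will turn out to be two faces of the same restriction.

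For $(2) \Leftrightarrow (3)$ I would first prove the invariant by induction on the derivation: the axioms \ruleLabel{idR} and \ruleLabel{$\1$R} have \rdegree $1$, and each $\ast$-marked rule sends \rdegree-$1$ premises to an \rdegree-$1$ conclusion. The binary and cut rules are where the constraint bites: \rdegree $1$ of the premises forces the unique context split that keeps the conclusion at \rdegree $1$ (e.g.\ in \ruleLabel{$\tensor$R} both right-hand contexts must be empty; in \ruleLabel{cutRL} the left premise carries the cut formula alone on the right while the right premise carries the single surviving output). This yields $(3) \Rightarrow (2)$, since a move-rule-free \pullm derivation uses only $\ast$-marked rules, which are themselves \pull rules. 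For the converse $(2) \Rightarrow (3)$ I argue rule by rule that no non-$\ast$-marked rule survives the \rdegree-$1$ discipline: \ruleLabel{idL}, \ruleLabel{$\bot$L}, \ruleLabel{$\whynot$L} and the cuts \ruleLabel{cutRR}, \ruleLabel{cutLL}, \ruleLabel{cut$\whynot$R}, \ruleLabel{cut$\whynot$L} produce or demand an empty right region, whereas \ruleLabel{$\bot$R}, \ruleLabel{$\parr$R}, \ruleLabel{$\parr$L}, \ruleLabel{copyR} and \ruleLabel{$\whynot$R} force a premise of \rdegree $0$ or $2$. Hence an \rdegree-$1$ \pull derivation is built from $\ast$-marked rules only, i.e.\ it is a move-rule-free \pullm derivation.

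For $(1) \Leftrightarrow (3)$ I rely on a rule-by-rule bijection between \pill (\Cref{f:ill-inf}) and the $\ast$-marked rules of \pull: the single right-hand pair $x{:}A$ of \pill plays the role of the singleton $\Lambda$, its \ruleLabel{copy} matches \ruleLabel{copyL}, its \ruleLabel{id} matches \ruleLabel{idR}, and its \ruleLabel{cutRL}, \ruleLabel{cutLR}, \ruleLabel{cut$\bang$R}, \ruleLabel{cut$\bang$L} match their \pull namesakes. For $(1) \Rightarrow (3)$ I replay a \pill derivation rule by rule as a move-rule-free \pullm derivation. For $(3) \Rightarrow (1)$ I invoke the invariant established above: a move-rule-free derivation has \rdegree $1$ throughout, so its right region is always a singleton, and each $\ast$-marked step then has exactly the shape of the corresponding \pill rule and can be replayed under $\vdill$.

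The main obstacle is the bookkeeping around the binary and cut rules under the \rdegree constraint. For each of them I must check that \rdegree $1$ pins down the context split uniquely, so that the translations to and from \pill are well-defined and total; this is exactly where the correspondence could fail silently, e.g.\ if a cut were permitted to place the single output on the ``wrong'' side. Verifying that the two $\ast$-marked cuts \ruleLabel{cutRL} and \ruleLabel{cutLR} realise precisely the two admissible placements---and that \ruleLabel{cutRR} and \ruleLabel{cutLL} are genuinely excluded by \rdegree $1$---is the crux. Everything else is a routine inductive replay of matching rules.
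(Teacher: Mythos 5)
Your proposal is correct and follows essentially the same route as the paper: both arguments hinge on identifying the $\ast$-marked rules as simultaneously (a) the \pill rules up to renaming the singleton right-hand region, (b) the only \pull rules compatible with keeping every sequent at \rdegree~1, and (c) the \pullm rules minus \ruleLabel{$\mleft$}/\ruleLabel{$\mright$}, with the \rdegree invariant propagated from the axioms \ruleLabel{idR} and \ruleLabel{$\1$R}. The only difference is bookkeeping---you pair the equivalences as $(2)\Leftrightarrow(3)$ and $(1)\Leftrightarrow(3)$ where the paper chains $(1)\Leftrightarrow(2)$ and $(2)\Leftrightarrow(3)$---and your rule-by-rule catalogue of why each non-$\ast$-marked rule breaks the \rdegree-1 discipline is in fact more exhaustive than the paper's two representative cases.
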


\begin{proof}
We first argue that items 1 and 2 are equivalent; then, we argue that items 2 and 3 imply each other:

\smallskip

    \emph{(Equivalence of items 1 and 2)}
    We show that restricting the sequents in \pull to an \rdegree of 1 entails the limitation of its rules to a strict subset: those marked with $\ast$ in \Cref{f:ull-inf}.
    This set of rules coincides with the set of rules for \pill in \Cref{f:ill-inf}.
    Hence, a proof of typability in \pill (item~1) can be replicated in \pull with \rdegree 1 (item~2) and vice versa.
    Because the proof for each rule follows the same pattern, we only detail two cases: one without $\ast$ and one with $\ast$.
    \begin{itemize}
        \item
            Rule~\ruleLabel{$\parr$L} is not $\ast$-marked.
            Suppose we have a proof of typability in \pull with \rdegree 1 and suppose this includes an application of \ruleLabel{$\parr$L}.
            By assumption, this application's antecedents have \rdegree 1, so its consequence must have \rdegree 2.
            This contradicts the assumption that all sequents have \rdegree 1, showing that Rule~\ruleLabel{$\parr$L} is not usable in this fragment of \pull.

        \item
            Rule~\ruleLabel{$\lolli$L} is $\ast$-marked.
            By assumption, in any application of this rule its antecedents would have \rdegree 1.
            Then, its consequence also has \rdegree 1, so this rule can be used without problems.
    \end{itemize}

    \emph{(Item 2 implies item 3)}
    By \Cref{def:pullm}, the set of rules of \pullm consists of only \ruleLabel{$\mleft$}, \ruleLabel{$\mright$}, and the $\ast$-marked rules in \Cref{f:ull-inf}.
    In \pullm, the only rules that alter the \rdegree{s} in judgments are \ruleLabel{$\mleft$} and \ruleLabel{$\mright$}.
    Hence, if these rules may not be used, proofs of typability in \pullm have a constant \rdegree.
    Since all axioms of \pullm (the $\ast$-marked Axioms~\ruleLabel{$\id$R} and~\ruleLabel{$\1$R} in \Cref{f:ull-inf}) have \rdegree 1, this constant \rdegree is 1.
    Therefore, a proof of typability in \pullm without using \ruleLabel{$\mleft$} and \ruleLabel{$\mright$} (item 3) can be replicated exactly in \pull with \rdegree restricted to~1 (item~2).

    \emph{(Item 3 implies item 2)}
    The proof of equivalence of items 1 and 2 above shows that with \rdegree~1 the only usable rules in \pull are those marked with $\ast$ in \Cref{f:ull-inf}.
    By \Cref{def:pullm}, without Rules~\ruleLabel{$\mleft$} and~\ruleLabel{$\mright$}, \pullm consists only of these $\ast$-marked rules.
    Hence, a proof of typability in \pull with \rdegree restricted to 1 (item 2) can be replicated exactly in \pullm without using \ruleLabel{$\mleft$} and \ruleLabel{$\mright$} (item 3).
\end{proof}

We may now state our final result:

\begin{theorem}\label{thm:illsubsetull}
    $\I \subsetneq \U$.
\end{theorem}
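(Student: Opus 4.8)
The plan is to combine the easy inclusion $\I \subseteq \U$ with a single separating process. The inclusion is immediate from \Cref{thm:illullsull}: if $P \in \I$ then, by the equivalence of its items~1 and~2, $P$ is typable in \pull, hence $P \in \U$. It remains to exhibit a process in $\U \setminus \I$. I take
\[ P := \recv{x}{y}.\recv{x}{}.\send{y}{}.\0, \]
which receives a channel $y$ along $x$, waits for $x$ to be closed, and then closes the received channel $y$ with an empty send---an \emph{empty send on a previously received channel}, exactly the phenomenon flagged in \Cref{sec:discussion}. Note that the intermediate $\recv{x}{}$ is unavoidable: since $\send{y}{}.\0$ has trivial continuation, $x$ must be consumed between the binding receive and the terminal send, or the whole term would be untypable.

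First I would show $P \in \U$ by giving an explicit \pull derivation. Typing $\send{y}{}.\0$ with \ruleLabel{$\1$R} yields $\emptyset;\emptyset \vdash \send{y}{}.\0 :: y:\1$; applying \ruleLabel{$\bot$R} on $x$ gives $\emptyset;\emptyset \vdash \recv{x}{}.\send{y}{}.\0 :: y:\1, x:\bot$; and applying \ruleLabel{$\parr$R}, which places both the received name $y$ and the subject $x$ on the right, gives $\emptyset;\emptyset \vdash P :: x:\1 \parr \bot$. Hence $P \in \U$.

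Next I would show $P \notin \I$. By \Cref{thm:illullsull} it suffices to show that $P$ is not typable in \pill (item~1). I argue by following the forced shape of any \pill derivation, using that \pill sequents carry exactly one type on the right and that \pill lacks the connectives $\bot$ and $\parr$. The innermost prefix $\send{y}{}.\0$ can only be typed by \ruleLabel{$\1$R}, so its right-hand side is the single pair $y:\1$. The only \pill rule applicable to the empty receive $\recv{x}{}$ is \ruleLabel{$\1$L}, which is forced to place $x:\1$ on the \emph{left}, yielding $\Gamma; x:\1 \vdill \recv{x}{}.\send{y}{}.\0 :: y:\1$. The outer prefix is a bound receive on $x$, for which \pill offers only \ruleLabel{$\tensor$L} (both $y$ and $x$ on the left) and \ruleLabel{$\lolli$R} ($y$ on the left, $x$ the unique type on the right). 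Neither applies: in the continuation $y$ sits on the right and $x$ on the left, the one configuration that no bound-receive rule of \pill can consume. Thus no \pill derivation of $P$ exists, so $P \notin \I$. Since $P \in \U$, this establishes $\I \subsetneq \U$; combined with $\U = \C$ (\Cref{thm:cllull}) it also re-confirms that $\I$ is a strict subset of $\C$.

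The main obstacle is the \emph{completeness} of the non-typability argument: to conclude $P \notin \I$ I must rule out every candidate derivation, not just the obvious ones. This is precisely where \Cref{thm:illullsull} does the heavy lifting, since passing to \pill (item~1) reduces the search to derivations in which every sequent has a single right-hand type, collapsing the case analysis to the two bound-receive rules above. The only point that demands care is verifying that the typing of $\send{y}{}.\0$, and hence the position of $y$ on the right, is genuinely forced---in particular that padding the contexts (e.g.\ via \ruleLabel{$\bang$L}) cannot relocate $y$---which follows because every \pill rule leaves each right-hand type in place except the one that introduces it.
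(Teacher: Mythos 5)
Your proof is correct, and your separating witness $\recv{x}{y}.\recv{x}{}.\send{y}{}.\0$ is exactly the process used in the paper's \emph{alternative} proof of \Cref{thm:illsubsetull} (given in \Cref{ssec:locality}); the paper's primary proof instead separates $\I$ from $\U$ via the locality example $\recv{x}{y}.\serv{y}{z}.P'\subst{x/u}$, a replicated receive on a received channel. The one methodological difference lies in how non-typability is established. The paper enumerates the three possible \pull derivations of the witness and observes that each contains a sequent of \rdegree different from $1$, then invokes \Cref{thm:illullsull}; you instead work directly in \pill, showing that the forced typing of the continuation (with $y:\1$ on the right and $x:\1$ on the left) matches neither \ruleLabel{$\tensor$L} nor \ruleLabel{$\lolli$R}, the only bound-receive rules available. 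Both routes carry the same kind of exhaustiveness burden, just discharged over different rule sets; yours has the small advantage of not needing to argue that the three listed \pull derivations are the only ones, and your closing observation---that no \pill rule can relocate an assignment from the right-hand side because every rule leaves the right-hand type in place except the one introducing it---is precisely what makes the direct argument complete.
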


\begin{proof}
    \label{proof:illsubsetull}
    \Cref{thm:illullsull} ensures that $\I \subseteq \U$.
    To show that this inclusion is strict, we give a process $P \in \U$ such that $P \notin \I$.
    Assuming a proof for $u:B; \emptyset \vdash P' :: z:A$, let $P = \recv{x}{y}.\serv{y}{z}.P' \subst{x/u}$;
    there are precisely three ways to prove $P \in \U$: they follow from the three ways to infer a receive in \pull (\ruleLabel{$\parr$R}, \ruleLabel{$\lolli$R}, and \ruleLabel{$\tensor$L}).
    The proofs are as follows:
    \begin{mathpar}
        \begin{bussproof}
            \bussAssume{
                u:B; \emptyset \vdash P' :: z:A
            }
            \bussUn[\ruleLabel{$\bang$R}]{
                u:B; \emptyset \vdash \serv{y}{z}.P' :: y:\bang A
            }
            \bussUn[\ruleLabel{$\whynot$R}]{
                \emptyset; \emptyset \vdash \serv{y}{z}.P' \subst{x/u} :: y:\bang A, x:\whynot \dual{B}
            }
            \bussUn[\ruleLabel{$\parr$R}]{
                \emptyset; \emptyset \vdash \recv{x}{y}.\serv{y}{z}.P' \subst{x/u} :: x:(\bang A) \parr (\whynot \dual{B})
            }
        \end{bussproof}
        \and
        \begin{bussproof}
            \bussAssume{
                u:B; \emptyset \vdash P' :: z:A
            }
            \bussUn[\ruleLabel{$\mleft$}]{
                u:B; z:\dual{A} \vdash P' :: \emptyset
            }
            \bussUn[\ruleLabel{$\whynot$L}]{
                u:B; y:\whynot \dual{A} \vdash \serv{y}{z}.P' :: \emptyset
            }
            \bussUn[\ruleLabel{$\whynot$R}]{
                \emptyset; y:\whynot \dual{A} \vdash \serv{y}{z}.P' \subst{x/u} :: x:\whynot \dual{B}
            }
            \bussUn[\ruleLabel{$\lolli$R}]{
                \emptyset; \emptyset \vdash \recv{x}{y}.\serv{y}{z}.P' \subst{x/u} :: x:(\whynot \dual{A}) \lolli (\whynot \dual{B})
            }
        \end{bussproof}
        \and
        \begin{bussproof}
            \bussAssume{
                u:B; \emptyset \vdash P' :: z:A
            }
            \bussUn[\ruleLabel{$\mleft$}]{
                u:B; z:\dual{A} \vdash P' :: \emptyset
            }
            \bussUn[\ruleLabel{$\whynot$L}]{
                u:B; y:\whynot \dual{A} \vdash \serv{y}{z}.P' :: \emptyset
            }
            \bussUn[\ruleLabel{$\bang$L}]{
                \emptyset; y:\whynot \dual{A}, x:\bang B \vdash \serv{y}{z}.P' \subst{x/u} :: \emptyset
            }
            \bussUn[\ruleLabel{$\tensor$L}]{
                \emptyset; x:(\whynot \dual{A}) \tensor (\bang B) \vdash \recv{x}{y}.\serv{y}{z}.P' \subst{x/u} :: \emptyset
            }
        \end{bussproof}
    \end{mathpar}
    Clearly, all these proofs contain judgments of \rdegree different from 1 (first case) or require using \ruleLabel{$\mleft$}/\ruleLabel{$\mright$} (second and third cases).
    Hence, by \Cref{thm:illullsull}, $P \notin \I$.
\end{proof}

\section{Analysis}
\label{sec:discussion}

Now that we have established characterizations for the intuitionistic and classical fragments of \pull, we analyze the meaning of these results and discuss possible extensions.

Our analysis is twofold.
First, we consider the informal observation by Caires, Pfenning, and Toninho~\cite{journal/mscs/CairesPT16} that intuitionistic type systems enforce the locality principle~(\secref{ssec:locality});
unlike the classical formulation, the intuitionistic fragment of \pull (and thus \pill) has a partial form of duality, which ensures that typability guarantees locality.
Next, we discuss how \pull can support alternative, more expressive forms of parallel composition and restriction, and how such extensions transfer to classical or intuitionistic type systems (\secref{ssec:mixcycle});
we will see that the complete duality of classical type systems allows for such extensions, while the rely-guarantee reading of intuitionistic type systems cannot account for them.

\subsection{Locality}
\label{ssec:locality}

Locality is a well-known principle in concurrency research~\cite{conf/fossacs/Merro00}.
The idea is that freshly created channels are \emph{local}.
Local channels are \emph{mutable}, in the sense that they can be used for receives.
Once a channel has been transmitted to another location, it becomes \emph{non-local}, and thus \emph{immutable}:
it can only be used for sends---receives are no longer allowed.
This makes locality particularly relevant for giving formal semantics to distributed programming languages;
a prime example is the \emph{join calculus}~\cite{journal/tcs/FournetL01}, whose theory relies on (and is deeply influenced by) the locality principle~\cite{conf/concur/FournetGLMR96}.

Locality also makes an appearance in other contexts.
Honda and Laurent's~\cite{journal/tcs/HondaL10} correspondence between a typed $\pi$-calculus and polarized proof-nets enforces locality due to receives having negative polarity while received names may only be of positive polarity.
Also, Dal Lago \etal's~\cite{conf/popl/DalLagoVMY19} typed $\pi$-calculus with intersection types goes further:
processes are \emph{hyperlocalized}, meaning that there may be no receives on free channels after a prior receive at all.

Neither the intuitionistic nor the classical interpretation guarantee full-fledged locality through typing:
both systems allow receives on previously received channels.
The exception is \emph{replicated receive}, which is used to define a shared channel that can continuously receive linear channels over which to perform a service.
The intuitionistic interpretation guarantees locality for shared channels;
in other words, \pill cannot type replicated receives on non-local channels.

The following example, taken from the work by Caires, Pfenning and Toninho~\cite{journal/mscs/CairesPT16}, is typable in \pcllcp but not in \pill:
\[
    \nu{x}(\recv{x}{y}.\serv{y}{z}.P \| \nu{q}\send{x}{q}.Q)
\]

\noindent
Consider the left process in the parallel composition, $\recv{x}{y}.\serv{y}{z}.P$.
It first receives a channel $y$ over channel $x$; then, it uses $y$ for replicated receive, thus defining it as a shared channel.
In \pcllcp, channel $x$ has type $\bang A \parr B$.
The intuitionistic variant of this type is $\dual{(\bang A)} \lolli B = \whynot \dual{A} \lolli B$ on the right of the typing judgment, and $\whynot \dual{A} \tensor \dual{B}$ on the left.
It is impossible to type a process with a channel of such a type in \pill, \beginjp because `\whynot' is not a connective in the intuitionistic system. \endjp

The fact that \pill cannot type non-local shared channels---due to the absence of a dual for~`\bang'---suggests that there should be another kind of non-local channels that \pill cannot type:
there is no dual for \1.
Indeed, \pill cannot type empty sends on previously received channels,  a feature that is not observed in~\cite{journal/mscs/CairesPT16}.
The following example is typable in \pcllcp but not in \pill:
\[
    \recv{x}{y}.\recv{x}{}.\send{y}{}.\0
\]
In \pcllcp, the type of $x$ in this process is $\1 \parr \bot$.
The intuitionistic variant of this type is $\dual{\1} \lolli \bot = \bot \lolli \bot$ on the right of the typing judgment, and $\bot \tensor \1$ on the left.
This process is not typable in \pill, because \beginjp $\bot$ is not an intuitionistic proposition and so the type system has no rules to type \bot channels. \endjp
We make this observation more precise by giving an alternative proof to \Cref{thm:illsubsetull} ($\I \subsetneq \U$) based on this observation:

\begin{proof}[Alternative proof of \Cref{thm:illsubsetull}]
    By \Cref{thm:illullsull}, it is sufficient to give $P \in \U$ such that $P \notin \I$.
    Let $P := \recv{x}{y} . \recv{x}{} . \send{y}{} . \0$.
    There are three ways to prove that $P \in \U$:
    \begin{mathpar}
        \begin{bussproof}
            \bussAx[\ruleLabel{$\bot$L}]{
                \Gamma ; y:\bot \vdash \send{y}{} . \0 :: \emptyset
            }
            \bussUn[\ruleLabel{$\1$L}]{
                \Gamma ; y:\bot , x:\1 \vdash \recv{x}{} . \send{y}{} . \0 :: \emptyset
            }
            \bussUn[\ruleLabel{$\tensor$L}]{
                \Gamma ; x:\bot \tensor \1 \vdash \recv{x}{y} . \recv{x}{} . \send{y}{} . \0 :: \emptyset
            }
        \end{bussproof}
        \and
        \begin{bussproof}
            \bussAx[\ruleLabel{$\1$R}]{
                \Gamma ; \emptyset \vdash \send{y}{} . \0 :: y:\1
            }
            \bussUn[\ruleLabel{$\bot$R}]{
                \Gamma ; \emptyset \vdash \recv{x}{} . \send{y}{} . \0 :: y:\1 , x:\bot
            }
            \bussUn[\ruleLabel{$\parr$R}]{
                \Gamma ; \emptyset \vdash \recv{x}{y} . \recv{x}{} . \send{y}{} . \0 :: x:\1 \parr x:\bot
            }
        \end{bussproof}
        \and
        \begin{bussproof}
            \bussAx[\ruleLabel{$\bot$L}]{
                \Gamma ; y:\bot \vdash \send{y}{} . \0 :: \emptyset
            }
            \bussUn[\ruleLabel{$\bot$R}]{
                \Gamma ; y:\bot \vdash \recv{x}{} . \send{y}{} . \0 :: x:\bot
            }
            \bussUn[\ruleLabel{$\lolli$Rl}]{
                \Gamma ; \emptyset \vdash \recv{x}{y} . \recv{x}{} . \send{y}{} . \0 :: x:\bot \lolli \bot
            }
        \end{bussproof}
    \end{mathpar}
    Clearly, all these proofs contain judgments of \rdegree different from 1.
    Hence, by \Cref{thm:illullsull}, $P \notin \I$.
\end{proof}

\plscheck{Based on these two proofs for  \Cref{thm:illsubsetull}, we have corroborated Caires \etal's observation on locality, and extended it to the case of empty sends and receives.}

\paragraph{Judgments, Revisited}
\beginjp As already discussed, `$\bot$' and `$\whynot$' are not propositions in \pill;
the absence of rules for  them in \pill is  an inherent consequence of the form of its judgments. \endjp
As shown by \Cref{thm:illullsull}, rules for `$\whynot$' and `$\bot$' are an impossibility when judgments are required to have exactly one assignment (channel/proposition) on the right. 
\Cref{thm:illullsull} also shows this is closely related to duality, which is not \beginbas total \endbas in the intuitionistic interpretation \beginjp (i.e., the duals for `$\1$' and `$!$' given by \Cref{def:duality} are not propositions in \pill). \endjp
In the classical case, the type system is \emph{symmetrical} (as shown by a support for \ruleLabel{$\mleft$}/\ruleLabel{$\mright$}-rules), while the intuitionistic type system is \emph{asymmetrical}.

\beginjp
Concerning `$\bot$', the judgmental presentation of intuitionistic linear logic in~\cite{report/ChangCP03} defines an alternative approach, which allows deriving `$\bot$' as `multiplicative contradiction' by using a judgmental account of possibility under which resources are used zero times. \endjp

\subsection{Parallel Composition and Restriction}
\label{ssec:mixcycle}

The type systems we have discussed so far are rather restrictive in terms of how processes can be composed and connected:
there is only the cut-rule which composes and connects two processes that have exactly one channel of dual type in common.
Indeed, the cut-rule jointly handles constructs for parallel composition and restriction, which contrasts with many non-logical type systems for the $\pi$-calculus (e.g., \cite{journal/toplas/KobayashiPT99,conf/unu/Kobayashi03,journal/ic/Vasconcelos12}) where parallel composition and restriction typically have a dedicated rule each.

In this section we discuss extensions to the type systems that decompose cut into two separate rules:
mix for parallel composition, and cycle for channel connection (restriction).
As we will see, these notions form another clear distinction point between Curry-Howard interpretations of classical and intuitionistic linear logic.

\subsubsection{Independent Parallel Composition}
\label{ss:silent}
Caires \etal~\cite{journal/mscs/CairesPT16} discuss an alternative form of parallel composition. In the intuitionistic interpretation, the so-called \emph{independent} parallel composition
connects (i)~an arbitrary process $Q$ with (ii)~a process $P$ with a channel of type \1 on the right; it is derivable in the \emph{silent} interpretation of \1, in which
 Axioms~\ruleLabel{$\1$R} and~\ruleLabel{$\bot$L}   type the inactive process $\0$ and Rules~\ruleLabel{$\bot$R} and~\ruleLabel{$\1$L} leave processes untouched.
This way, the rules for $\1$ would be the following (the rules for $\bot$ are similar):
\begin{mathpar}
     \begin{bussproof}[$\1$R]
        \bussAx{
            \Gamma; \emptyset \vdash \0 :: x:\1
        }
    \end{bussproof}
    \and
    \begin{bussproof}[$\1$L]
        \bussAssume{
            \Gamma; \Delta \vdash P :: \Lambda
        }
        \bussUn{
            \Gamma; \Delta, x:\1 \vdash P :: \Lambda
        }
    \end{bussproof}
\end{mathpar}

In this case, the correspondence relies on structural congruences in processes and proofs (suitably extended), 
rather than on reduction.
To obtain independent parallel composition, one uses Rule~\ruleLabel{cut} to connect the right channel of $P$ with a corresponding (dual) channel in $Q$, which is exposed on the left using Rule~\ruleLabel{$\1$L}:
\[
    \begin{bussproof}
        \bussAssume{
            \Gamma; \Delta \vdill P :: z:\1
        }
        \bussAssume{
            \Gamma; \Delta' \vdill Q :: x:C
        }
        \bussUn[\ruleLabel{$\1$L}]{
            \Gamma; \Delta', z:\1 \vdill Q :: x:C
        }
        \bussBin[\ruleLabel{cutRL}]{
            \Gamma; \Delta, \Delta' \vdill \nu{z}(P \| Q) :: x:C
        }
    \end{bussproof}
\]

The requirement that process $P$ above has a channel of type \1 on the right is rather restrictive:
only left-rules can be used for typing, so $P$ must be a process that relies on multiple behaviors but can offer a behavior of type~$\1$.
Related to this constraint,
Caires \etal show that in the silent interpretation
$\Gamma; \Delta \vdill P :: x:A$ implies
$\Gamma; \Delta,x:\dual{A} \vdill P :: z:\1$, for some fresh $z$, provided that $A$ is exponential-free \beginbas and $\dual{\1} = \1$ (cf.\ also below)\endbas~\cite[Prop.~5.1]{journal/mscs/CairesPT16}.
However, this ``movement'' from the right to the left is only possible  under the absence of the identity axiom, which is in turn necessary when considering, e.g., behavioral polymorphism~\cite{conf/esop/CairesPPT13}.
Indeed, given any type $A$ we can prove $\emptyset; x:A \vdill \fwd{x}{y} :: y:A$ using the identity axiom, but there is no way to prove $\emptyset; x:A, y:\dual{A} \vdill \fwd{x}{y} :: z:\1$.

\subsubsection{Parallel Composition: Mix}

Girard~\cite{journal/tcs/Girard87} discusses an extension of linear logic that allows the combination of two independent proofs.
Wadler gives a Curry-Howard interpretation of this rule as the parallel composition of two processes that have no channels in common~\cite{conf/icfp/Wadler12}.
In~\cite{report/Caires14}, Caires complements the extension with an axiom that types the inactive process \0 with an empty context.

We now define \pcllcpp, which is \pcllcp extended with this form of parallel composition.
The only change is the addition of the following rules:
\begin{mathpar}
    \begin{bussproof}[mix]
        \bussAssume{
            P \vdcll \Gamma; \Delta
        }
        \bussAssume{
            Q \vdcll \Gamma; \Delta'
        }
        \bussBin{
            P \| Q \vdcll \Gamma; \Delta, \Delta'
        }
    \end{bussproof}
    \and
    \begin{bussproof}[empty]
        \bussAx{
            \0 \vdcll \Gamma; \emptyset
        }
    \end{bussproof}
\end{mathpar}
We straightforwardly define \pullp as a similar extension of \pull by adding the following rules:
\begin{mathpar}
    \begin{bussproof}[mix]
        \bussAssume{
            \Gamma; \Delta \vdash P :: \Lambda
        }
        \bussAssume{
            \Gamma; \Delta' \vdash Q :: \Lambda'
        }
        \bussBin{
            \Gamma; \Delta, \Delta' \vdash P \| Q :: \Lambda, \Lambda'
        }
    \end{bussproof}
    \and
    \begin{bussproof}[empty]
        \bussAx{
            \Gamma; \emptyset \vdash \0 :: \emptyset
        }
    \end{bussproof}
\end{mathpar}
It is easy to see that \Cref{thm:cllull}---the equivalence of \pcllcp and \pull---still holds for \pcllcpp and \pullp.
Following the reasoning of \Cref{thm:illullsull}---the characterization of \pill in terms of \pull---, Rules~\ruleLabel{mix} and~\ruleLabel{empty} do not belong to the intuitionistic fragment of \pullp:
Rule~\ruleLabel{empty} has \rdegree 0, and if the antecedents of Rule~\ruleLabel{mix} both have \rdegree 1 then its consequence would have \rdegree 2.
This leads us to conclude that the more flexible ways of composition induced by~\ruleLabel{mix} cannot be supported by intuitionistic systems.

\paragraph{The Equivalence of \1 and \bot}

Caires~\cite{report/Caires14} notes that, in the classical setting, Rules~\ruleLabel{mix} and \ruleLabel{empty} make it possible to prove $\bot \lolli \1$ and $\1 \lolli \bot$ (where \lolli denotes linear implication).
Hence, it is possible to consider \1 and \bot equivalent, writing a single symbol (say, `$\bullet$') for either---very similar to the singular, self-dual type \tend in standard session types.
The work of Atkey \etal~\cite{chapter/AtkeyLM16} is also relevant, as it develops a detailed treatment of the conflation of $\bot$ and $\1$.
\beginjp
Conflation is not always desirable: the work of Qian \etal~\cite{conf/icfp/QianKB21} addresses the shortcomings of considering \ruleLabel{mix} for faithfully representing stateful client-server behaviors.
\endjp

\subsubsection{Channel Connection: Cycle}

The cut-rule only allows us to connect two processes on a single channel.
Although this neatly guarantees deadlock freedom, it prevents many deadlock-free processes from being typable (see~\cite{journal/jlamp/DardhaP22} for comparisons between these classes of processes).
For example, we can construct a process $\nu{x}\nu{y}(P \| Q)$ where
\begin{align*}
    P
    &:= \nu{u}\send{x}{u}.(\recv{y}{v}.(\recv{u}{}.\recv{v}{}.\0 \| \send{y}{}.\0) \| \send{x}{}.\0) ~\text{and} \\
    Q
    &:= \recv{x}{w}.\nu{z}\send{y}{z}.(\recv{x}{}.\send{z}{}.\0 \| \recv{y}{}.\send{w}{}.\0)
\end{align*}
This process is deadlock free, but not typable using cut.

Connecting on multiple channels at once does have the danger of introducing the \emph{circular dependencies} between sessions that are at the heart of deadlocked processes.
For example, suppose that we replace $P$ with $P'$, where we swap the send on $x$ and the receive on $y$:
\[
    P' := \recv{y}{v}.\nu{u}\send{x}{u}.(\recv{u}{}.\recv{v}{}.\0 \| \send{y}{}.\0 \| \send{x}{}.\0).
\]
Now, the composed and connected process would be stuck with $P'$ waiting for a receive on $y$ and $Q$ for a receive on $x$.

Dardha and Gay~\cite{conf/fossacs/DardhaG18} present a session type system based on classical linear logic in which they replace the cut-rule with a rule called \ruleLabel{cycle}, which connects two channels of dual types in the same judgment.
This new rule has a side-condition that allows their type system to maintain deadlock freedom.
We can define \pcllcppp as an extension of \pcllcpp with the cycle-rule.
Since we are specifically interested in this form of channel restriction, below we abstract away from Dardha and Gay's side-condition by simply writing $\phi$.

Note that we first require some modifications to restriction and reduction, based on~\cite{journal/ic/Vasconcelos12}.
Restriction in \pcllcppp should involve two channel names, i.e. writing $\nu{x y}P$ instead of $\nu{x}P$, because the involved channels appear in the same judgment and thus need to be uniquely named.
Indeed, $\nu{x y}P$ says that $x$ and $y$ are the two endpoints of the same channel in $P$.
\[
    \begin{bussproof}[cycle]
        \bussAssume{
            P \vdcll \Gamma; \Delta, x:A, y:\dual{A}
        }
        \bussAssume{
            \phi
        }
        \bussBin{
            \nu{x y}P \vdcll \Gamma; \Delta
        }
    \end{bussproof}
\]
Defining \pullpp as a similar extension of \pullp is again straightforward.
Similarly to the cut-rules in \Cref{f:ull-inf-cut}, we add four rules that operate on different sides of the typing judgment:
\begin{mathpar}
    \begin{bussproof}[cycleRL]
        \bussAssume{
            \Gamma; \Delta, x:A \vdash P :: \Lambda, y:A
        }
        \bussAssume{
            \phi
        }
        \bussBin{
            \Gamma; \Delta \vdash \nu{y x}P :: \Lambda
        }
    \end{bussproof}
    \and
    \begin{bussproof}[cycleLR]
        \bussAssume{
            \Gamma; \Delta, x:A \vdash P :: \Lambda, y:A
        }
        \bussAssume{
            \phi
        }
        \bussBin{
            \Gamma; \Delta \vdash \nu{x y}P :: \Lambda
        }
    \end{bussproof}
    \and
    \begin{bussproof}[cycleRR]
        \bussAssume{
            \Gamma; \Delta \vdash P :: \Lambda, x:A, y:\dual{A}
        }
        \bussAssume{
            \phi
        }
        \bussBin{
            \Gamma; \Delta \vdash \nu{x y}P :: \Lambda
        }
    \end{bussproof}
    \and
    \begin{bussproof}[cycleLL]
        \bussAssume{
            \Gamma; \Delta, x:A, y:\dual{A} \vdash P :: \Lambda
        }
        \bussAssume{
            \phi
        }
        \bussBin{
            \Gamma; \Delta \vdash \nu{x y}P :: \Lambda
        }
    \end{bussproof}
\end{mathpar}

Again, it is easy to see that \Cref{thm:cllull} still holds for \pcllcppp and \pullpp.
Interestingly, by virtue of \Cref{thm:illullsull}, Rule~\ruleLabel{cycleLL} is \beginjp the only rule allowed \endjp in the  intuitionistic fragment of \pullpp, \beginjp because it does not affect the right-hand side of the judgment (as $\Lambda$ can have \rdegree 1). \endjp
Unfortunately, this rule is useless for \pill{}\beginbas: it requires two channels of dual types, but duality is not total in \pill. \endbas

{
    Adding mix- and cycle-rules enables a more flexible formulation of structural congruence, as well as more flexible typing of sends.
    For one, e.g., Rule~\ruleLabel{$\tensor$R} would only require a single continuation that provides both the payload and the continuation of the send.
    We can then add structural congruence rules that are found in traditional, untyped settings, such as $P \| Q \equiv Q \| P$ and $P \| \0 \equiv P$.
}

\section{Conclusion}
\label{sec:conclusion}

Curry-Howard correspondences between linear logic and session types explain how
linear logic can provide an ample, principled framework to conceive different typing disciplines for message-passing processes specified in the $\pi$-calculus.
There is no single canonical interpretation, as there are multiple interpretations, depending on design choices involving, e.g., the logical connectives considered and their respective process operators.

In this context, this paper has pursued a very concrete goal: to formally compare the interpretations of classical and intuitionistic linear logic as session types.  The comparison results reported in \Cref{sec:comparison} are an indispensable step to consolidating the logical foundations of message-passing concurrency; they also have a number of relevant ramifications, as reported in \Cref{sec:discussion}.

Our technical approach to this goal relies on a fragment of Girard's Logic of Unity (LU)~\cite{journal/apal/Girard93} as a basis to develop  \pull: a new session type system that can type all processes typable in both \pcllcp and \pill, which correspond to classical and intuitionistic interpretations, respectively.
{The linear logic \ull, on which the type system \pull stands, is an admittedly modest fragment of Girard's LU. Still, we emphasize that \pull is sufficient for our purposes, as it gives us a fair and rigorous basis for addressing our declared goal of comparing type systems based on different linear logics.
The development of computational interpretations for LU, including but going beyond \ull and \pull, is surely an interesting  direction but one that lies outside the scope of our work.}


In \pill, judgments have a particular reading in which a process \emph{relies} on several channels (on the left of judgments) and \emph{guarantees} a behavior along a single designated channel (on the right).
\pull uses two-sided judgments as well, similar to LU.
However, \pull does not retain \pill's rely-guarantee reading, because it does not distinguish between the sides of its sequents.
The consequence is that \pull supports a full duality relation, as opposed to \pill.
This allows \pull to mimic the explicit duality in the single-sided \pcllcp.
On the other hand, restricting the right side of \pull's judgments to exactly one channel---thus limiting support for duality---characterizes a fragment of \pull's typing rules that precisely coincides with \pill.

Our results confirm the informal observation by Caires \etal~\cite{journal/mscs/CairesPT16} that the difference between session type systems based on classical and intuitionistic linear logic is in the enforcement of locality of shared names.
We have not only confirmed that \pill cannot type processes that do not respect locality for shared channels:
a new insight obtained in this paper is that \pill also forbids empty sends on received channels.
Our results  show that these constraints are a consequence of the strict requirement that \pill's typing judgments must have exactly one channel/type on the right.
\pull and \pcllcp do not have such constraints and thus fully support duality.

\plscheck{Our work can be seen as providing a  technical answer to long-suspected but fuzzily understood differences between intuitionistic and classical interpretations of linear logic (here represented by \pill and \pcllcp, respectively) concerning the role of duality (implicit in \pill, explicit in \pcllcp) and locality of shared channels (enforced by \pill but not by \pcllcp). We believe it is important to give a formal footing to this kind of informal observations. Formally stating the required comparisons is not obvious, and insightful in itself, as implicit assumptions may emerge in the process. In this respect, we adopted \pull as it provides an effective yardstick for comparisons, independent from both \pill and \pcllcp, and based on an already existing framework (Girard's LU). In future work, it would be interesting to explore other approaches towards our technical results (\Cref{thm:cllull,thm:illsubsetull}).}

If one adopts the stance that a \emph{permissive} type system is also an \emph{expressive} one, then the results from our comparisons (notably, the strict inclusion of \pill in \pull) indicate that classical linear logic induces a more expressive class of typed processes than intuitionistic linear logic.
There is an alternative stance, which considers \emph{\beginbas less \endbas permissive} type systems as being \emph{more precise} than other, more permissive type system.
The locality property for shared channels, as enforced by intuitionistic interpretations, is a case in point here.
In our view, the connection between permissiveness, expressiveness, and precision is an interesting question, which largely depends on the value of the intended properties.
Indeed, the precision of type systems derived from intuitionistic interpretations may not be meaningful in settings/applications where locality is simply not a relevant property.
On the other hand, the conditions that make intuitionistic interpretations less permissive can always be imposed on more permissive interpretations as side-conditions, making them more precise.
Finally, as observed by Caires \etal~\cite{journal/mscs/CairesPT16}, it is surely remarkable that intuitionistic interpretations of session types precisely capture a principle such as locality of shared names, which was known and exploited in different contexts~\cite{conf/fossacs/Merro00} long before these Curry-Howard interpretations were first spelled out~\cite{conf/concur/CairesP10}.

To conclude, we believe that LU and \pull have other useful applications besides the comparison of type systems based on vanilla linear logic;
our discussion of more expressive parallel composition and channel connection in \Cref{sec:discussion} corroborates this.
We have been able to rely on \pull to study the effects of mix- and cycle-rules in the intuitionistic interpretation by transferring extensions of the classical interpretation to \pull and following the methodology of our comparison results.
In this case, extensions of intuitionistic interpretations with such rules do not make sense due to the constraints of typing judgments.
This makes sense from a logical perspective:
intuitionistic argumentation is based on a constructivist philosophy in which putting unrelated things together (\ruleLabel{mix}) and then relating them later (\ruleLabel{cycle}) may not be acceptable.


\paragraph*{Acknowledgements}

We are grateful to
Juan C. Jaramillo,
Joseph Paulus,
and
Revantha Ramanayake
for helpful discussions.
We would also like to thank the anonymous reviewers of PLACES\textquotesingle20 and JLAMP for their detailed suggestions, which were helpful to improve the presentation.

\DeclareRobustCommand{\VAN}[3]{#3} 

\label{sec:bib}
\bibliographystyle{alpha}
\bibliography{refs}

\end{document}